\documentclass[journal]{IEEEtran}

\usepackage{xcolor}

\usepackage{graphicx}

\usepackage{tikz,circuitikz}
\usepackage{subcaption}
\usetikzlibrary{shapes.geometric, arrows.meta, positioning}

\usepackage{pgfplots}
\pgfplotsset{compat=1.18}

\usepackage{cite}

\ifCLASSINFOpdf
\else
\fi

\usepackage{amsmath,amsthm,amssymb}

\usepackage{algorithm}
\usepackage{algpseudocode}

\usepackage{booktabs}

\newtheorem{theorem}{Theorem}
\newtheorem{lemma}[theorem]{Lemma}
\newtheorem{remark}{Remark}

\usepackage{url}%
\usepgfplotslibrary{groupplots}
\newtheorem{proposition}[theorem]{Proposition}%
\newcommand{\ccomp}{c_\mathrm{comp}}%
\newcommand{\csys}{c_\mathrm{sys}}%
\newcommand{\cset}{c_\mathrm{set}}%
\newcommand{\actRep}{\mathrm{Rep}}%
\newcommand{\actCont}{\mathrm{Cont}}%
\newcommand{\repcost}{C}%
\newcommand{\batt}[1]{%
  \begin{tikzpicture}[baseline, every node/.style={}]
    \pgfmathsetmacro{\battfrac}{#1}
    \pgfmathtruncatemacro{\battpct}{round(100*#1)}
    \ifdim\battfrac pt<0.33pt
      \definecolor{battcol}{rgb}{0.85,0.2,0.2}%
    \else\ifdim\battfrac pt<0.66pt
      \definecolor{battcol}{rgb}{0.95,0.8,0.1}%
    \else
      \definecolor{battcol}{rgb}{0.3,0.7,0.2}%
    \fi\fi
    \draw[line width=0.6pt] (0,0) rectangle (1.4,0.5);
    \fill[black] (1.4,0.15) rectangle (1.48,0.35); %
    \fill[battcol, fill opacity=0.5] (0.04,0.04) rectangle (0.04+1.32*\battfrac,0.46);
    \node[black, font=\small, anchor=center, inner sep=0pt] at (0.7,0.25) {\battpct\%};
  \end{tikzpicture}%
}

\begin{document}
\title{Per-State Signatures and Mean Time-to-Failure: An Algorithm for Systems With Simultaneous Failures}

\author{Guido~Lagos
        and
        Sim\'on \'Avila
\thanks{This work was supported by the grants ANID/\allowbreak FONDECYT/\allowbreak Iniciaci\'on~11230256 and ANID/\allowbreak FONDECYT/\allowbreak Regular~1231207. (Corresponding author: Guido Lagos.)}%

\thanks{
    G.~Lagos and S.~\'Avila are with Universidad Adolfo Ib\'a\~nez, Vi\~na del Mar, Chile (e-mail: guido.lagos@uai.cl; simoavila@alumnos.uai.cl).
    }%
}

\maketitle

\begin{abstract}
We present an algorithm that computes the structural signature and the mean time-to-failure (MTTF) of a monotone binary system at every one of its states, rather than only at the state in which all components work.
It makes no assumption on the structure function and evaluates it exactly once per state, as many times as needed for the signature of the complete system alone, so computing the MTTF everywhere costs only a factor polynomial in the number of components over computing it once.
As a by-product, it identifies all minimal path sets.
We assume exchangeable component failure times whose survivor count is a continuous-time Markov chain.
This setting admits simultaneous failures, as in the L\'evy-frailty Marshall-Olkin, binomial failure rate and lethal-shock models, where the simplifications available for single failures do not apply.
We use the per-state MTTF to define threshold repair policies.
We show that the per-state signature is a strictly finer decision criterion than the MTTF: signature dominance at a common failure count implies MTTF dominance, but not conversely.
Under the L\'evy-frailty Marshall-Olkin model the MTTF is monotone in the state, which yields the long-run cost rate of every threshold policy in a single pass.
Experiments on 22 benchmark networks with up to 30 components and 31 failure models show that the best MTTF-threshold policy has a median excess cost of 0.36\% over the optimal state-based policy, and that it is never worse than the best policy that repairs after a fixed number of failures.
\end{abstract}

\begin{IEEEkeywords}
System signature, mean time-to-failure, coherent system, exchangeable failure times, simultaneous failures, common-cause failure, Marshall-Olkin distribution, condition-based maintenance, repair policy.
\end{IEEEkeywords}

\IEEEpeerreviewmaketitle

\section{Introduction}\label{sec:intro}

\IEEEPARstart{R}{eliability} indices of a multi-component system are almost always reported for the system with every component working.
Maintenance decisions, however, are taken at partially failed states: an operator observes which components have failed, and must decide whether to keep operating or to intervene.
Two states with the same number of failed components can differ sharply in how much useful life remains, depending on \emph{which} components failed, and a decision rule that ignores this distinction discards information that is available for free.
Acting on it requires the reliability index not at one state, but at all of them.

In this work we provide an algorithm that computes the structural signature and the mean time-to-failure (MTTF) at each of the $2^n$ states of a monotone binary system with $n$ components, at essentially the cost of computing them once.

\subsection{Setting and notation}
We consider a system with $n>1$ components, each of which is either \emph{working} or \emph{failed}.
We represent a configuration of the system by a state $\mathbf x$ in $\{0,1\}^n$, where $x_i=1$ if component $i$ works and $x_i=0$ if it has failed, and we write $|\mathbf x| := \sum_{i=1}^n x_i$ for the number of working components. A deterministic and known \emph{structure function} $\Phi : \{0,1\}^n \to \{0,1\}$ gives the status of the system, with $\Phi(\mathbf x)=1$ when the system works and $0$ otherwise.
The system is \emph{monotone}, meaning that $\Phi(\mathbf x^1) \leq \Phi(\mathbf x^2)$ whenever $\mathbf x^1 \leq \mathbf x^2$ componentwise: a failed system cannot be made to work by breaking down working components, and a working system cannot be made to fail by repairing failed ones.
We also assume the system is non-trivial, i.e., that $\Phi$ is neither identically $0$ nor identically $1$; by monotonicity this is equivalent to $\Phi(\mathbf 0)=0$ and $\Phi(\mathbf 1)=1$, where $\mathbf 0 = (0,\ldots,0)^\mathsf T$ and $\mathbf 1 = (1,\ldots,1)^\mathsf T$.
We do not require every component to be relevant, so semi-coherent systems are covered.

The \emph{structural signature} of the system~\cite{samaniego1985closure,%
samaniego2007system,navarro2021introduction} is the vector $\mathbf s$ in $\{0,1\}^n$ given by
\begin{align}\label{def:s}
    s_k := \bar S_{k-1} - \bar S_k \qquad \text{for } k=1,\ldots, n
\end{align}
with $\bar{\mathbf S}$ in $\{0,1\}^{n+1}$ the \emph{tail signature} vector~\cite{gertsbakh2011signatures}\footnote{A related concept is the sequence $\bar S_{n-l}$ for $l=1,\ldots,n$, where $\bar S_k$ is as defined in~\eqref{def:bar S}, first defined in~\cite{boland2001signatures} and then called \emph{survival signature} in~\cite{coolen2012generalizing}.}
given by
\begin{align}\label{def:bar S}
\bar S_k := \frac {W_k} {W_k+F_k} = \frac{W_k}{\binom n k} \qquad \text{for } k=0,\ldots, n,
\end{align}
where $W_k$ and $F_k$ count the states with $k$ failed components in which the system works, respectively fails.
That is,
\begin{align}
    W_k := & \sum_{\mathbf x \, : \, |\mathbf x|=n-k} \Phi(\mathbf x) \label{def:W}
\end{align}
and $F_k := \sum_{\mathbf x \, : \, |\mathbf x|=n-k} (1-\Phi(\mathbf x))$, with $W_k+F_k = \binom n k$.
Thus, $\bar S_k$ is the proportion, among the $\binom n k$ states $\mathbf x$ with $k$ failed components, in which the system continues to work; this is the concise form of the signature given in~\cite{marichal2011signature}.
By monotonicity and $\Phi(\mathbf 1)=1$ we have $1 = \bar S_0 \geq \bar S_1 \geq \ldots \geq \bar S_n = 0$, so $s_k \geq 0$ for all $k$ and $\sum_{k=1}^n s_k = 1$.

The signature is central to reliability analysis because it separates the combinatorial structure of the system from the probabilistic behavior of its components.
Writing $T_1, \ldots, T_n$ for the components' failure times and $T_{1:n} \leq \ldots \leq T_{n:n}$ for their order statistics, the time $T_{fail}$ at which the system fails satisfies
\begin{align}\label{eq:Samaniego}
    \Pr(T_{fail} > t) = \sum_{k=1}^n s_k \, \Pr(T_{k:n} > t) .
\end{align}
Samaniego~\cite{samaniego1985closure} established~\eqref{eq:Samaniego} for components with independent and identically distributed failure times having an absolutely continuous distribution, and~\cite{navarro2008application} extended it to the case where the joint distribution of $(T_1, \ldots, T_n)$ is \emph{exchangeable}; see Assumption~(A) below.
In particular, \eqref{eq:Samaniego} yields the mean time-to-failure $\mathbb{E} T_{fail} = \sum_{k=1}^n s_k \, \mathbb{E} T_{k:n}$, or equivalently, using~\eqref{def:s} and~\eqref{def:bar S},
\begin{align}\label{eq:E Samaniego}
\mathbb{E} T_{fail} = \sum_{k=0}^{n-1} \frac{W_k}{\binom n k}
\left(\mathbb{E} T_{k+1:n} - \mathbb{E} T_{k:n} \right) ,
\end{align}
with the convention $T_{0:n}:=0$.

\subsection{Computational challenge due to the exponential number of states}

Although $\mathbf s$ and $\bar{\mathbf S}$ have only $n$ entries, obtaining them demands, in principle, the status $\Phi(\mathbf x)$ at all $2^n$ states.
For a system with $n=30$ components this means $2^{30}$ bits, or $128$ megabytes, merely to store the system status; for $n=40$ it means $128$ gigabytes.
Moreover, a single entry can already dominate the count: for $n=40$, computing $\bar S_{\lfloor n/2 \rfloor}$ requires determining how many of the $\binom{40}{20} = 137{,}846{,}528{,}820$ states with $20$ failed components still work.
The difficulty is intrinsic rather than an artifact of this formulation, since exact reliability evaluation is $\#\mathrm{P}$-complete already for networks~\cite{provan1983complexity}.

Computing the signature at \emph{every} state is, at first sight, considerably worse. The subsystem induced by a state $\mathbf x$ ---the system restricted to the components working in $\mathbf x$--- has its own signature, so one would need $2^n$ signature vectors: $\binom n 0$ of dimension $n$, $\binom n 1$ of dimension $n-1$, and so on.
Naively, each of them costs a full enumeration of that subsystem's own states.

\subsection{Contributions}
We show that this pessimism is unwarranted, and that the per-state signatures and MTTFs come at a modest premium over the single-state computation.
Specifically:

\begin{enumerate}
\item We give a recursion (Lemma~\ref{lemma:Wk}) that obtains the counting vectors of a state from those of its immediate sub-states, and an algorithm (Algorithm~\ref{alg}) that sweeps the state space by levels of failure count. Theorem~\ref{theo} shows that the algorithm evaluates the structure function exactly $2^n$ times ---the same number required to compute the signature of the complete system alone--- and that computing the MTTF at all states, instead of at one, costs only a polynomial factor in $n$ in memory and in basic arithmetic operations.
No assumption whatsoever is made on $\Phi$.

\item We work under exchangeable failure times whose survivor-count process is a continuous-time Markov chain.
This admits simultaneous failures of several components, as caused by common-cause shocks, and it is exactly the regime in which the simplifications available for single failures break down.
We delimit the resulting model class precisely, and illustrate it with several families of failure models, including the L\'evy-frailty Marshall-Olkin family and the binomial failure rate and lethal-shock models of probabilistic safety assessment.

\item As a corollary, the same sweep identifies all minimal path sets of the system, at no additional cost in evaluations of $\Phi$.

\item We use the per-state MTTF to define threshold repair policies, and compare them through their long-run expected cost rate.
We further show (Theorem~\ref{theo:dominance}) that signature dominance at a common failure count implies MTTF dominance, and that the MTTF is a single linear projection of the per-state signature; consequently, signature-threshold policies strictly refine MTTF-threshold policies.
Under the L\'evy-frailty Marshall-Olkin distribution the MTTF is monotone in the state (Lemma~\ref{lemma:monotone}), and we show that this yields the long-run cost of \emph{all} MTTF-threshold policies in a single pass over the states, grouped into classes with the same counting vector (Proposition~\ref{prop:all thresholds}). All costs include a setup cost per repair intervention. We also compute the optimal state-based policy, which bounds what any policy based on the per-state signature can achieve.

\item We report computational experiments on 22 benchmark networks with up to 30 components, 31 failure models calibrated to a common pairwise beta factor, and 45 cost settings (Section~\ref{sec:experiments}). The best MTTF-threshold policy has a median excess cost of $0.36\%$ over the optimal state-based policy, and it is never worse than the best $r$-out-of-$n$ policy, over which it saves a median of $4.2\%$; its cost curve, however, is neither convex nor unimodal in the threshold, i.e., it can increase and then decrease again, so the threshold must be tuned over all breakpoints.
\end{enumerate}

\subsection{Organization}
Section~\ref{sec:related} reviews related work. Section~\ref{sec:subsystems} states our
assumptions on the failure-time distribution and formulates the per-state MTTF.
Section~\ref{sec:procedure} presents the recursion, the algorithm and its complexity
analysis. Section~\ref{sec:policies} develops the MTTF-threshold repair policies and their long-run
cost, and the optimal state-based policy. Section~\ref{sec:dominance} establishes the dominance and expressiveness results.
Section~\ref{sec:experiments} reports the computational experiments, and Section~\ref{sec:conclusions} concludes. The proofs of all results are in Appendix~\ref{app:proofs}.

\section{Related work}\label{sec:related}

\paragraph{Signatures and their computation}
Since its introduction by Samaniego~\cite{samaniego1985closure}, the signature has become a standard tool for comparing and evaluating coherent systems~\cite{samaniego2007system,navarro2021introduction}: stochastic orders between systems follow from orders between their signatures~\cite{kochar1999signature}, and the decomposition~\eqref{eq:Samaniego} extends from iid to exchangeable lifetimes~\cite{navarro2008application}, the setting we adopt. Marichal~\emph{et~al.}~\cite{marichal2011signature} gave the counting form~\eqref{def:bar S}, and Coolen and Coolen-Maturi~\cite{coolen2012generalizing} the survival signature for several component types.
Computing the signature exactly is the practical bottleneck. Existing methods enumerate minimal path or cut sets~\cite{kvassay2017minimal,rebaiaia2012algorithms,yeh2024new}, or use symbolic representations of the structure function such as binary decision diagrams~\cite{kuo2007efficient}, and they target the signature of the \emph{complete} system. We need one signature per state, i.e., $2^n$ of them, and we are not aware of an algorithm that produces these, for an arbitrary structure function, at a cost comparable to a single signature computation.

\paragraph{Reliability of used and partially failed systems}
Mixture representations of the residual lifetime of a used system~\cite{navarro2008mixture}, and of its conditional residual life given the number of failures and the system status~\cite{parvardeh2014conditional}, condition on the \emph{number} of failed components rather than on their identity. They are representations rather than algorithms: obtaining the value at every state from them would repeat the derivation state by state, without exploiting that neighboring states share almost all of their sub-states, which is what Lemma~\ref{lemma:Wk} does.

\paragraph{Exchangeable models with a Markov survivor count}
Ross, Shahshahani and Weiss~\cite{ross1980number} identified the structure on which Assumptions~(A)--(B) of Section~\ref{sec:subsystems} rest; the model of Ross~\cite{ross1984model}, in its exchangeable case, and sequential order statistics with exponential baseline~\cite{kamps1995concept} are single-failure instances. Together, (A) and (B) force a pure-death chain on the number of working components, which excludes models based on Markovian arrival processes~\cite{garcia2018modeling,yera2021bivariate}, semi-Markov multi-state models~\cite{barbu2004discrete} and exchangeable dependent-renewal constructions~\cite{coen2019modelling}; among the models they admit, the L\'evy-frailty Marshall-Olkin family is the richest tractable one with simultaneous failures.

\paragraph{Simultaneous and common-cause failures}
The multivariate exponential distribution of Marshall and Olkin~\cite{marshall1967multivariate} is the canonical model of common shocks; its specializations in probabilistic safety assessment are the binomial failure rate~\cite{vesely1977estimating,atwood1986binomial} and lethal-shock~\cite{atwood1986binomial} models. The L\'evy-frailty Marshall-Olkin family~\cite{mai2009levy} is its exchangeable subclass driven by a L\'evy subordinator, whose order statistics and asymptotics were analyzed in~\cite{barrera2020limit,lagos2024limiting}, and repair policies and decompositions in~\cite{lagos2026simple}. We note in Section~\ref{subsec:models} that the binomial failure rate and lethal-shock models are of L\'evy-frailty type, so this family covers the standard common-cause models.

\paragraph{Maintenance driven by reliability indices and by signatures}
Signature-based maintenance is surveyed in~\cite{asadi2023overview}. Finkelstein~\emph{et~al.}~\cite{finkelstein2015time} proposed ``time-free'' policies triggered by the number of failed components; in these and related policies the trigger is a scalar index, and the signature enters only through the expected costs and cycle lengths. Closest to our setting, Feng and Yun~\cite{feng2024mean} replace all failed components once the mean residual life of the system drops below a tuned threshold; Section~\ref{sec:policies} discusses why their computation is not viable beyond very small systems.

\section{MTTF at all states}\label{sec:subsystems}

We focus on computing the signature of the subsystem obtained at each state $\mathbf x$ in $\{0,1\}^n$, i.e., when restricting the system to the components that are working in state $\mathbf x$.
Our main motivation for this, is the computation of the mean time-to-failure (MTTF) when the system is in any given state $\mathbf x$, and also derivation of repair policies based on these times.

We will do this in the following setting for the joint probability distribution of the vector $(T_1, \ldots, T_n)$ of failure times of components.
\subsubsection*{Assumptions}
\begin{description}

    \item[(A)] The vector $(T_1, \ldots, T_n)$ has exchangeable (EXC) components, meaning that for any two permutation $\sigma^1$ and $\sigma^2$ of \{1,\ldots, n\}, the random vectors $(T_{\sigma^1(1)}, \ldots, T_{\sigma^1(n)})$ and $(T_{\sigma^2(1)}, \ldots, T_{\sigma^2(n)})$ have the same distribution.

    \item[(B)] The process $N(t) = \sum_{k=1}^n 1_{\{T_k > t\}}$ of number of working components at all times $t\geq 0$, is distributed as a continuous-time Markov chain. 

    \item[(C)] For all states $\mathbf x$ in $\{0,1\}^n$, $\mathbf x \neq \mathbf 0$, we can compute the expected time $\mathbb{E} T^{\mathbf x}_{k:|\mathbf x|}$ of $k$-th distinct failure of components, when the system starts from state $\mathbf x$, for $k=1, \ldots, |\mathbf x|$.
    
\end{description}
We remark that in this case, due to Assumption~(A), when the system has $w_1>0$ working components, the time until the next failure is distributed exponentially with rate $\nu_{w_1}>0$ and the next number of working components is $w_2$, $0 \leq w_2 < w_1$, with probability $P_{w_1,w_2} \geq 0$. Also, which $w_1-w_2$ components fail, between the $w_1$ working ones, is taken uniformly at random.
Additionally, the times $T^{\mathbf x}_{k:|\mathbf x|}$ depend only on $|\mathbf x|$, due to the exchangeability of components stated in Assumption~(A), so onwards we write $\mathbb{E} T_{k:|\mathbf x|}$.

Section~\ref{subsec:models} presents several examples of failure models
satisfying Assumptions~(A)--(C), including models that allow simultaneous
component failures.
It is for the latter class that our method is of most
interest, since the simplifications available in the case of
single failures do not apply.

In this setting, the MTTF $m(\mathbf x)$ when the system arrives to state $\mathbf x$ in $\{0,1\}^n$, is $m(\mathbf x) = 0$ if $\Phi(\mathbf x)=0$, and otherwise,
\begin{align}\label{eq:mttf}
    m(\mathbf x) = \frac{1}{\nu_{|\mathbf x|}} + \sum_{\mathbf x' \, : \, \mathbf x' \leq \mathbf x} P_{|\mathbf x|, |\mathbf x'|} \frac{1}{\binom{|\mathbf x|}{|\mathbf x|-|\mathbf x'|}} \ m(\mathbf x').
\end{align}

The computation of the mean times $m(\mathbf x)$ at all states $\mathbf x$ in $\{0,1\}^n$ in this setting can be computationally challenging, as the system defined by~\eqref{eq:E Samaniego} has $2^n$ unknown variables ---one for each $m(\mathbf x)$--- over $2^n$ equations.
Only in simplified settings can these times $m(\mathbf x)$ be more easily computed, such as when the components failure times are iid exponentially distributed.
In that case, $P_{w,w-1}=1$, and the system can be easily solved due to the sparsity of the matrix $P$.
Another tractable example is in the so-called $k$-out-of-$n$:F case, where the system fails when there are $k$ or more failed components. In that case, the system of equations is not sparse, but the variables $m(\mathbf x)$ can be reduced to only $k$ unknown variables.
On the other hand, a case where no such simplification is possible is with the L\'evy-frailty Marshall-Olkin distribution, see~\cite{lagos2026simple}.
This is a noteworthy case, as it allows to model simultaneous failures of components, is easy to simulate and allows for approximations; see~\cite{lagos2024limiting}.

Together, Assumptions~(A) and~(B) are considerably more restrictive than either alone: they force a pure-death chain on the number of working components, and exclude several standard classes of dependent failure models (Section~\ref{sec:related}).

\section{Efficient computation through signatures}\label{sec:procedure}

The approach that we take to obtain the MTTF at all states is to use the decomposition~\eqref{eq:E Samaniego} that uses the signature of the system. %
Nonetheless, the difficulty now is that we need the signature vector \emph{for each state} of the system. So in principle, now we would need to compute $2^n$ signature vectors: $\binom n 0$ of dimension $n$ (i.e., for the state $\mathbf 1$ with no failures), $\binom n 1$ of dimension $n-1$ (i.e., for the $n$ states with 1 failure), and so on.

To solve this, our main result, Theorem~\ref{theo} below, states that the procedure presented in Algorithm~\ref{alg} below can efficiently compute the signature and mean time-to-failure at each state.

We will use the following nomenclature for the vectors $\mathbf s$, $\bar{\mathbf S}$ and $\mathbf W$, defined in~\eqref{def:s}, \eqref{def:bar S} and~\eqref{def:W}, respectively, when restricted to a state $\mathbf x$. %
Define for all $\mathbf x$ in $\{0,1\}^n$, $W^{\mathbf x}_k := \sum_{\mathbf x' \leq \mathbf x \ : \ |\mathbf x'|=|\mathbf x|-k} \Phi(\mathbf x')$, i.e., the number of sub-states of $\mathbf x$ where the system is working after shutting-down $k$ working components of $\mathbf x$, for $k=0, \ldots, |\mathbf x|$.
See Figure~\ref{fig:lemma} for an illustration of the counting vectors $\mathbf W^{\mathbf x}$ at all states $\mathbf x$ of a system with $n=3$ components.
Also, for $\mathbf x \neq \mathbf 0$, define $\bar S^{\mathbf x}_k := W^{\mathbf x}_k / \binom{|\mathbf x|}{k}$ and $s^{\mathbf x}_k := \bar S^{\mathbf x}_{k-1} - \bar S^{\mathbf x}_k$ for $k=1, \ldots, |\mathbf x|$, as the tail signature and structural signature localized at state $\mathbf x$, respectively.

With this, we will compute the MTTF $m(\mathbf x)$ at every state $\mathbf x$ with
\begin{align}\label{eq:E Samaniego sub}
m(\mathbf x) = \sum_{k=0}^{|\mathbf x|-1} \frac{W^{\mathbf x}_k}{\binom{|\mathbf x|}{k}} \left( \mathbb{E} T_{k+1:|\mathbf x|} - \mathbb{E} T_{k:|\mathbf x|} \right) 
\end{align}
that comes from decomposition~\eqref{eq:E Samaniego}.
Note that formula \eqref{eq:E Samaniego sub} is also valid when $\Phi(\mathbf x)=0$, since then $m(\mathbf x)=0$ and $\mathbf W^{\mathbf x} = \mathbf 0$.

The following lemma shows that we can compute recursively the signature vectors $\mathbf s^{\mathbf x}$ at all states $\mathbf x$, noting first that $\mathbf s^{\mathbf x}$ is essentially based on the counting vector $\mathbf W^{\mathbf x}$: if at all states $\mathbf x^k$ having $k$ failures we have the counting vectors $\mathbf W^{\mathbf x^k}$, we can compute the vectors $\mathbf W^{\mathbf x^{k-1}}$ at all states $\mathbf x^{k-1}$ with $k-1$ failures.

\begin{lemma}\label{lemma:Wk}
    For all state $\mathbf x$ in $\{0,1\}^n$, $W^{\mathbf x}_0 = \Phi(\mathbf x)$, and if $\mathbf x \neq \mathbf 0$ it holds that
    \begin{align}\label{eq:Wk}
        W^{\mathbf x}_k = \frac{1}{k} \sum_{\mathbf x' \in \mathrm{Sons}(\mathbf x)} W^{\mathbf x'}_{k-1} \qquad \text{for } k=1,\ldots, |\mathbf x|
    \end{align}
    where $\mathrm{Sons}(\mathbf x) = \{ \mathbf x' \in \{0,1\}^n \ : \ \mathbf x' \leq \mathbf x \text{ and } |\mathbf x'|=|\mathbf x|-1\}$ is the set of states obtained by shutting down one working component of $\mathbf x$.
\end{lemma}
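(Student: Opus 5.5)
The plan is a double-counting argument. The first claim, $W^{\mathbf x}_0 = \Phi(\mathbf x)$, is immediate from the definition. The only $\mathbf x' \leq \mathbf x$ with $|\mathbf x'| = |\mathbf x|$ is $\mathbf x$ itself, so the sum defining $W^{\mathbf x}_0$ has the single term $\Phi(\mathbf x)$.

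For the recursion, fix $\mathbf x \neq \mathbf 0$ and $1 \leq k \leq |\mathbf x|$. Let $\mathcal D_k(\mathbf x) := \{\mathbf x' \leq \mathbf x : |\mathbf x'| = |\mathbf x|-k\}$, so that $W^{\mathbf x}_k = \sum_{\mathbf y \in \mathcal D_k(\mathbf x)} \Phi(\mathbf y)$. Expanding the right-hand side of~\eqref{eq:Wk} by the definition of $W^{\mathbf x'}_{k-1}$ gives
\begin{align*}
\sum_{\mathbf x' \in \mathrm{Sons}(\mathbf x)} W^{\mathbf x'}_{k-1} = \sum_{\mathbf x' \in \mathrm{Sons}(\mathbf x)} \ \sum_{\mathbf y \in \mathcal D_{k-1}(\mathbf x')} \Phi(\mathbf y) .
\end{align*}
This is a sum of $\Phi(\mathbf y)$ over pairs $(\mathbf x', \mathbf y)$ with $\mathbf y \leq \mathbf x' \leq \mathbf x$, $|\mathbf x'| = |\mathbf x|-1$ and $|\mathbf y| = |\mathbf x|-k$. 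Every such $\mathbf y$ lies in $\mathcal D_k(\mathbf x)$, since $\leq$ is transitive. I would therefore swap the order of summation and group the terms by $\mathbf y$.

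The key step is to count, for a fixed $\mathbf y \in \mathcal D_k(\mathbf x)$, the number of sons $\mathbf x'$ of $\mathbf x$ with $\mathbf y \leq \mathbf x'$. A son $\mathbf x'$ is obtained by switching off exactly one component $i$ with $x_i = 1$. The condition $\mathbf y \leq \mathbf x'$ holds exactly when $y_i = 0$. The components with $x_i = 1$ and $y_i = 0$ are precisely those switched off in going from $\mathbf x$ to $\mathbf y$, and there are $|\mathbf x| - |\mathbf y| = k$ of them. Hence each $\mathbf y \in \mathcal D_k(\mathbf x)$ is counted exactly $k$ times, and the double sum equals $k \sum_{\mathbf y \in \mathcal D_k(\mathbf x)} \Phi(\mathbf y) = k W^{\mathbf x}_k$. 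Dividing by $k \geq 1$ gives~\eqref{eq:Wk}.

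There is no real obstacle here; the only care needed is in this counting step and in the edge case $k = |\mathbf x|$. In that case $\mathbf y = \mathbf 0$ and each son has $W^{\mathbf x'}_{k-1}$ with $k-1 = |\mathbf x'|$, which is still within the range where $W^{\mathbf x'}$ is defined. When $|\mathbf x| = 1$ the only son is $\mathbf 0$, and the formula reads $W^{\mathbf x}_1 = W^{\mathbf 0}_0 = \Phi(\mathbf 0)$, consistent with the definition. Note that neither monotonicity nor any property of $\Phi$ is used, in line with the paper's claim that no assumption on $\Phi$ is needed.
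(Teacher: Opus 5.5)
Your proposal is correct and is essentially the paper's double-counting argument: each sub-state $\mathbf y \leq \mathbf x$ with $|\mathbf y| = |\mathbf x|-k$ lies below exactly $k$ sons of $\mathbf x$, one for each component switched off in $\mathbf y$. Beyond the paper, you also spell out the identity $W^{\mathbf x}_0=\Phi(\mathbf x)$ and note that every term on the right-hand side is a sub-state counted in $W^{\mathbf x}_k$, two points the paper leaves implicit.
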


\begin{figure}[h]%
\centering
\resizebox{1\linewidth}{!}{%
\begin{tikzpicture}[
    component/.style={draw, circle, inner sep=1pt, minimum size=0.6cm},
    blank_system/.style={rounded corners=5mm, minimum width=2.5cm, minimum height=2.5cm, align=center},
    working_system/.style={fill=green!10, rounded corners=5mm, minimum width=2.5cm, minimum height=2.5cm, align=center},
    failed_system/.style={fill=red!10, rounded corners=5mm, minimum width=2.5cm, minimum height=2.5cm, align=center},
    cross out/.pic={
        \draw[red, very thick] (-0.3, -0.3) -- (0.3, 0.3);
        \draw[red, very thick] (-0.3, 0.3) -- (0.3, -0.3);
    }
]

\node[failed_system] (sys000) at (12, -5.5) {};
\node[component] (c4_1) at (sys000.north) [xshift=-0.5cm, yshift=-0.5cm] {1};
\node[component] (c4_2) at (sys000.north) [xshift= 0.5cm, yshift=-0.5cm] {2};
\node[component] (c4_3) at (sys000.south) [yshift= 0.5cm] {3};
\draw (c4_1.east) -- (c4_2.west);
\draw (sys000.east) -| +(-.3,0) |- (c4_2.east);
\draw (sys000.east) -| +(-.3,0) |- (c4_3.east);
\draw (sys000.west) -| +( .3,0) |- (c4_3.west);
\draw (sys000.west) -| +( .3,0) |- (c4_1.west);
\pic at (c4_1) {cross out};
\pic at (c4_2) {cross out};
\pic at (c4_3) {cross out};

\node[failed_system] (sys100) at (8, -0) {};
\node[component] (c9_1) at (sys100.north) [xshift=-0.5cm, yshift=-0.5cm] {1};
\node[component] (c9_2) at (sys100.north) [xshift= 0.5cm, yshift=-0.5cm] {2};
\node[component] (c9_3) at (sys100.south) [yshift= 0.5cm] {3};
\draw (c9_1.east) -- (c9_2.west);
\draw (sys100.east) -| +(-.3,0) |- (c9_2.east);
\draw (sys100.east) -| +(-.3,0) |- (c9_3.east);
\draw (sys100.west) -| +( .3,0) |- (c9_3.west);
\draw (sys100.west) -| +( .3,0) |- (c9_1.west);
\pic at (c9_2) {cross out};
\pic at (c9_3) {cross out};

\node[failed_system] (sys010) at (8, -5.5) {};
\node[component] (c6_1) at (sys010.north) [xshift=-0.5cm, yshift=-0.5cm] {1};
\node[component] (c6_2) at (sys010.north) [xshift= 0.5cm, yshift=-0.5cm] {2};
\node[component] (c6_3) at (sys010.south) [yshift= 0.5cm] {3};
\draw (c6_1.east) -- (c6_2.west);
\draw (sys010.east) -| +(-.3,0) |- (c6_2.east);
\draw (sys010.east) -| +(-.3,0) |- (c6_3.east);
\draw (sys010.west) -| +( .3,0) |- (c6_3.west);
\draw (sys010.west) -| +( .3,0) |- (c6_1.west);
\pic at (c6_1) {cross out};
\pic at (c6_3) {cross out};

\node[working_system] (sys001) at (8, -11) {};
\node[component] (c7_1) at (sys001.north) [xshift=-0.5cm, yshift=-0.5cm] {1};
\node[component] (c7_2) at (sys001.north) [xshift= 0.5cm, yshift=-0.5cm] {2};
\node[component] (c7_3) at (sys001.south) [yshift= 0.5cm] {3};
\draw (c7_1.east) -- (c7_2.west);
\draw (sys001.east) -| +(-.3,0) |- (c7_2.east);
\draw (sys001.east) -| +(-.3,0) |- (c7_3.east);
\draw (sys001.west) -| +( .3,0) |- (c7_3.west);
\draw (sys001.west) -| +( .3,0) |- (c7_1.west);
\pic at (c7_1) {cross out};
\pic at (c7_2) {cross out};

\node[working_system] (sys011) at (4, 0) {};
\node[component] (c2_1) at (sys011.north) [xshift=-0.5cm, yshift=-0.5cm] {1};
\node[component] (c2_2) at (sys011.north) [xshift= 0.5cm, yshift=-0.5cm] {2};
\node[component] (c2_3) at (sys011.south) [yshift= 0.5cm] {3};
\draw (c2_1.east) -- (c2_2.west);
\draw (sys011.east) -| +(-.3,0) |- (c2_2.east);
\draw (sys011.east) -| +(-.3,0) |- (c2_3.east);
\draw (sys011.west) -| +( .3,0) |- (c2_3.west);
\draw (sys011.west) -| +( .3,0) |- (c2_1.west);
\pic at (c2_1) {cross out};

\node[working_system] (sys101) at (4, -5.5) {};
\node[component] (c5_1) at (sys101.north) [xshift=-0.5cm, yshift=-0.5cm] {1};
\node[component] (c5_2) at (sys101.north) [xshift= 0.5cm, yshift=-0.5cm] {2};
\node[component] (c5_3) at (sys101.south) [yshift= 0.5cm] {3};
\draw (c5_1.east) -- (c5_2.west);
\draw (sys101.east) -| +(-.3,0) |- (c5_2.east);
\draw (sys101.east) -| +(-.3,0) |- (c5_3.east);
\draw (sys101.west) -| +( .3,0) |- (c5_3.west);
\draw (sys101.west) -| +( .3,0) |- (c5_1.west);
\pic at (c5_2) {cross out};

\node[working_system] (sys110) at (4, -11) {};
\node[component] (c8_1) at (sys110.north) [xshift=-0.5cm, yshift=-0.5cm] {1};
\node[component] (c8_2) at (sys110.north) [xshift= 0.5cm, yshift=-0.5cm] {2};
\node[component] (c8_3) at (sys110.south) [yshift= 0.5cm] {3};
\draw (c8_1.east) -- (c8_2.west);
\draw (sys110.east) -| +(-.3,0) |- (c8_2.east);
\draw (sys110.east) -| +(-.3,0) |- (c8_3.east);
\draw (sys110.west) -| +( .3,0) |- (c8_3.west);
\draw (sys110.west) -| +( .3,0) |- (c8_1.west);
\pic at (c8_3) {cross out};

\node[working_system] (sys111) at (0, -5.5) {};
\node[component] (c1_1) at (sys111.north) [xshift=-0.5cm, yshift=-0.5cm] {1};
\node[component] (c1_2) at (sys111.north) [xshift= 0.5cm, yshift=-0.5cm] {2};
\node[component] (c1_3) at (sys111.south) [yshift= 0.5cm] {3};
\draw (c1_1.east) -- (c1_2.west);
\draw (sys111.east) -| +(-.3,0) |- (c1_2.east);
\draw (sys111.east) -| +(-.3,0) |- (c1_3.east);
\draw (sys111.west) -| +(.3,0) |- (c1_3.west);
\draw (sys111.west) -| +(.3,0) |- (c1_1.west);

\node (W000) [below=1.2cm] at (sys000) {$\mathbf W^{\mathbf x} = (\ 0\ )$};
\node (L000) [below=.3cm] at (W000) {\color{gray} \qquad\quad $W^{\mathbf x}_0$};

\node (W100) [below=1.2cm] at (sys100) {$\mathbf W^{\mathbf x} = (\ \colorbox{yellow}{0} \ , \ \colorbox{cyan}{0}) \ $};
\node (L100) [below=.3cm] at (W100) {\color{gray} \qquad\quad $W^{\mathbf x}_0$ \ $W^{\mathbf x}_1$};

\node (W010) [below=1.2cm] at (sys010) {$\mathbf W^{\mathbf x} = ( \ 0 \ , \  0 \ )$};
\node (L010) [below=.3cm] at (W010) {\color{gray} \qquad\quad $W^{\mathbf x}_0$ \ $W^{\mathbf x}_1$};

\node (W001) [below=1.2cm] at (sys001) {$\mathbf W^{\mathbf x} = ( \ \colorbox{yellow}{1} \ , \  \colorbox{cyan}{0} \ )$};
\node (L001) [below=.3cm] at (W001) {\color{gray} \qquad\quad $W^{\mathbf x}_0$ \ $W^{\mathbf x}_1$};

\node (W011) [below=1.2cm] at (sys011) {$\mathbf W^{\mathbf x} = ( \ 1 \ \ , \ \ 1 \ \ , \ \ 0 \ )$};
\node (L011) [below=.3cm] at (W011) {\color{gray} \qquad\quad $W^{\mathbf x}_0$ $W^{\mathbf x}_1$ $W^{\mathbf x}_2$};

\node (W101) [below=1.2cm] at (sys101) {$\mathbf W^{\mathbf x} = (\ 1\ , \, \colorbox{yellow}{$\frac{0+1}{1}$}, \, \colorbox{cyan}{$\frac{0+0}{2}$})$};
\node (L101) [below=.3cm] at (W101) {\color{gray} \qquad\ $W^{\mathbf x}_0$ \ $W^{\mathbf x}_1$ \  $W^{\mathbf x}_2$};

\node (W110) [below=1.2cm] at (sys110) {$\mathbf W^{\mathbf x} = ( \ 1 \ \ , \ \ 0 \ \ , \ \ 0 \ )$};
\node (L110) [below=.3cm] at (W110) {\color{gray} \qquad\quad $W^{\mathbf x}_0$ $W^{\mathbf x}_1$ $W^{\mathbf x}_2$};

\node (W111) [below=1.2cm] at (sys111) {$\mathbf W^{\mathbf x} = ( \ \ 1 \ , \ 3 \ , \ 1 \ , \ 0 \ \ )$};
\node (L111) [below=.3cm] at (W111) {\color{gray} \qquad\quad $W^{\mathbf x}_0$ $W^{\mathbf x}_1$ $W^{\mathbf x}_2$ $W^{\mathbf x}_3$};

\draw[->, gray] (W100) .. controls +(-2,0) and +(3, 0.6) .. (W101);
\draw[->, gray] (W001) .. controls +(-2,0) and +(3,-0.6) .. (W101);

\node (B111) [below=.1cm] at (L111) {\batt{1.0}};
\node (B011) [below=.1cm] at (L011) {\batt{0.857}};
\node (B101) [below=.1cm] at (L101) {\batt{0.857}};
\node (B110) [below=.1cm] at (L110) {\batt{0.429}};
\node (B100) [below=.1cm] at (L100) {\batt{0.0}};
\node (B010) [below=.1cm] at (L010) {\batt{0.0}};
\node (B001) [below=.1cm] at (L001) {\batt{0.857}};
\node (B000) [below=.1cm] at (L000) {\batt{0.0}};

\draw [magenta, dotted, thick, rounded corners=1cm] (6.55, 1.6) -- (10.2, 1.6) -- (13.4,-2.2) -- (13.4,-9.0) -- (6.55, -9.0) -- cycle;
\node[magenta, anchor=south] at (10.5, -9.0) {$\gamma \in (0\%, 43\%]$};

\draw [cyan, dashdotted, thick, rounded corners=1cm] (2.1, -9.2) -- (6.2, -9.2) -- (6.2, 1.8) -- (10.3, 1.8) -- (13.7, -2.1) -- (13.7, -9.2) -- (6.2, -9.2) -- (6.2, -14.6) -- (2.1, -14.6) -- cycle;
\node[cyan, anchor=north] at (10.5, -9.4) {$\gamma \in (43\%, 86\%]$};

\draw [blue, thick, rounded corners=1cm] (1.9, 2.0) -- (10.4, 2.0) -- (14., -2.0) -- (14., -14.8) -- (1.9, -14.8) -- cycle;
\node[blue, anchor=north] at (10.5, -14) {$\gamma \in (86\%, 100\%]$};
\end{tikzpicture}
}
\caption{
All states $\mathbf x$ of a system with $n=3$ components that works iff there is a path of working components from left to right, i.e., $\Phi(\mathbf x) = \max\{x_3, \min\{x_1, x_2\}\}$; crosses mark failed components, and green (red) backgrounds working (failed) states.
Below each state $\mathbf x$ we show its counting vector $\mathbf W^{\mathbf x}$; for instance, $\mathbf W^{\mathbf 1} = (1,3,1,0)$ gives the tail signature $\bar{\mathbf S} = (1, 1, 1/3, 0)$ and the signature $\mathbf s = (0, 2/3, 1/3)$ of the system.
First, $W^{\mathbf x}_0 = \Phi(\mathbf x)$ for every state $\mathbf x$.
Second, by Lemma~\ref{lemma:Wk}, for instance for the state $\mathbf x = (1,0,1)$, $\mathbf W^{(1,0,1)}$ is computed from $\Phi(1,0,1)$ and the counting vectors of its two sons $(1,0,0)$ and $(0,0,1)$:
by~\eqref{eq:Wk}, for $k=1$ (yellow background) and $k=2$ (blue background), $W^{(1,0,1)}_k = \big( W^{(1,0,0)}_{k-1} + W^{(0,0,1)}_{k-1} \big) / k$.
The batteries show the remaining life $m(\mathbf x)/m(\mathbf 1)$ for iid exponentially distributed failure times with unit mean, with $m(\mathbf 1) = 7/6$, and the three boundaries enclose the states at which the MTTF-threshold policies $\pi^\gamma$ of Section~\ref{subsec:threshold} repair, which are the only three distinct ones. With $\ccomp=1/30$, $\csys=1$ and $\cset=0$, their long-term mean costs are $1/10$, $11/15$ and $97/105$ for $\gamma$ in $(86\%,100\%]$, $(43\%,86\%]$ and $(0\%,43\%]$, respectively, so repairing at the first failure is the cheapest.
}
\label{fig:lemma}
\end{figure}

The proof, in Appendix~\ref{app:proofs}, uses that each sub-state of $\mathbf x$ with $k$ fewer working components is a sub-state of exactly $k$ sons of $\mathbf x$.

\begin{algorithm}
\caption{Computation of mean time-to-failure at all states}\label{alg}
\begin{algorithmic}
\Require number $n$ of components, and system function $\Phi$.

\For{$k = n, n-1, \ldots, 0$}
\If{$k=n$}
    \State $W^{\mathbf 0}_0 \gets 0$ and $m(\mathbf 0) \gets 0$
    \State Store in memory $W^{\mathbf 0}$ and $m(\mathbf 0)$.
\Else
    \State For all state $\mathbf x^{k}$ with $k$ failures, compute $W^{\mathbf x^k}_0$, \ldots, $W^{\mathbf x^k}_{|\mathbf x^k|}$ using formula~\eqref{eq:Wk}, and compute $m(\mathbf x^k)$ using~\eqref{eq:E Samaniego sub}.
    \State Store in memory the MTTF $m(\mathbf x^k)$ and the vector $\mathbf W^{\mathbf x^k}$, for all states $\mathbf x^{k}$ with $k$ failures.
    \State Erase from memory the vectors $\mathbf W^{\mathbf x^{k+1}}$ for all states $\mathbf x^{k+1}$ with $k+1$ failures.
\EndIf
\EndFor
\end{algorithmic}
\end{algorithm}

The following theorem states the computational efficiency of Algorithm~\ref{alg} to compute the MTTF at each state.
In short, it states that the algorithm uses the same number of accesses to the structural function $\Phi$ of the system, and that there is only a polynomial factor (over the number of components $n$) to pay, in memory and in simple algebraic operations, in order to compute the MTTF at all states instead of just for the complete system.

\begin{theorem}\label{theo}
Suppose that Assumptions~(A), (B) and~(C) hold. Then the following apply.
\begin{enumerate}
    \item Algorithm~\ref{alg} evaluates the structure function $\Phi$ exactly $2^n$ times, once per state, which is the number of evaluations needed to compute the signature~\eqref{def:s} of the complete system.

    \item Computing the MTTF~\eqref{eq:E Samaniego} of the complete system requires $2^n$ additions of bits and $\mathcal{O}(n)$ other basic algebraic operations, and, processing the states one level at a time, $\mathcal{O}(2^n / \sqrt{n})$ bits of memory.

    \item Computing the MTTF at every state with Algorithm~\ref{alg} requires $2^n$ floats of memory---one for the MTTF of each state---, $\mathcal{O}(n^2 2^n)$ integer additions and memory accesses, $\mathcal{O}(n 2^n)$ other basic algebraic operations, and $\mathcal{O}(\sqrt n \, 2^n)$ integers of memory for the counting vectors.
\end{enumerate}
\end{theorem}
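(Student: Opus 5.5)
The proof is a direct accounting of what Algorithm~\ref{alg} does, level by level, with Lemma~\ref{lemma:Wk} guaranteeing correctness. Correctness comes first. By induction on the number of working components, Lemma~\ref{lemma:Wk} shows that at the end of iteration $k$ every state $\mathbf x^k$ with $k$ failures carries the true vector $\mathbf W^{\mathbf x^k}$. Indeed, $W^{\mathbf x}_0=\Phi(\mathbf x)$, and the higher entries come from the sons, which have $k+1$ failures and are still in memory. Then~\eqref{eq:E Samaniego sub}, whose terms $\mathbb{E} T_{k:|\mathbf x|}$ are available by Assumption~(C), gives $m(\mathbf x)$. Assumptions~(A)--(B) are used only to justify~\eqref{eq:E Samaniego sub}, via the exchangeable version of~\eqref{eq:Samaniego} applied to each subsystem.

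\textbf{Item 1.} The only place $\Phi$ is queried is $W^{\mathbf x}_0=\Phi(\mathbf x)$, once per state, for $2^n$ evaluations in total. For the lower bound, computing the $W_k$ in~\eqref{def:W} for the complete system needs every $\Phi(\mathbf x)$ in general, since $\Phi$ is arbitrary. Flipping one value at a level $k$ with $0<k<n$ changes $W_k$ and keeps monotonicity if the flipped state is chosen suitably; otherwise the value is forced by triviality. I will state this at the level of rigor the paper uses: the definition sums over all states.

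\textbf{Item 2.} For the complete system, $W_k$ is a sum of $\binom nk$ bits, so the total over all levels is $\sum_k\binom nk=2^n$ bit additions. Evaluating~\eqref{eq:E Samaniego} then takes $\mathcal O(n)$ divisions, multiplications and additions. Processing one level at a time means storing at most one level's statuses, $\max_k\binom nk=\binom n{\lfloor n/2\rfloor}=\mathcal O(2^n/\sqrt n)$ bits, by Stirling's formula.

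\textbf{Item 3.} One float per state gives $2^n$ floats for the $m(\mathbf x)$.

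For the additions, a state $\mathbf x$ with $w=|\mathbf x|$ has $w$ sons, and each entry $W^{\mathbf x}_k$ sums $w$ integers. The per-state cost is therefore $\mathcal O(w^2)\le\mathcal O(n^2)$ additions and memory accesses, for $\mathcal O(n^22^n)$ in total. A sharper count is $\sum_w\binom nw w^2=\mathcal O(n^22^n)$.

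The remaining operations are the divisions by $k$ in~\eqref{eq:Wk} and the $\mathcal O(w)$ multiplications and divisions in~\eqref{eq:E Samaniego sub}. These total $\mathcal O(n)$ per state, hence $\mathcal O(n2^n)$. The binomials and differences $\mathbb{E} T_{k+1:w}-\mathbb{E} T_{k:w}$ are precomputed in $\mathcal O(n^2)$.

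The memory bound for the counting vectors is the step needing care. At any moment only two consecutive levels are stored, because Algorithm~\ref{alg} erases level $k+1$ after level $k$ is finished. Level $k$ holds $\binom nk$ vectors of length $n-k+1$. The bound follows from
\[
\binom nk(n-k+1)+\binom n{k+1}(n-k)\le 2(n+1)\binom n{\lfloor n/2\rfloor}=\mathcal O(\sqrt n\,2^n),
\]
using Stirling's formula $\binom n{\lfloor n/2\rfloor}\sim 2^n\sqrt{2/(\pi n)}$. I will also note that the entries $W^{\mathbf x}_k\le\binom nk$ fit in $n$-bit integers, so "integers" is the right unit.
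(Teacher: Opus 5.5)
Your accounting matches the paper's proof item by item. $\Phi$ enters only through $W^{\mathbf x}_0=\Phi(\mathbf x)$. Item~2 is $\sum_k\binom nk=2^n$ bit additions plus one stored level of at most $\binom n{\lfloor n/2\rfloor}$ bits. Item~3 sums $\mathcal O((n-k)^2)$ additions and $\mathcal O(n-k)$ other operations over the $\binom nk$ states of each level, and bounds the two stored levels by $2(n+1)\binom n{\lfloor n/2\rfloor}$ integers.

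The one thing to remove is the lower-bound sketch in item~1. Its argument does not work, and as a claim about the minimum number of evaluations it is false. Showing that, for a given $\Phi$, \emph{some} suitably chosen states can be flipped without breaking monotonicity does not show that \emph{every} state must be queried. An adaptive procedure can infer unqueried values from queried comparable ones: a working state forces all states above it to work, and a failed one forces all states below it to fail. Indeed, by Hansel's theorem a monotone $\Phi$ can be identified, and hence its signature computed, with $\binom n{\lfloor n/2\rfloor}+\binom n{\lfloor n/2\rfloor+1}<2^n$ evaluations for $n\ge 2$.

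So the clause ``which is the number of evaluations needed'' can only be read as the paper reads it: the number used by the definition~\eqref{def:W} and by the direct computation of item~2. The paper's proof of item~1 is just the once-per-state observation and claims no lower bound.
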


The proof is in Appendix~\ref{app:proofs}.

We remark that, as an interesting corollary, Algorithm~\ref{alg} can also identify the so-called \emph{minimal path-sets} (MPS) of the system $\Phi$.
A set of components, say $A \subseteq \{1,\ldots,n\}$, is a MPS if the state $\mathbf 1_A := (1_{\{ i \in A\}} : i=1,\ldots,n)$ is such that $\Phi(\mathbf 1_A)=1$ and $\Phi(\mathbf x')=0$ at all state $\mathbf x'$ in $\{0,1\}^n$ such that $\mathbf x' \leq \mathbf 1_A$ and $\mathbf 1_A \neq \mathbf x'$.
Indeed, note that by Lemma~\ref{lemma:Wk}, a state $\mathbf x$ defines a MPS (i.e., the set $A^{\mathbf x}:=\{i : x_i=1\}$ is a MPS) iff $\mathbf x$ is working and all its sons are failed, i.e., $W^{\mathbf x}_0=1$ and $W^{\mathbf x}_1=0$.
Consequently, Algorithm~\ref{alg} can identify \emph{all} MPS of the systems by flagging all the states $\mathbf x$ with $W^{\mathbf x}_0=1$ and $W^{\mathbf x}_1=0$.

\section{Application: repair policies based on the MTTF}\label{sec:policies}

An application of our algorithm is to derive repair policies based on the MTTF of the system, as in Feng and Yun~\cite{feng2024mean}.
The latter work replaces all failed components once the mean residual life of the system falls below a threshold, and then tunes the threshold to minimize the long-run expected cost rate.
Their mean residual lives are, however, computed by brute force: a depth-first traversal of all \emph{ordered} sequences of component failures, evaluating the system status at every
node, so that a configuration with $k$ failed components is revisited once for each of the $k!$ orders in which those failures may have occurred and the traversal explores up to
$e \cdot n!$ nodes rather than the $2^n$ states.
Their derivation is moreover tied to a specific family of systems ---planar coverage, or ``lamp'', systems--- and to independent and identically distributed exponential lifetimes, hence to single failures, with experiments reported for only $n=6$ and $9$ components.

We will consider state-based repair policies $\pi: \{0,1\}^n \to \{\actRep, \actCont\}$ where $\actRep$ means \emph{repairing} all failed components and the system, if it failed, and $\actCont$ means \emph{continue} operating as is. We assume that the action is taken as soon as the system arrives to a state $\mathbf x$ in $\{0,1\}^n$ of the system.

For a given policy $\pi$, the mean time-to-repair $m_{\mathrm{rep}}^\pi (\mathbf x)$ when the system starts in state $\mathbf x$ in $\{0,1\}^n$ and operates under the policy $\pi$, can be computed by solving the following linear system of $2^n$ equations (one for each $\mathbf x$) and $2^n$ variables (one for each $m_{\mathrm{rep}}^\pi(\mathbf x)$):
\begin{align}
m_{\mathrm{rep}}^\pi(\mathbf x) = \begin{cases}
\frac{1}{\nu_{|\mathbf x|}} + \sum_{\mathbf x' \, : \, \mathbf x' \leq \mathbf x}  \frac{ P_{|\mathbf x|, |\mathbf x'|} }{ \binom{|\mathbf x|}{|\mathbf x|-|\mathbf x'|} } \ m_{\mathrm{rep}}^\pi (\mathbf x') \\
\hphantom{0 \qquad\qquad\qquad\qquad} \text{if } \pi(\mathbf x)=\actCont \\
0 \qquad\qquad\qquad\qquad \text{if } \pi(\mathbf x)=\actRep
\end{cases} \nonumber \\
\text{for all } \mathbf x \in \{0,1\}^n. 
\label{sys:m rep}
\end{align}
For the costs, let $\ccomp$ be the cost to replace each failed component, $\csys$ the cost to restore a failed system, and $\cset \geq 0$ a fixed cost incurred by every repair intervention, e.g., the mobilization of the repair crew or a production stop. A repair at state $\mathbf x$ then costs
\begin{align}\label{eq:cost setup}
\repcost(\mathbf x) := \cset + (n - |\mathbf x|) \, \ccomp + \mathbf 1_{\{\Phi(\mathbf x) = 0\}} \, \csys .
\end{align}
The setup cost creates economies of scale in grouping replacements: without it, repairing at the first failure is often close to optimal.
The mean cost $c_{\mathrm{rep}}^\pi(\mathbf x)$ until the first repair, when starting at state $\mathbf x$ in $\{0,1\}^n$, is given by the following linear system of $2^n$ equations and $2^n$ variables:
\begin{align}
c_{\mathrm{rep}}^\pi(\mathbf x) = \begin{cases}
\sum_{\mathbf x' \, : \, \mathbf x' \leq \mathbf x} \frac{ P_{|\mathbf x|, |\mathbf x'|} }{ \binom{|\mathbf x|}{|\mathbf x|-|\mathbf x'|} } \ c_{\mathrm{rep}}^\pi (\mathbf x') & \text{if } \pi(\mathbf x)=\actCont \\
\repcost(\mathbf x) & \text{if } \pi(\mathbf x)=\actRep
\end{cases} \nonumber \\
\text{for all } \mathbf x \in \{0,1\}^n.
\label{sys:c rep}
\end{align}
The long-term mean cost of the system operating under a policy $\pi$ that continues at $\mathbf 1$ is
\begin{align}\label{eq:LTMC pi}
    LTMC^\pi = \frac{ c_{\mathrm{rep}}^{\pi}(\mathbf 1) }{ m_{\mathrm{rep}}^{\pi}(\mathbf 1) }.
\end{align}
Indeed, since the system is Markovian and a repair brings it back to the brand new state $\mathbf 1$, the repairs are renewal epochs; the expected cost of a renewal cycle is $c_{\mathrm{rep}}^{\pi}(\mathbf 1)$ and its expected length is $m_{\mathrm{rep}}^{\pi}(\mathbf 1)$, so~\eqref{eq:LTMC pi} follows from the renewal-reward theorem.

\subsection{MTTF-threshold policies}\label{subsec:threshold}

We will focus on MTTF-based repair policies of the following form.
Given a threshold parameter $\gamma$ in $(0,1]$, when we get to a state where the remaining expected working time of the system is less than $100\times \gamma \ \%$ of the MTTF of a brand new system, we repair all failed components; and otherwise we continue operating the system.
That is,
\begin{align}
    \pi^\gamma (\mathbf x) := \begin{cases}
        \actRep & \text{if } m(\mathbf x) / m(\mathbf 1) < \gamma \\
        \actCont & \text{otherwise.}
    \end{cases}
\end{align}

For instance, for the system of Figure~\ref{fig:lemma} with iid exponentially distributed failure times, there are only three distinct policies $\pi^\gamma$, shown in that figure.

We write $LTMC^\gamma := LTMC^{\pi^\gamma}$ for the long-term mean cost~\eqref{eq:LTMC pi} of the threshold policy $\pi^\gamma$.

Solving the linear systems~\eqref{sys:m rep} and~\eqref{sys:c rep} once for each threshold $\gamma$ is expensive: there are as many different policies $\pi^\gamma$ as distinct values of $m(\mathbf x)/m(\mathbf 1)$, which in our experiments range from five to more than a million (Section~\ref{subsec:results thresholds}). We say that the MTTF is \emph{monotone} if $m(\mathbf x') \leq m(\mathbf x)$ for all states $\mathbf x' \leq \mathbf x$, where $m$ is the MTTF~\eqref{eq:mttf} of the system \emph{without} repairs, the quantity that defines the policies $\pi^\gamma$. We now show that, when the MTTF is monotone and known at every state, the long-term mean cost of \emph{all} policies $\pi^\gamma$ is obtained from a single forward pass over the states, without solving any linear system; at the end of this subsection we discuss which failure models have a monotone MTTF.

For a model satisfying Assumptions~(A)--(B), let $f_w$ be the probability that the number of working components, started at $n$, ever takes the value $w$, and let $\varphi_w := \nu_w \sum_{w' < w} (w-w') P_{w,w'}$ be the expected number of component failures per unit time when $w$ components work, i.e., the rate $\nu_w$ of failure epochs times the mean number $\sum_{w' < w} (w-w') P_{w,w'}$ of components that fail at each epoch. For instance, under the L\'evy-frailty Marshall-Olkin distribution of Section~\ref{subsec:models}, failure epochs occur at rate $\nu_w = \Psi(w)$, while $\varphi_w = w \Psi(1)$, since each working component fails at rate $\Psi(1)$, alone or together with others (Remark~\ref{rmk:simultaneity}).
By exchangeability, the expected time that the system without repairs spends in a given state $\mathbf x$ with $w = |\mathbf x|$ working components is
\begin{align}\label{def:tau}
\tau_w := \frac{f_w}{\binom n w \, \nu_w}.
\end{align}

\begin{proposition}\label{prop:all thresholds}
Suppose that Assumptions~(A)--(C) hold and that the MTTF is monotone. For $\gamma$ in $(0,1]$, let $\mathcal C_\gamma := \{\mathbf x : \Phi(\mathbf x) = 1, \ m(\mathbf x) \geq \gamma \, m(\mathbf 1)\}$ be the set of states at which $\pi^\gamma$ continues. Then the expected length $T^\gamma := m^{\pi^\gamma}_{\mathrm{rep}}(\mathbf 1)$ of a repair cycle, the expected number $N^\gamma$ of failed components at the repair, and the probability $P^\gamma$ that the repair is corrective, i.e., that the system has failed, are
\begin{align}
T^\gamma &= \sum_{\mathbf x \in \mathcal C_\gamma} \tau_{|\mathbf x|}, \qquad
N^\gamma = \sum_{\mathbf x \in \mathcal C_\gamma} \tau_{|\mathbf x|} \, \varphi_{|\mathbf x|}, \label{eq:TN gamma} \\
P^\gamma &= \sum_{\mathbf x \in \mathcal C_\gamma} \tau_{|\mathbf x|} \, \nu_{|\mathbf x|} \sum_{w' < |\mathbf x|} P_{|\mathbf x|, w'} \left( 1 - \frac{W^{\mathbf x}_{|\mathbf x| - w'}}{\binom{|\mathbf x|}{|\mathbf x| - w'}} \right), \label{eq:P gamma}
\end{align}
and $c^{\pi^\gamma}_{\mathrm{rep}}(\mathbf 1) = \cset + \ccomp N^\gamma + \csys P^\gamma$, so that
\begin{align}\label{eq:LTMC gamma}
LTMC^\gamma = \frac{\cset + \ccomp \, N^\gamma + \csys \, P^\gamma}{T^\gamma}.
\end{align}
\end{proposition}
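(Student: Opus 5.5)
The key observation is that, under monotonicity of the MTTF, the set $\mathcal C_\gamma$ of continuation states is a down-set inside the working states is false in general, but it is an \emph{up-set}: if $\mathbf x\in\mathcal C_\gamma$ and $\mathbf x\le \mathbf y$, then $\Phi(\mathbf y)\ge\Phi(\mathbf x)=1$ and $m(\mathbf y)\ge m(\mathbf x)\ge\gamma m(\mathbf 1)$. Since the process without repairs only moves downward ($\mathbf x'\le\mathbf x$ at each jump), the trajectory under $\pi^\gamma$ started at $\mathbf 1$ coincides with the no-repair trajectory up to the first exit from $\mathcal C_\gamma$. Once the trajectory leaves $\mathcal C_\gamma$ it can never return, because returning would mean reaching a state above one outside the up-set's complement. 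So the repair epoch is exactly the first time the no-repair process leaves $\mathcal C_\gamma$, and before that time every visited state lies in $\mathcal C_\gamma$. This requires $\mathbf 1\in\mathcal C_\gamma$, which holds since $\gamma\le 1$.

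With this coupling I will write each quantity as an expectation over the no-repair chain restricted to $\mathcal C_\gamma$. For $T^\gamma$, the cycle length equals the total time the no-repair process spends in $\mathcal C_\gamma$. Taking expectations state by state gives $\sum_{\mathbf x\in\mathcal C_\gamma}\tau_{|\mathbf x|}$. Here I justify \eqref{def:tau}: the state is visited at most once, and the expected sojourn given a visit is $1/\nu_w$. By exchangeability, given that the count ever equals $w$, the set of working components is uniform among the $\binom nw$ sets. This uniformity follows inductively from the remark after Assumption~(C) that failing components are chosen uniformly.

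For $N^\gamma$, every failed component at the repair failed during a jump out of some state of $\mathcal C_\gamma$, since $\mathbf 1$ has none and all earlier jumps start in $\mathcal C_\gamma$. So $N^\gamma$ is the expected total number of component failures occurring in jumps from $\mathcal C_\gamma$. By a compensator (Lévy system) argument, the expected number of failures during a sojourn at $\mathbf x$ equals the sojourn time times $\varphi_{|\mathbf x|}$. Equivalently, it equals the probability of visiting times the mean jump size $\sum_{w'}(w-w')P_{w,w'}$, and $\tau_w\varphi_w$ is the visit probability times this mean jump size.

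For $P^\gamma$, the repair is corrective iff the exit jump lands in a failed state. Any jump from $\mathbf x\in\mathcal C_\gamma$ to a failed state is necessarily an exit, so $P^\gamma$ is a sum over $\mathbf x\in\mathcal C_\gamma$ of the visit probability $\tau_{|\mathbf x|}\nu_{|\mathbf x|}$ times the probability that the jump lands in a failed state. Conditioning on the new count $w'$, the landing state is uniform among the $\binom{|\mathbf x|}{|\mathbf x|-w'}$ sub-states, of which exactly $W^{\mathbf x}_{|\mathbf x|-w'}$ work, giving \eqref{eq:P gamma}. The cost formula follows because each cycle has exactly one repair, costing $\cset+\ccomp(n-|\mathbf x|)+\csys\mathbf 1\{\Phi=0\}$ at the exit state, and $n-|\mathbf x|$ is the number of failures accumulated. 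Taking expectations and applying \eqref{eq:LTMC pi} yields \eqref{eq:LTMC gamma}.

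The main obstacle is the coupling step, which rests on the up-set property and on the non-return of the monotone trajectory. A secondary point is rigorously justifying the uniform-visit claim behind $\tau_w$.
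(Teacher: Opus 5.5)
Your proposal is correct and follows essentially the same route as the paper's proof. The steps match: $\mathcal C_\gamma$ is an up-set; the trajectory under $\pi^\gamma$ coincides with the no-repair trajectory until the first exit from $\mathcal C_\gamma$; then state-by-state accounting of expected sojourn times, jump sizes and landing probabilities. Your checks that $\mathbf 1 \in \mathcal C_\gamma$ and that visits are uniform over the $\binom nw$ states of each level are useful additions. Your opening sentence about a down-set and your non-return sentence are garbled. The intended argument is that a return would place a state of $\mathcal C_\gamma$ below an earlier visited state outside $\mathcal C_\gamma$, which the up-set property forbids.
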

The proof, in Appendix~\ref{app:proofs}, uses that, since the MTTF is monotone, the system under $\pi^\gamma$ evolves as the system without repairs until it leaves $\mathcal C_\gamma$.

Proposition~\ref{prop:all thresholds} has two practical consequences. First, both the MTTF~\eqref{eq:E Samaniego sub} and the summands in~\eqref{eq:TN gamma}--\eqref{eq:P gamma} depend on a state $\mathbf x$ only through its number of working components $|\mathbf x|$ and its counting vector $\mathbf W^{\mathbf x}$. Hence we may group the working states into \emph{classes} of states with the same counting vector, and store each class once, with its multiplicity. The number of classes is typically a small fraction of $2^n$: in our experiments, it ranges from $0.005\%$ to $0.4\%$ of the states for the systems with $n \geq 24$ components (Table~\ref{tab:effort}). Second, $\gamma \mapsto LTMC^\gamma$ is piecewise constant, with breakpoints at the distinct values of $m(\mathbf x)/m(\mathbf 1)$, and all of them are evaluated in a single forward pass. Algorithm~\ref{alg}, a backward pass from $\mathbf 0$ to $\mathbf 1$, gives $m(\mathbf x)$ and $\mathbf W^{\mathbf x}$, and hence the summands of~\eqref{eq:TN gamma}--\eqref{eq:P gamma}, at every state; these summands do not depend on $\gamma$. As $\gamma$ decreases from $1$ toward $0$, the set $\mathcal C_\gamma$ grows from the states with $m(\mathbf x) = m(\mathbf 1)$ to all working states, and its states enter in order of decreasing MTTF, which, since the MTTF is monotone, goes from $\mathbf 1$ toward the failed states: no state enters before a larger one. Hence sorting the classes by decreasing MTTF and accumulating their summands gives $T^\gamma$, $N^\gamma$ and $P^\gamma$ at \emph{every} breakpoint as partial sums. These three quantities do not depend on the costs, so that $LTMC^\gamma$ is then available for all thresholds and any costs at the price of a ratio per breakpoint. The curve $\gamma \mapsto LTMC^\gamma$ need not be \emph{unimodal}, i.e., nonincreasing up to some threshold and nondecreasing beyond it, so that once the cost increases it does not decrease again (Section~\ref{subsec:results thresholds}). Hence a search that stops when the cost first increases can miss the best threshold, and evaluating all breakpoints is what makes the best threshold exact.
The weights in~\eqref{eq:E Samaniego sub} and~\eqref{def:tau} are computed with nonnegative terms only: $f_w$ and, for a system of $w$ components, the probability $f^{(w)}_{w-k}$ that its number of working components ever takes the value $w-k$, follow from the forward recursion $f^{(w)}_{v} = \sum_{v' > v} f^{(w)}_{v'} P_{v', v}$ with $f^{(w)}_w = 1$, and then $\mathbb{E} T_{k+1:w} - \mathbb{E} T_{k:w} = f^{(w)}_{w-k} / \nu_{w-k}$.

The monotonicity of the MTTF does not follow from that of $\Phi$ alone, since under Assumptions~(A)--(B) the failure rate of each working component may depend on how many components work. It holds for the L\'evy-frailty Marshall-Olkin distribution~\cite{mai2009levy} of Section~\ref{subsec:models}, which we use in the experiments, and in which each component fails when a common L\'evy subordinator exceeds its own exponentially distributed stress tolerance.

\begin{lemma}\label{lemma:monotone}
If $(T_1, \ldots, T_n)$ follows a L\'evy-frailty Marshall-Olkin distribution, then the MTTF is monotone.
\end{lemma}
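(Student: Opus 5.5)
The plan is a pathwise coupling built on the stochastic representation of the L\'evy-frailty Marshall-Olkin distribution~\cite{mai2009levy}. Let $\Lambda=(\Lambda_t)_{t\ge0}$ be a L\'evy subordinator with Laplace exponent $\Psi$, and let $E_1,\ldots,E_n$ be iid unit exponential random variables independent of $\Lambda$. Set $T_i=\inf\{t:\Lambda_t>E_i\}$. For a state $\mathbf x$, write $A^{\mathbf x}=\{i:x_i=1\}$. I would define the process started at $\mathbf x$ as $\mathbf y^{\mathbf x}(t)=(x_i\,1_{\{E_i>\Lambda_t\}})_i$ and its failure time as $\sigma^{\mathbf x}=\inf\{t:\Phi(\mathbf y^{\mathbf x}(t))=0\}$. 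The proof then has two steps: identify $\mathbb E\sigma^{\mathbf x}$ with $m(\mathbf x)$, and compare $\sigma^{\mathbf x'}$ with $\sigma^{\mathbf x}$ pathwise.

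\textbf{Identification step.} I must show that $\mathbf y^{\mathbf x}$ is a Markov chain on $\{0,1\}^n$ with the dynamics used in~\eqref{eq:mttf}. That is, from a state with $w$ working components it must wait an exponential time of rate $\nu_w$, jump to a uniformly chosen sub-state with $w'$ working components with probability $P_{w,w'}$, and use the same $\nu_w$ and $P_{w,w'}$ as the $n$-component model.

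Two facts give this. First, the vector $(T_i)_{i\in A}$ is itself L\'evy-frailty Marshall-Olkin of dimension $|A|$ with the same $\Psi$. Its survivor-count rates are explicit: the rate at which a given set of $j$ out of $w$ survivors fails together is $\sum_{l=0}^{j-1}(-1)^l\binom{j-1}{l}\bigl(\Psi(w-j+l+1)-\Psi(w-j+l)\bigr)$, a quantity depending only on $w$ and $j$. Hence $\nu_w=\Psi(w)$ and $P_{w,w'}$ do not depend on $n$. Second, for the Markov property, fix a time $s$ and condition on the survivors at $s$. By memorylessness, their residual thresholds $E_i-\Lambda_s$ are iid unit exponentials. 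By the independent increments of $\Lambda$, the process $\Lambda_{s+\cdot}-\Lambda_s$ is a copy of $\Lambda$ independent of $\mathcal F_s$.

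Exchangeability then makes the jump target uniform among sub-states of the given size. Since $\mathbf y^{\mathbf x}$ follows the chain whose mean absorption time in $\{\Phi=0\}$ is characterized by the linear system~\eqref{eq:mttf}, and that system has a unique solution because the chain is strictly decreasing, we get $\mathbb E\sigma^{\mathbf x}=m(\mathbf x)$.

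\textbf{Coupling step.} Use the same $\Lambda$ and $E_i$ for every starting state. If $\mathbf x'\le\mathbf x$, then $\mathbf y^{\mathbf x'}(t)\le\mathbf y^{\mathbf x}(t)$ componentwise for all $t$. Monotonicity of $\Phi$ gives $\Phi(\mathbf y^{\mathbf x'}(t))\le\Phi(\mathbf y^{\mathbf x}(t))$, so $\sigma^{\mathbf x'}\le\sigma^{\mathbf x}$ almost surely. Taking expectations yields $m(\mathbf x')\le m(\mathbf x)$; the case $\Phi(\mathbf x')=0$ is trivial since then $m(\mathbf x')=0$.

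\textbf{Main obstacle.} I expect the identification step to be the crux, specifically the projectivity, or consistency, of the family: the restricted process must have exactly the rates $\nu_{|\mathbf x|}$ and $P_{|\mathbf x|,\cdot}$ appearing in~\eqref{eq:mttf}. This is precisely what fails for general models satisfying (A)--(B), where rates may depend on $n$. I would establish it by computing the joint survival function of the $|A|$-subvector, $\mathbb E e^{-\sum_{i\in A}\ldots}$, which equals $\exp\{-\sum_k \Psi(\cdot)\ldots\}$. This depends on $A$ only through $|A|$ and has the same form as the full vector, which is the LFMO law of dimension $|A|$.
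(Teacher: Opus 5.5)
Your proposal is correct and follows essentially the same route as the paper: you couple the starts at $\mathbf x' \le \mathbf x$ through one common vector of LFMO residual lifetimes, build it from the frailty representation $(\Lambda, E_i)$, and conclude pathwise from the monotonicity of $\Phi$ before taking expectations. The paper uses the same coupling, obtained from memorylessness and from the fact that sub-vectors are LFMO with the same $\Psi$. Your identification step, checking that the coupled process has the rates $\nu_w=\Psi(w)$ and $P_{w,w'}$ of~\eqref{eq:mttf} independently of $n$, only makes explicit what the paper states in one sentence.
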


The proof, in Appendix~\ref{app:proofs}, uses that, by the memoryless property, the components working at a state behave as a smaller instance of the same model. This fails for sequential order statistics with exponential baseline~\cite{kamps1995concept}, in which components fail one at a time and each of the $w$ working ones fails at rate $\nu_w/w$. For instance, take $n=3$ and $\nu_w = c\,w^2$, with $c = 11/18$ so that each component has unit mean lifetime: if only component~1 is relevant, then $m(1,1,1) = 1 < m(1,0,0) = 1/c \approx 1.64$, since component~1 fails at rate $3c$ while the three components work, but at rate $c$ once it works alone. More generally, the MTTF of this power-trend model $\nu_w = c\,w^q$ is not monotone for any $q > 1$, i.e., whenever the failure rate of each working component decreases as the others fail.

\subsection{Optimal state-based policy}\label{subsec:optimal state}

Under Assumptions~(A) and~(B), and for systems of moderate size, the MTTF-based policies $\pi^\gamma$ can be benchmarked against the optimal policy $\pi^*$ that minimizes the long-term mean cost~\eqref{eq:LTMC pi} over all state-based policies $\pi$ that repair at every failed state ($\Phi(\mathbf x)=0$) and continue at $\mathbf 1$.
With the repair cost $\repcost(\mathbf x)$ of~\eqref{eq:cost setup}, the optimal policy $\pi^*$ and its long-term mean cost $v^\ast_{\mathrm{state}} := LTMC^{\pi^*}$ are obtained with the following iterative algorithm, which is policy iteration for average-cost semi-Markov decision processes~\cite{puterman1994markov,tijms2003first}, or equivalently, Dinkelbach's method for fractional programs~\cite{dinkelbach1967nonlinear}:
\begin{enumerate}
    \item Initialize $k:=0$ and any $g_0 > 0$. Under Assumption~(C), a natural choice is $g_0 := \min_\gamma LTMC^\gamma$.
    \item For all states $\mathbf x$, in increasing order of $|\mathbf x|$, compute:
    \begin{itemize}
      \item if $\Phi(\mathbf x)=0$, fix $h(\mathbf x) := \repcost(\mathbf x)$ and $\pi_{k}(\mathbf x) := \actRep$;
      \item otherwise, if $\mathbf x = \mathbf 1$, fix $\pi_{k}(\mathbf x) := \actCont$ and $h(\mathbf x) := - g_k / \nu_{|\mathbf x|} + \sum_{\mathbf x' \, : \, \mathbf x' < \mathbf x} h(\mathbf x') P_{|\mathbf x|, |\mathbf x'|} / \binom{|\mathbf x|}{|\mathbf x|-|\mathbf x'|}$;
      \item otherwise, fix
      \begin{align*}
        & h(\mathbf x) := \min\Big\{ \repcost(\mathbf x),\; -\frac{g_k}{\nu_{|\mathbf x|}} \\
        & \qquad\qquad\qquad\qquad + \sum_{\mathbf x' \, : \, \mathbf x' \leq \mathbf x}  \frac{h(\mathbf x') P_{|\mathbf x|, |\mathbf x'|}}{\binom{|\mathbf x|}{|\mathbf x|-|\mathbf x'|}} \Big\},
      \end{align*}
      and $\pi_{k}(\mathbf x) := \actRep$ if the minimum is attained by $\repcost(\mathbf x)$, and $\pi_{k}(\mathbf x) := \actCont$ otherwise.
    \end{itemize}
    
    \item For the obtained policy $\pi_k$, solve the linear systems~\eqref{sys:m rep} and~\eqref{sys:c rep} for $m^{\pi_k}_{\mathrm{rep}}$ and $c^{\pi_k}_{\mathrm{rep}}$, respectively, and set $g_{k+1} := c^{\pi_k}_{\mathrm{rep}}(\mathbf 1) / m^{\pi_k}_{\mathrm{rep}}(\mathbf 1)$.
    \item If $h(\mathbf 1) = 0$, stop and return $\pi^* := \pi_k$ and $v^\ast_{\mathrm{state}} := g_{k+1}$; otherwise set $k := k+1$ and go to step 2.
\end{enumerate}
By Assumption~(B), $P_{w,w'}>0$ only for $w'<w$, so the recursion in step~2 is well defined and yields $h(\mathbf 1) = F(g_k)$, where $F(g) := \min_\pi \{ c^\pi_{\mathrm{rep}}(\mathbf 1) - g\, m^\pi_{\mathrm{rep}}(\mathbf 1)\}$, see~\cite{dinkelbach1967nonlinear}.
Moreover, the algorithm terminates after finitely many iterations with $v^\ast_{\mathrm{state}} = \min_\pi c^\pi_{\mathrm{rep}}(\mathbf 1)/m^\pi_{\mathrm{rep}}(\mathbf 1)$.
Indeed, in~\cite{dinkelbach1967nonlinear} it is shown that, for a ratio with positive denominator, the optimal ratio is the unique root of the corresponding parametric function $F$, and the iterates $g_k$ decrease strictly until that root is reached.
Here, $F$ is concave, piecewise linear and strictly decreasing, since it is the minimum of finitely many (one per $\pi$) affine functions $g \mapsto c^\pi_{\mathrm{rep}}(\mathbf 1) - g\, m^\pi_{\mathrm{rep}}(\mathbf 1)$ with slopes $-m^\pi_{\mathrm{rep}}(\mathbf 1) \leq -1/\nu_n < 0$.

The sums over sub-states in step~2 need not be enumerated: as in Lemma~\ref{lemma:Wk}, the \emph{level sums} $H^{\mathbf x}_j := \sum_{\mathbf x' \leq \mathbf x \, : \, |\mathbf x'| = |\mathbf x| - j} h(\mathbf x')$ satisfy $H^{\mathbf x}_0 = h(\mathbf x)$ and $H^{\mathbf x}_j = \frac 1j \sum_{\mathbf x'' \in \mathrm{Sons}(\mathbf x)} H^{\mathbf x''}_{j-1}$ for $j = 1, \ldots, |\mathbf x|$, and the sum in step~2 equals $\sum_{j=1}^{|\mathbf x|} P_{|\mathbf x|, |\mathbf x|-j} H^{\mathbf x}_j / \binom{|\mathbf x|}{j}$. At a failed state all sub-states are failed, so $H^{\mathbf x}_j = \binom{|\mathbf x|}{j} \repcost(\mathbf y)$ for any sub-state $\mathbf y$ with $|\mathbf y| = |\mathbf x|-j$, since $\repcost$ depends on a failed state only through its number of working components. Hence each pass of the algorithm takes $\mathcal O(n^2 2^n)$ operations and keeps only two consecutive levels in memory, as Algorithm~\ref{alg}; the costs and lengths of the current policy needed in step~3 are obtained in the same pass, with the same level sums applied to~\eqref{sys:m rep} and~\eqref{sys:c rep}. Section~\ref{subsec:runtimes} reports the computing times.

\section{Signature dominance and the expressiveness of signature-based policies}\label{sec:dominance}

Algorithm~\ref{alg} computes not only the MTTF at each state, but the whole per-state tail signature $\bar{\mathbf S}^{\mathbf x}$, with $\bar S^{\mathbf x}_k = W^{\mathbf x}_k / \binom{|\mathbf x|}{k}$, from which the MTTF is derived. We now show that the signature is a strictly finer basis for repair decisions. By~\eqref{eq:E Samaniego sub},
\begin{align}\label{eq:m as weighted sum}
    m(\mathbf x) = \sum_{k=0}^{|\mathbf x|-1} \bar S^{\mathbf x}_k \,
    \delta_{k}^{|\mathbf x|},
    \text{ where } \delta_k^{w} := \mathbb{E} T_{k+1:w} - \mathbb{E} T_{k:w}.
\end{align}
The weights $\delta^w_k$ depend on the failure model but not on the system, and they are nonnegative, since the order-statistic means are nondecreasing in their rank; for instance, $\delta^w_k = 1/((w-k)\mu)$ for iid exponential failure times with rate $\mu$, and~\eqref{eq:T LFMO} gives them for the L\'evy-frailty Marshall-Olkin family.

For two states $\mathbf x$ and $\mathbf y$ with $|\mathbf x| = |\mathbf y|$, we say that \emph{$\mathbf x$ signature-dominates $\mathbf y$}, written $\mathbf x \succeq_{\mathrm s} \mathbf y$, if $\bar S^{\mathbf x}_k \geq \bar S^{\mathbf y}_k$ for all $k$; that is, if starting from $\mathbf x$ the system is at least as likely as starting from $\mathbf y$ to survive any given number of further failures.

\begin{theorem}\label{theo:dominance}
Suppose Assumptions~(A), (B) and~(C) hold, and let $\mathbf x, \mathbf y$ be two states with $|\mathbf x| = |\mathbf y| =: w$.
If $\mathbf x \succeq_{\mathrm s} \mathbf y$, then $m(\mathbf x) \geq m(\mathbf y)$. If, moreover, $\bar{\mathbf S}^{\mathbf x} \neq \bar{\mathbf S}^{\mathbf y}$ and $\delta^w_k > 0$ for all $k$, then $m(\mathbf x) > m(\mathbf y)$.
\end{theorem}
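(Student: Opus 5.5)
The plan is to use the linear representation~\eqref{eq:m as weighted sum} directly. Since $|\mathbf x| = |\mathbf y| = w$, both MTTFs are linear combinations of their tail signatures with the \emph{same} weights $\delta^w_k$. Taking the difference gives
\begin{align*}
m(\mathbf x) - m(\mathbf y) = \sum_{k=0}^{w-1} \left( \bar S^{\mathbf x}_k - \bar S^{\mathbf y}_k \right) \delta^w_k .
\end{align*}
The first claim then follows termwise. By signature dominance each bracket is nonnegative, and each weight $\delta^w_k = \mathbb{E} T_{k+1:w} - \mathbb{E} T_{k:w}$ is nonnegative. The weights are nonnegative because $T_{k:w} \leq T_{k+1:w}$ almost surely, with the convention $T_{0:w}=0$. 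Under Assumption~(C) these expectations are finite. Alternatively, the paper's expression $\delta^w_k = f^{(w)}_{w-k}/\nu_{w-k}$ exhibits nonnegativity directly.

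One point needs care before this: justifying that~\eqref{eq:E Samaniego sub} is valid at an arbitrary state $\mathbf x$. This requires that the subsystem induced by $\mathbf x$, with the $w$ components working there, again has exchangeable lifetimes. It also requires that the Samaniego-type decomposition~\eqref{eq:Samaniego} of~\cite{navarro2008application} applies with the order statistics $T_{k:w}$ of that subsystem. I would argue this from the remark after the assumptions. From state $\mathbf x$, the survivor count evolves as the same pure-death chain started at $w$. At each epoch, the components that fail are chosen uniformly among the working ones. Hence the joint law of the remaining lifetimes is exchangeable. The equality $m(\mathbf x)=0$ when $\Phi(\mathbf x)=0$ was already noted, so states of that kind pose no issue. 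Since the paper states~\eqref{eq:E Samaniego sub} as the formula used, I would take it as given and spend at most a sentence on this.

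For the strict inequality, assume $\bar{\mathbf S}^{\mathbf x} \neq \bar{\mathbf S}^{\mathbf y}$. Then some index $k^\ast$ has $\bar S^{\mathbf x}_{k^\ast} > \bar S^{\mathbf y}_{k^\ast}$. Note that $k^\ast \neq w$, since $\bar S^{\mathbf x}_w = \bar S^{\mathbf y}_w$: both equal $\Phi(\mathbf 0)=0$, because the only sub-state with $w$ failures is $\mathbf 0$. Therefore $k^\ast$ lies in $\{0,\ldots,w-1\}$ and appears in the sum. That term is strictly positive because $\delta^w_{k^\ast}>0$, and all other terms are nonnegative, so $m(\mathbf x) > m(\mathbf y)$.

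I expect no step to be a real obstacle. The only subtle point is the index check that the strict difference cannot sit only at $k=w$, which lies outside the sum. That is handled by $\Phi(\mathbf 0)=0$.
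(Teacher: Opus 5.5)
Your proposal is correct and follows the paper's proof: write $m(\mathbf x) - m(\mathbf y)$ as $\sum_{k=0}^{w-1} (\bar S^{\mathbf x}_k - \bar S^{\mathbf y}_k)\,\delta^w_k$ using the common weights, then argue that every term is nonnegative and that some term is strictly positive. Your check that the strict difference cannot occur only at $k=w$, because $\bar S^{\mathbf x}_w = \bar S^{\mathbf y}_w = \Phi(\mathbf 0) = 0$, is a detail the paper's one-line proof leaves implicit.
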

The proof is in Appendix~\ref{app:proofs}.

The condition $\delta^w_k > 0$ holds whenever the number of working components, started at $w$, takes every value with positive probability, since $\delta^w_k = f^{(w)}_{w-k} / \nu_{w-k}$ (Section~\ref{subsec:threshold}); this is the case for all the models of the experiments. The converse of Theorem~\ref{theo:dominance} fails, since the tail signatures of two states can cross; comparing states with different numbers of failed components, whose weight vectors have different lengths, is left for future work.

By Theorem~\ref{theo:dominance}, MTTF-threshold policies respect signature dominance: if $\mathbf x \succeq_{\mathrm s} \mathbf y$ and $\pi^\gamma$ repairs at $\mathbf x$, then it also repairs at $\mathbf y$. However, $\pi^\gamma$ acts on a single linear projection of the signature, onto $\boldsymbol\delta^{|\mathbf x|} = (\delta_0^{|\mathbf x|}, \ldots, \delta_{|\mathbf x|-1}^{|\mathbf x|})$, and therefore treats alike all states with the same MTTF, whatever their signatures. Consider instead the \emph{signature-threshold} policies $\pi_{\mathbf u, \gamma}$, which repair at $\mathbf x$ iff $\langle \mathbf u^{|\mathbf x|}, \bar{\mathbf S}^{\mathbf x} \rangle < \gamma \, \langle \mathbf u^{n}, \bar{\mathbf S}^{\mathbf 1} \rangle$, for a weight vector $\mathbf u^v$ for each number $v$ of working components and a threshold $\gamma$. The MTTF-threshold policies are those with $\mathbf u = \boldsymbol \delta$, while other weights can separate states with the same MTTF but different signatures; signature-threshold policies thus strictly refine MTTF-threshold policies.

This is a statement about expressiveness, not about cost. Its value can be bounded: every signature-threshold policy repairs at the failed states, where $\bar{\mathbf S}^{\mathbf x} = \mathbf 0$, and continues at $\mathbf 1$, so it is a state-based policy in the sense of Section~\ref{subsec:optimal state}. Hence the relative gap between the best MTTF-threshold policy and the optimal state-based policy $\pi^\ast$ bounds what \emph{any} signature-threshold policy, whatever its weights, can save over the best MTTF-threshold policy. Section~\ref{sec:experiments} computes this gap.

\section{Computational experiments}\label{sec:experiments}

In this section we first give examples of failure models that satisfy Assumptions~(A)--(C), and describe the failure models, test systems and costs of our experiments. We then report the computational effort of Algorithm~\ref{alg}, the behavior of the MTTF-threshold policies, and their comparison with $r$-out-of-$n$ and optimal state-based policies.

\subsection{Failure models satisfying Assumptions~(A)--(C)}\label{subsec:models}

Under Assumptions~(A)--(B) the survivor count is a pure-death continuous-time Markov chain, so a model is determined by the exit rates $\nu_w$ and the transition kernel $P_{w,w'}$, where $w := |\mathbf x|$ is the number of working components; our algorithm and policies use $\nu_w$, the kernel terms $P_{w, w'} / \binom{w}{w-w'}$ of~\eqref{eq:mttf}, and the increments $\mathbb{E} T_{k+1:w} - \mathbb{E} T_{k:w}$ of~\eqref{eq:E Samaniego sub}. This structure was first identified by Ross, Shahshahani and Weiss~\cite{ross1980number}. Without loss of generality, we calibrate every model so that each component has unit mean lifetime.

\subsubsection{iid exponentially distributed failures}
When the failure times are iid exponentially distributed with rate $\lambda>0$, components fail one at a time, so $P_{w,w'} / \binom{w}{w-w'} = \mathbf{1}_{\{w'=w-1\}}/w$, $\nu_w = w \lambda$ and $\mathbb{E} T_{k+1:w} - \mathbb{E} T_{k:w} = 1 / \nu_{w-k}$. We refer to this model as [Exp].

\subsubsection{L\'evy-frailty Marshall-Olkin (LFMO) model}
In this model of simultaneous failures, the components are subject to a common degradation process, a L\'evy subordinator $\mathbf L$, and each component fails when $\mathbf L$ exceeds its own stress tolerance, which is exponentially distributed with unit rate and independent of everything else; jumps of $\mathbf L$ can make several components fail at once.
The model is characterized by the \emph{Laplace exponent} $\Psi(x) := -\log \mathbb{E} \exp(- x L_1)$ of $\mathbf L$, and it holds that~\cite{lagos2026simple}
\begin{align}\label{eq:nu LFMO}
\nu_w = \Psi(w)
\end{align}
for $0 < w \leq n$ and
\begin{align}\label{eq:P LFMO}
P_{w_1, w_2} \frac{1}{\binom{w_1}{w_1-w_2}} & = \sum_{w=w_2}^{w_1} (-1)^{w-w_2+1} \binom{w_1-w_2}{w_1-w} \frac{\Psi(w)}{\Psi(w_1)}
\end{align}
for $0 \leq w_2 < w_1 \leq n$, and zero otherwise; moreover~\cite{barrera2020limit},
\begin{align}\label{eq:T LFMO}
\mathbb{E} T_{k+1:w} - \mathbb{E} T_{k:w} & = \sum_{i=w-k+1}^w \binom w k \binom k {w-i} (-1)^{i-w+k} \frac 1 {\Psi(i)} .
\end{align}
Hence only $\Psi(1), \ldots, \Psi(n)$ are needed, even for subordinators that cannot be simulated without bias.
The \emph{pure drift} $L_t = \mu t$, with $\Psi(x) = \mu x$, gives the model [Exp] with rate $\mu$.
A compound Poisson process with rate $\lambda$ and iid nonnegative jumps $J_j$ has $\Psi(x) = \lambda (1-\mathbb{E} [e^{-x J_1}])$; we denote by [LFMO-C] the case of exponentially distributed jumps with rate $\theta > 0$, for which $\Psi(x) = \lambda x / (x + \theta)$.
Among the subordinators with infinitely many jumps in finite time, gamma processes, model [LFMO-G], have $\Psi(x) = \beta \log (1+x/\eta)$ for $\beta, \eta > 0$; inverse Gaussian processes, model [LFMO-IG], have $\Psi(x) = \beta (\sqrt{ 2x+ \eta^2}-\eta)$ for $\beta, \eta > 0$; and stable subordinators, model [LFMO-S], have $\Psi(x)=x^\alpha$ for $\alpha$ in $(0,1)$.
Since the sum of independent subordinators is a subordinator whose Laplace exponent is the sum of theirs, a drift can be added to any of them, as a hazard idiosyncratic to each component, while the remaining terms model common shocks.

\begin{remark}[Inverse degree of simultaneity]\label{rmk:simultaneity}
The ratio $\Psi(w)/(w\,\Psi(1))$ admits a direct interpretation as an inverse degree of simultaneity.
Indeed, when there are $w$ working components, the mean number of components that fail at the next failure epoch is $w\,\Psi(1) / \Psi(w)$,
since $\sum_{w_2=0}^{w} (w-w_2) \, P_{w, w_2} = w \, (\Psi(1)-\Psi(0)) / \Psi(w)$ by~\eqref{eq:P LFMO}, with $\Psi(0)=0$.
Moreover, since $\Psi$ is a Bernstein function (see~\cite{schilling2012bernstein}), $x \mapsto \Psi(x)/x$ is non-increasing, so $\Psi(w)/(w\Psi(1))$ lies in $(0,1]$ and equals one exactly when $\Psi$ is linear, i.e., when the subordinator is a pure drift and components fail one at a time.
At the other extreme, when $x \to \infty$, $\Psi(x)/x$ decreases to the drift coefficient $\mu$ of $\mathbf L$, that is, to the share of a component's failure rate that is idiosyncratic rather than shock-driven.
We therefore use $\Psi(n)/(n\Psi(1))$ as a scalar summary of how strongly a given subordinator bundles failures.
Figure~\ref{fig:simultaneity one} plots $w \mapsto \Psi(w)/(w\Psi(1))$ for the failure models of the experiments.
\end{remark}

\subsubsection{Binomial failure rate and lethal-shock models}
In the binomial failure rate model~\cite{vesely1977estimating,atwood1986binomial}, which we denote by [BFR] and which originates in the multivariate exponential distribution of Marshall and Olkin~\cite{marshall1967multivariate}, each component fails on its own at rate $\lambda_{\mathrm{ind}}$, and common-cause shocks arrive at rate $\mu$ and fail each working component independently with probability $p$. In the lethal-shock model~\cite{atwood1986binomial}, which we denote by [LSh], components fail on their own at rate $\lambda_{\mathrm{ind}}$, and lethal shocks arrive at rate $\omega$ and fail all working components at once.
Both are of L\'evy-frailty Marshall-Olkin type, so~\eqref{eq:nu LFMO}--\eqref{eq:T LFMO} apply: [BFR] has $\Psi(x) = \lambda_{\mathrm{ind}}\,x + \mu\big(1-(1-p)^{x}\big)$, a drift plus a compound Poisson subordinator with deterministic jumps of size $\log(1/(1-p))$, where a shock that fails no component leaves the survivor count unchanged~\cite{atwood1986binomial}; and [LSh] has $\Psi(x) = \lambda_{\mathrm{ind}}\,x + \omega$ for $x > 0$, a drift with killing at rate $\omega$. The L\'evy-frailty class thus contains the standard common-cause models of probabilistic safety assessment.

\subsection{Failure models used in the experiments}\label{subsec:models exp}
In the experiments we use the L\'evy-frailty Marshall-Olkin models above, for which the MTTF is monotone (Lemma~\ref{lemma:monotone}): [Exp], and the six models [LFMO-C], [LFMO-G], [LFMO-IG], [LFMO-S], [BFR] and [LSh] with a common calibration.
Models with single failures other than [Exp], such as sequential order statistics~\cite{kamps1995concept} and load-sharing models~\cite{ross1984model}, are not used, since they do not admit the calibration below and the MTTF of some of them is not monotone (Section~\ref{subsec:threshold}).
Every model is normalized so that $\Psi(1)=1$; since each component's lifetime is exponentially distributed with rate $\Psi(1)$, one time unit is then the mean lifetime of a component, and all models differ only in how failures are grouped.
For [LFMO-C] and [BFR] we take driftless compound Poisson subordinators, i.e., $\mu = 0$ and $\lambda_{\mathrm{ind}} = 0$, respectively, so that, after the normalization, every model other than [Exp] has a single free parameter, which we fix through the pairwise beta factor.

\paragraph{Calibration by the pairwise beta factor}
In probabilistic safety assessment, the dependence between redundant components is most often quantified by the \emph{beta factor} of Fleming~\cite{fleming1975reliability}, still the default common-cause failure model in standards such as IEC~61508~\cite{iec61508part6}; see~\cite{mosleh1998guidelines}. The failure rate $\lambda$ of each component of a redundant group is split as $\lambda = \lambda_{\mathrm{ind}} + \lambda_{\mathrm{ccf}}$, into independent failures and common-cause events that fail the whole group, and the beta factor is the fraction $\lambda_{\mathrm{ccf}}/\lambda$.
For two components $i \neq j$ of a L\'evy-frailty Marshall-Olkin system, $\Pr(T_i > t, T_j > t) = e^{-t \Psi(2)}$ and $\Pr(T_i > t) = e^{-t \Psi(1)}$, so the pair follows the bivariate Marshall-Olkin distribution~\cite{marshall1967multivariate} in which each component fails alone at rate $\Psi(2) - \Psi(1)$ and both fail together at rate $2\Psi(1) - \Psi(2)$. We therefore define the \emph{pairwise beta factor}
\begin{align}\label{def:beta}
\beta := \frac{2 \Psi(1) - \Psi(2)}{\Psi(1)} = 2 - \Psi(2) \qquad \text{when } \Psi(1) = 1,
\end{align}
the fraction of the failure rate of a component due to events that also fail a given other component. It lies in $[0,1]$, since $\Psi$ is concave and non-decreasing with $\Psi(0)=0$, and $\beta = 0$ exactly for [Exp].
Calibrating every model to the same $\beta$ makes all of them agree on every \emph{pair} of components, so that they differ only in how simultaneity grows with the number of components involved. In fact, [LSh] is the beta-factor model applied to the whole system, with $\beta = \omega$; [BFR] replaces its lethal shock by a binomial one~\cite{atwood1986binomial}; and the four [LFMO] models extend this to random shock sizes and to infinitely many small shocks.
Table~\ref{tab:models} gives the resulting Laplace exponents and parameters; each model has a unique parameter for each $\beta$ in $(0,1)$, except [LFMO-IG], whose pairwise beta factor cannot exceed $2 - \sqrt 2 \approx 0.586$.
We use $\beta \in \{1\%, 5\%, 10\%, 20\%, 50\%\}$, and we call $\beta = 5\%$, $20\%$ and $50\%$ the \emph{realistic}, \emph{poorly separated} and \emph{stressed} scenarios: generic beta factors are of the order of $1\%$ to $10\%$ for well-designed redundant equipment~\cite{iec61508part6}, and larger values arise when the components share their environment, design or maintenance.
Together with [Exp], this gives $1 + 6 \times 5 = 31$ failure models.

By Remark~\ref{rmk:simultaneity}, the mean number $w/\Psi(w)$ of components that fail together when $w$ components work grows without bound for the driftless models: linearly in $w$ for [LFMO-C] and [BFR], as $w/\log w$ for [LFMO-G], as $\sqrt w$ for [LFMO-IG] and as $w^{1-\alpha}$ for [LFMO-S]; for [LSh], instead, it is bounded by $1/(1-\beta)$.
Figure~\ref{fig:simultaneity one} shows $\Psi(w)/w$ for all models. For the systems with $n = 30$ components, the mean number of components failing at the first failure epoch, $n/\Psi(n)$, ranges from $1.05$ ([LSh]) to $1.91$ ([BFR]) in the realistic scenario, from $1.24$ to $6.0$ in the poorly separated one, and from $1.94$ to $15.0$ in the stressed one: all models have the same pairwise dependence, but they produce very different batches of simultaneous failures in large systems.

\definecolor{cExp}    {HTML}{1F77B4}
\definecolor{cLFMOC}  {HTML}{9467BD}
\definecolor{cLFMOG}  {HTML}{2CA02C}
\definecolor{cLFMOIG} {HTML}{8C564B}
\definecolor{cLFMOS}  {HTML}{FF7F0E}
\definecolor{cBFR}    {HTML}{E377C2}
\definecolor{cLSh}    {HTML}{17BECF}

\begin{table*}[!t]

\centering
\caption{
Failure models used in the experiments, all of L\'evy-frailty Marshall-Olkin type and calibrated so that $\Psi(1)=1$, i.e., each component has unit mean lifetime.
For each model we give its Laplace exponent $\Psi(x)$ for $x>0$ (with $\Psi(0):=0$), its free parameter as a function of the pairwise beta factor $\beta = 2-\Psi(2)$ of~\eqref{def:beta}, its value in the realistic ($\beta=5\%$), poorly separated ($\beta=20\%$) and stressed ($\beta=50\%$) scenarios, and the inverse degree of simultaneity $\Psi(w)/w$ of Remark~\ref{rmk:simultaneity}.
All models except [Exp] and [LSh] are driftless.
}
\label{tab:models}
\small
\renewcommand{\arraystretch}{1.4}
\setlength{\tabcolsep}{4pt}
\begin{tabular}{@{}lllcccl@{}}
\toprule
& & & \multicolumn{3}{c}{Parameter} & \\
\cmidrule(lr){4-6}
Label & $\Psi(x)$ & Parameter & $\beta=5\%$ & $\beta=20\%$ & $\beta=50\%$ & $\Psi(w)/w$ \\
\midrule
$[$Exp$]$     & $x$ & --- & --- & --- & --- & $1$ \\
$[$LFMO-C$]$  & $\dfrac{x}{1+p(x-1)}$ & $p = \dfrac{\beta}{2-\beta}$ & $0.0256$ & $0.111$ & $0.333$ & $\dfrac{1}{1+p(w-1)}$ \\
$[$LFMO-G$]$  & $\dfrac{\log(1+x/\eta)}{\log(1+1/\eta)}$ & $\eta$: $\Psi(2) = 2-\beta$ & $18.51$ & $3.537$ & $0.618$ & $\dfrac{\log(1+w/\eta)}{w\log(1+1/\eta)}$ \\
$[$LFMO-IG$]$ & $\dfrac{\sqrt{2x+\eta^2}-\eta}{\sqrt{2+\eta^2}-\eta}$ & $\eta$: $\Psi(2) = 2-\beta$ & $4.188$ & $1.634$ & $0.289$ & $\dfrac{\sqrt{2w+\eta^2}-\eta}{w(\sqrt{2+\eta^2}-\eta)}$ \\
$[$LFMO-S$]$  & $x^\alpha$ & $\alpha = \log_2(2-\beta)$ & $0.963$ & $0.848$ & $0.585$ & $w^{\alpha-1}$ \\
$[$BFR$]$     & $\dfrac{1-(1-p)^x}{p}$ & $p = \beta$ & $0.05$ & $0.2$ & $0.5$ & $\dfrac{1-(1-p)^w}{p\,w}$ \\
$[$LSh$]$     & $(1-\theta)x + \theta$ & $\theta = \beta$ & $0.05$ & $0.2$ & $0.5$ & $1-\theta+\dfrac{\theta}{w}$ \\
\bottomrule
\end{tabular}
\end{table*}

\pgfmathdeclarefunction{kC}{2}{\pgfmathparse{1/(1+#2*(#1-1))}}
\pgfmathdeclarefunction{kG}{2}{\pgfmathparse{ln(1+#1/#2)/(#1*ln(1+1/#2))}}
\pgfmathdeclarefunction{kIG}{2}{\pgfmathparse{(sqrt(2*#1+#2*#2)-#2)/(#1*(sqrt(2+#2*#2)-#2))}}
\pgfmathdeclarefunction{kS}{2}{\pgfmathparse{#1^(#2-1)}}
\pgfmathdeclarefunction{kBFR}{2}{\pgfmathparse{(1-(1-#2)^#1)/(#2*#1)}}
\pgfmathdeclarefunction{kLSh}{2}{\pgfmathparse{1-#2+#2/#1}}
\def\pC{{0.025641,0.111111,0.333333}}
\def\pG{{18.5086,3.53728,0.618034}}
\def\pIG{{4.1879,1.63384,0.288675}}
\def\pS{{0.963474,0.847997,0.584963}}
\def\pBFR{{0.05,0.2,0.5}}
\def\pLSh{{0.05,0.2,0.5}}
\newcommand{\yehticks}{\addplot[only marks, mark=|, mark size=2pt, black, forget plot]
  coordinates {(5,0) (8,0) (9,0) (12,0) (13,0) (14,0) (21,0) (22,0) (23,0) (24,0) (26,0) (27,0) (30,0)};}
\pgfplotsset{kappaaxis/.style={
  xmin=1, xmax=30, ymin=0, ymax=1.05, domain=1:30, samples=59,
  xtick={1,5,10,15,20,25,30}, ytick={0,0.25,0.5,0.75,1},
  xlabel={$w$}, ylabel={$\Psi(w)/w$},
  grid=both, grid style={line width=0.3pt, draw=gray!25},
  tick label style={font=\footnotesize}, label style={font=\footnotesize}}}
\newcommand{\kappalines}[3]{%
  \pgfmathsetmacro{\kLo}{#3[0]}\pgfmathsetmacro{\kMid}{#3[1]}\pgfmathsetmacro{\kHi}{#3[2]}%
  \addplot[#1, line width=1pt, dashed, forget plot] {#2(x,\kLo)};
  \addplot[#1, line width=1pt, forget plot] {#2(x,\kMid)};
  \addplot[#1, line width=1.2pt, densely dotted, forget plot] {#2(x,\kHi)};
}
\begin{figure}[!t]

\centering
\begin{tikzpicture}
\begin{axis}[
  kappaaxis, width=0.88\linewidth, height=0.62\linewidth,
  legend style={font=\footnotesize, at={(0.5,-0.3)}, anchor=north, legend columns=4,
                draw=gray!50, /tikz/every even column/.append style={column sep=4pt}},
]
\addplot[cExp, line width=0.7pt] {1}; \addlegendentry{[Exp]}
\kappalines{cLSh}{kLSh}{\pLSh}      \addlegendimage{cLSh, line width=1pt}    \addlegendentry{[LSh]}
\kappalines{cLFMOS}{kS}{\pS}        \addlegendimage{cLFMOS, line width=1pt}  \addlegendentry{[LFMO-S]}
\kappalines{cLFMOIG}{kIG}{\pIG}     \addlegendimage{cLFMOIG, line width=1pt} \addlegendentry{[LFMO-IG]}
\kappalines{cLFMOG}{kG}{\pG}        \addlegendimage{cLFMOG, line width=1pt}  \addlegendentry{[LFMO-G]}
\kappalines{cLFMOC}{kC}{\pC}        \addlegendimage{cLFMOC, line width=1pt}  \addlegendentry{[LFMO-C]}
\kappalines{cBFR}{kBFR}{\pBFR}      \addlegendimage{cBFR, line width=1pt}    \addlegendentry{[BFR]}
\yehticks
\end{axis}
\end{tikzpicture}

\vspace{1ex}
{\footnotesize
\tikz[baseline=-0.6ex]\draw[black, line width=1pt, dashed] (0,0) -- (0.45,0); $\beta=5\%$\ \
\tikz[baseline=-0.6ex]\draw[black, line width=1pt] (0,0) -- (0.45,0); $\beta=20\%$\ \
\tikz[baseline=-0.6ex]\draw[black, line width=1.2pt, densely dotted] (0,0) -- (0.45,0); $\beta=50\%$\par}
\caption{
Inverse degree of simultaneity $\Psi(w)/w$ (Remark~\ref{rmk:simultaneity}) of the failure models of Table~\ref{tab:models}, as a function of the number $w$ of working components; its reciprocal $w/\Psi(w)$ is the mean number of components that fail together at the next failure epoch. Each model has its own colour, and the dashed, solid and densely dotted lines are the realistic ($\beta=5\%$), poorly separated ($\beta=20\%$) and stressed ($\beta=50\%$) scenarios, respectively.
The thin line $\Psi(w)/w \equiv 1$ is [Exp], where failures are never simultaneous.
All models coincide at $w=1$ and, at each level of $\beta$, at $w=2$, since $\Psi(2)/2 = 1-\beta/2$.
They differ in how fast simultaneity grows with the number of working components.
The ticks on the horizontal axis mark the sizes $n$ of the systems of Figure~\ref{fig:yeh}.
}
\label{fig:simultaneity one}
\end{figure}

\subsection{Systems used in the experiments}\label{subsec:systems}

We test our results on the 22 benchmark networks that Yeh et al.~\cite{yeh2024new} use to compare algorithms for finding all minimal cuts, shown in Figure~\ref{fig:yeh}.
Each of them is a \emph{two-terminal network}: a graph with a source $s$ and a terminal $t$, whose edges are the components of the system and whose nodes do not fail. The system works iff $s$ and $t$ are connected by working edges, i.e., $\Phi(\mathbf x) = 1$ iff the edges $i$ with $x_i = 1$ contain a path from $s$ to $t$. This structure function is monotone, with $\Phi(\mathbf 0)=0$ and $\Phi(\mathbf 1)=1$, and the minimal path sets are the edge sets of the simple paths from $s$ to $t$.
Hence the number $n$ of components of a system is its number of \emph{edges}, not of nodes; it ranges from $5$ to $30$, and $2^n$ from $32$ to about $1.07 \times 10^9$.

We use the edge lists of all 22 networks as they appear in Figure~7 of~\cite{yeh2024new}, which leads to three features worth noting.
First, networks 6 and 7 are drawn with arrows, and we read every edge of these two networks as an arc that can only be traversed in its direction. Network 6 then has four irrelevant components, the arcs leaving node~5, which cannot be reached from $s$, so it is semi-coherent but not coherent; it has $13$ minimal cuts, and network 7 has $110$, whereas, read as undirected graphs, they would have $26$ and $133$, the value reported for network 7 in Table~9 of~\cite{yeh2024new}.
All other networks are undirected, and all their components are relevant.
Second, network 21 is network 13 redrawn with the nodes renumbered, as noted in Section~6.3 of~\cite{yeh2024new}: the two are isomorphic, with the terminals fixed, so they have the same signatures, MTTFs and policies, and we use network 21 only as a check of our computations (all its results coincide with those of network 13).
Third, network 22 is network 18 plus one edge, so it has $n = 23$ components; its drawing in Figure~7 of~\cite{yeh2024new} indeed has 23 edges, although their Table~9 reports 22.
For the other networks, the numbers of nodes, edges and minimal cuts we obtain coincide with Table~9 of~\cite{yeh2024new}, except for network 10, whose row there repeats that of network 15, and network 18, for which we find 219 minimal cuts instead of 214, even though our edge list coincides with their figure.

\newcommand{\netsc}{0.48}                     %
\newlength{\netnd}\setlength{\netnd}{2.1mm}   %
\tikzset{
  net/.style={
    nd/.style={circle, draw=black!75, line width=0.25pt, fill=blue!7,
               inner sep=0pt, minimum size=\netnd,
               font=\fontsize{3.6}{4}\selectfont},
    ed/.style={draw=black!75, line width=0.25pt},
    st/.style={font=\fontsize{5}{5.5}\selectfont, inner sep=0pt}
  },
  netb/.style={nd/.append style={minimum size=1.25\netnd,
               font=\fontsize{4.3}{4.8}\selectfont}}
}

\captionsetup[subfigure]{labelformat=simple}
\renewcommand{\thesubfigure}{(\arabic{subfigure})}
\begin{figure}[!t]
\centering

\begin{subfigure}[b]{0.245\linewidth}\centering
\begin{tikzpicture}[net,scale=\netsc*1.0]
  \foreach \n/\x/\y in {1/0/0, 2/1/0.8, 3/1/-0.8, 4/2/0}
     \node[nd] (n\n) at (\x,\y) {\n};
  \foreach \a/\b in {1/2,1/3,2/3,2/4,3/4} \draw[ed] (n\a)--(n\b);
  \node[st] at ([xshift=-3.2mm,yshift=-3.2mm]n1) {s};
  \node[st] at ([xshift=3.2mm,yshift=-3.2mm]n4) {t};
\end{tikzpicture}
\caption{$n=5$}\label{fig:tn1}%
\end{subfigure}\hfill
\begin{subfigure}[b]{0.245\linewidth}\centering
\begin{tikzpicture}[net,scale=\netsc*1.0]
  \foreach \n/\x/\y in {1/0/0, 2/1/0.9, 3/1/-0.9, 4/2.2/0.9, 5/2.2/-0.9, 6/3.2/0}
     \node[nd] (n\n) at (\x,\y) {\n};
  \foreach \a/\b in {1/2,1/3,2/4,2/5,3/5,4/5,4/6,5/6} \draw[ed] (n\a)--(n\b);
  \node[st] at ([xshift=-3.2mm,yshift=-3.2mm]n1) {s};
  \node[st] at ([xshift=3.2mm,yshift=-3.2mm]n6) {t};
\end{tikzpicture}
\caption{$n=8$}\label{fig:tn2}%
\end{subfigure}\hfill
\begin{subfigure}[b]{0.245\linewidth}\centering
\begin{tikzpicture}[net,scale=\netsc*1.0]
  \foreach \n/\x/\y in {1/0/0, 2/1/1, 3/1.2/-1, 4/2.4/0, 5/3.6/0}
     \node[nd] (n\n) at (\x,\y) {\n};
  \foreach \a/\b in {1/2,1/3,1/4,2/4,2/5,3/4,3/5,4/5} \draw[ed] (n\a)--(n\b);
  \node[st] at ([xshift=-3.2mm,yshift=-3.2mm]n1) {s};
  \node[st] at ([xshift=3.2mm,yshift=-3.2mm]n5) {t};
\end{tikzpicture}
\caption{$n=8$}\label{fig:tn3}%
\end{subfigure}\hfill
\begin{subfigure}[b]{0.245\linewidth}\centering
\begin{tikzpicture}[net,scale=\netsc*1.0]
  \foreach \n/\x/\y in {1/0/0, 2/1/0.9, 3/1/-0.9, 4/2.4/0.9, 5/2.4/-0.9, 6/3.3/0}
     \node[nd] (n\n) at (\x,\y) {\n};
  \foreach \a/\b in {1/2,1/3,2/3,2/4,3/4,3/5,4/5,4/6,5/6} \draw[ed] (n\a)--(n\b);
  \node[st] at ([xshift=-3.2mm,yshift=-3.2mm]n1) {s};
  \node[st] at ([xshift=3.2mm,yshift=-3.2mm]n6) {t};
\end{tikzpicture}
\caption{$n=9$}\label{fig:tn4}%
\end{subfigure}

\vspace{2ex}

\begin{subfigure}[b]{0.245\linewidth}\centering
\begin{tikzpicture}[net,scale=\netsc*0.90]
  \foreach \n/\x/\y in {1/0/0, 2/0.5/-1, 3/1.5/1, 4/1.5/0, 5/1.5/-1,
                        6/2.4/0, 7/3.4/1, 8/3.4/-1, 9/4.2/0}
     \node[nd] (n\n) at (\x,\y) {\n};
  \foreach \a/\b in {1/2,1/3,2/5,3/4,3/7,4/5,5/6,5/8,6/7,7/8,7/9,8/9}
     \draw[ed] (n\a)--(n\b);
  \node[st] at ([xshift=-3.2mm,yshift=-3.2mm]n1) {s};
  \node[st] at ([xshift=3.2mm,yshift=-3.2mm]n9) {t};
\end{tikzpicture}
\caption{$n=12$}\label{fig:tn5}%
\end{subfigure}\hfill
\begin{subfigure}[b]{0.245\linewidth}\centering
\begin{tikzpicture}[net,scale=\netsc*0.90]
  \foreach \n/\x/\y in {1/0/0, 2/1.7/1.5, 3/1.7/-1.5, 4/1.5/0,
                        5/2.5/0.7, 6/2.6/-0.2, 7/4.2/0}
     \node[nd] (n\n) at (\x,\y) {\n};
  \foreach \a/\b in {1/2,1/3,1/4,2/7,3/4,3/6,3/7,4/2,4/6,5/2,5/4,5/6,5/7,6/7}
     \draw[ed,-{Stealth[length=1.1mm,width=0.9mm]}] (n\a)--(n\b);%
  \node[st] at ([xshift=-3.2mm,yshift=-3.2mm]n1) {s};
  \node[st] at ([xshift=3.2mm,yshift=-3.2mm]n7) {t};
\end{tikzpicture}
\caption{$n=14$}\label{fig:tn6}%
\end{subfigure}\hfill
\begin{subfigure}[b]{0.245\linewidth}\centering
\begin{tikzpicture}[net,scale=\netsc*0.80]
  \foreach \n/\x/\y in {1/0/1.5, 2/0.6/3.1, 3/1.4/2.4, 4/1.4/1.5, 5/1.85/0.4,
                        6/2.25/3.45, 7/2.65/1.5, 8/2.65/2.4, 9/3.95/3.25,
                        10/3.75/0.4, 11/4.8/2.1}
     \node[nd] (n\n) at (\x,\y) {\n};
  \foreach \a/\b in {1/2,1/3,1/4,1/5,2/6,3/2,3/6,3/8,4/3,4/7,4/8,5/4,5/7,5/10,
                     6/9,7/8,9/3,9/11,8/11,10/8,10/11}
     \draw[ed,-{Stealth[length=1.1mm,width=0.9mm]}] (n\a)--(n\b);%
  \node[st] at ([xshift=-2.6mm,yshift=-3.2mm]n1) {s};
  \node[st] at ([xshift=3.2mm,yshift=-3.0mm]n11) {t};
\end{tikzpicture}
\caption{$n=21$}\label{fig:tn7}%
\end{subfigure}\hfill
\begin{subfigure}[b]{0.245\linewidth}\centering
\begin{tikzpicture}[net,scale=\netsc*0.62]
  \foreach \n/\x/\y in {1/0/2.27, 2/1.6/3.58, 3/1.6/0.52, 4/3.62/3.6, 5/2.76/2.27,
                        6/3.62/1.48, 7/4.66/2.65, 8/5.53/0.52, 9/6.35/2.65}
     \node[nd] (n\n) at (\x,\y) {\n};
  \foreach \a/\b in {1/2,1/5,1/3,2/4,4/5,4/6,4/9,5/6,6/7,6/9,7/9,3/8,8/9}
     \draw[ed] (n\a)--(n\b);
  \node[st] at ([xshift=-3.6mm,yshift=-3.4mm]n8) {s};
  \node[st] at ([xshift=-4.3mm,yshift=0.2mm]n7) {t};
\end{tikzpicture}
\caption{$n=13$}\label{fig:tn8}%
\end{subfigure}

\vspace{2ex}

\begin{subfigure}[b]{0.245\linewidth}\centering
\begin{tikzpicture}[net,scale=\netsc*0.83]
  \foreach \n/\x/\y in {1/0/0.92, 2/0.97/1.84, 3/0.97/0.08, 4/2.47/1.84,
                        5/3.11/0.92, 6/3.11/0.08, 7/3.69/1.84, 8/4.58/0.92}
     \node[nd] (n\n) at (\x,\y) {\n};
  \foreach \a/\b in {1/2,1/3,2/3,2/4,3/4,3/6,4/5,4/7,5/6,5/7,6/8,7/8}
     \draw[ed] (n\a)--(n\b);
  \node[st] at ([xshift=-3.2mm,yshift=-3.2mm]n1) {s};
  \node[st] at ([xshift=3.2mm,yshift=-3.2mm]n8) {t};
\end{tikzpicture}
\caption{$n=12$}\label{fig:tn9}%
\end{subfigure}\hfill
\begin{subfigure}[b]{0.245\linewidth}\centering
\begin{tikzpicture}[net,scale=\netsc*0.78]
  \foreach \n/\x/\y in {1/0/1.48, 2/1.04/2.68, 3/1.04/0.34, 4/2.72/2.68,
                        6/3.45/1.48, 7/3.45/0.34, 8/4.07/2.68, 9/5.04/1.48}
     \node[nd] (n\n) at (\x,\y) {\n};
  \foreach \a/\b in {1/2,1/3,2/3,2/4,3/4,3/7,4/6,4/8,6/7,6/8,7/9,8/9}
     \draw[ed] (n\a)--(n\b);
  \node[st] at ([xshift=-3.4mm,yshift=1.6mm]n2) {s};
  \node[st] at ([xshift=3.2mm,yshift=-3.2mm]n9) {t};
\end{tikzpicture}
\caption{$n=12$}\label{fig:tn10}%
\end{subfigure}\hfill
\begin{subfigure}[b]{0.245\linewidth}\centering
\begin{tikzpicture}[net,scale=\netsc*1.15]
  \foreach \n/\x/\y in {1/0/1.05, 2/0.93/1.05, 3/0.93/0.33, 4/1.28/2.11,
                        5/2.02/1.05, 6/2.02/0.33, 7/2.96/1.05}
     \node[nd] (n\n) at (\x,\y) {\n};
  \foreach \a/\b in {1/4,1/2,1/3,2/3,2/5,2/6,3/6,4/5,4/7,5/6,5/7,6/7}
     \draw[ed] (n\a)--(n\b);
  \node[st] at ([xshift=-3.2mm,yshift=-3.0mm]n1) {s};
  \node[st] at ([xshift=2.6mm,yshift=-3.2mm]n7) {t};
\end{tikzpicture}
\caption{$n=12$}\label{fig:tn11}%
\end{subfigure}\hfill
\begin{subfigure}[b]{0.245\linewidth}\centering
\begin{tikzpicture}[net,scale=\netsc*1.05]
  \foreach \n/\x/\y in {1/0/1.31, 2/1.22/2.28, 3/1.22/1.31, 4/1.22/0.38,
                        5/2.27/2.28, 6/2.27/1.31, 7/2.27/0.38, 8/3.38/1.31}
     \node[nd] (n\n) at (\x,\y) {\n};
  \foreach \a/\b in {1/2,1/3,1/4,2/3,2/5,3/4,3/6,4/7,5/6,5/8,6/7,6/8,7/8}
     \draw[ed] (n\a)--(n\b);
  \node[st] at ([xshift=-3.2mm,yshift=-3.2mm]n1) {s};
  \node[st] at ([xshift=3.2mm,yshift=-3.2mm]n8) {t};
\end{tikzpicture}
\caption{$n=13$}\label{fig:tn12}%
\end{subfigure}

\vspace{2ex}

\begin{subfigure}[b]{0.245\linewidth}\centering
\begin{tikzpicture}[net,scale=\netsc*0.85]
  \foreach \n/\x/\y in {1/0/1.81, 2/0.66/3.39, 3/0.62/0.33, 4/1.48/3.91,
                        5/1.48/2.54, 6/1.48/1.81, 7/1.48/1.10, 8/2.86/3.91,
                        9/2.37/2.54, 10/2.37/1.81, 11/2.37/1.10, 12/1.97/0.33,
                        13/3.42/2.54, 14/3.06/1.50, 15/3.42/0.33, 16/4.55/2.54}
     \node[nd] (n\n) at (\x,\y) {\n};
  \foreach \a/\b in {1/2,1/3,1/5,1/6,1/7,2/4,2/5,4/5,4/8,5/6,5/8,5/9,
                     6/7,6/10,6/13,7/3,7/10,7/11,7/12,7/15,8/16,9/13,
                     10/11,10/13,12/15,13/14,13/15,13/16,14/15,15/16}
     \draw[ed] (n\a)--(n\b);
  \node[st] at ([xshift=-2.4mm,yshift=-3.2mm]n1) {s};
  \node[st] at ([xshift=2.2mm,yshift=-3.2mm]n16) {t};
\end{tikzpicture}
\caption{$n=30$}\label{fig:tn13}%
\end{subfigure}\hfill
\begin{subfigure}[b]{0.245\linewidth}\centering
\begin{tikzpicture}[net,nd/.append style={minimum size=0.80\netnd},scale=\netsc*0.60]
  \foreach \n/\x/\y in {1/0/1.73, 2/0.75/3.12, 3/0.75/0.42, 4/1.30/1.70,
                        5/2.12/3.12, 6/2.12/1.70, 7/2.12/0.42, 8/3.14/3.12,
                        9/2.95/2.19, 10/2.96/0.42, 11/3.58/2.19, 12/3.58/0.42,
                        13/4.50/3.12, 14/4.30/2.19, 15/4.30/0.42, 16/5.52/3.12,
                        17/5.02/2.19, 18/5.00/0.42, 19/5.72/0.42, 20/6.42/0.42,
                        21/6.60/2.19}
     \node[nd] (n\n) at (\x,\y) {\n};
  \foreach \a/\b in {1/2,1/3,2/3,2/4,2/5,3/7,4/7,5/6,5/8,5/9,6/7,7/10,
                     8/13,9/11,10/12,11/12,11/14,12/15,13/16,14/17,15/18,
                     16/17,16/21,18/19,19/20,20/21}
     \draw[ed] (n\a)--(n\b);
  \node[st] at ([xshift=-2.2mm,yshift=-3.2mm]n1) {s};
  \node[st] at ([xshift=3.0mm,yshift=-2.6mm]n21) {t};
\end{tikzpicture}
\caption{$n=26$}\label{fig:tn14}%
\end{subfigure}\hfill
\begin{subfigure}[b]{0.245\linewidth}\centering
\begin{tikzpicture}[net,scale=\netsc*0.68]
  \foreach \n/\x/\y in {1/0/1.5, 2/1/3, 3/1/0, 4/2.6/3, 5/2.6/0,
                        6/3.6/1.9, 7/3.9/0, 8/4.7/3, 9/5.6/1.5}
     \node[nd] (n\n) at (\x,\y) {\n};
  \foreach \a/\b in {1/2,1/3,2/3,2/4,2/5,3/5,4/5,4/6,4/8,5/7,6/7,6/8,7/9,8/9}
     \draw[ed] (n\a)--(n\b);
  \node[st] at ([xshift=-3.2mm,yshift=-3.0mm]n1) {s};
  \node[st] at ([xshift=3.0mm,yshift=-3.0mm]n9) {t};
\end{tikzpicture}
\caption{$n=14$}\label{fig:tn15}%
\end{subfigure}\hfill
\begin{subfigure}[b]{0.245\linewidth}\centering
\begin{tikzpicture}[net,scale=\netsc*0.72]
  \foreach \n/\x/\y in {1/0/2.20, 2/1.57/3.91, 3/1.57/2.20, 4/1.57/0.25,
                        5/2.89/2.20, 6/3.47/2.98, 7/3.18/1.18, 8/4.15/3.91,
                        9/4.15/0.25, 10/5.55/2.20}
     \node[nd] (n\n) at (\x,\y) {\n};
  \foreach \a/\b in {1/2,1/3,1/4,2/3,2/6,2/8,3/4,3/5,3/6,3/7,4/5,4/9,
                     5/7,5/9,6/7,6/8,7/8,7/9,7/10,8/10,9/10}
     \draw[ed] (n\a)--(n\b);
  \node[st] at ([xshift=-2.4mm,yshift=-3.2mm]n1) {s};
  \node[st] at ([xshift=2.4mm,yshift=-3.2mm]n10) {t};
\end{tikzpicture}
\caption{$n=21$}\label{fig:tn16}%
\end{subfigure}

\vspace{2ex}

\begin{subfigure}[b]{0.49\linewidth}\centering
\begin{tikzpicture}[net,netb,scale=\netsc*1.15]
  \foreach \n/\x/\y in {1/0/1.34, 2/0.78/2.39, 3/0.78/0.31, 4/1.14/0.89,
                        5/1.63/0.31, 6/1.97/1.34, 7/2.34/2.39, 8/2.45/0.31,
                        9/3.14/2.39, 10/3.36/0.31, 11/3.78/1.34, 12/4.07/2.39,
                        13/4.07/0.31, 14/4.47/1.34, 15/4.98/2.39, 16/4.98/1.34,
                        17/4.98/0.31, 18/5.92/1.34}
     \node[nd] (n\n) at (\x,\y) {\n};
  \foreach \a/\b in {1/2,1/3,1/6,2/3,2/4,2/7,2/17,3/5,4/5,5/6,5/8,6/7,
                     7/9,7/11,8/10,8/14,9/12,10/13,11/12,11/14,12/15,
                     13/17,14/16,14/17,15/16,15/18,17/18}
     \draw[ed] (n\a)--(n\b);
  \node[st] at ([xshift=-2.2mm,yshift=-3.2mm]n1) {s};
  \node[st] at ([xshift=2.2mm,yshift=-3.2mm]n18) {t};
\end{tikzpicture}
\caption{$n=27$}\label{fig:tn17}%
\end{subfigure}\hfill
\begin{subfigure}[b]{0.49\linewidth}\centering
\begin{tikzpicture}[net,netb,scale=\netsc*1.15]
  \foreach \n/\x/\y in {1/0/1.40, 2/0.90/2.66, 3/0.90/0.32, 4/1.96/0.32,
                        5/2.18/1.40, 6/2.63/2.66, 7/3.04/0.32, 8/3.70/1.99,
                        9/4.04/2.66, 10/4.69/1.40, 11/5.28/2.66, 12/5.76/0.32,
                        13/6.53/1.40}
     \node[nd] (n\n) at (\x,\y) {\n};
  \foreach \a/\b in {1/2,1/3,1/5,2/3,2/4,2/6,2/12,3/4,4/5,4/7,5/6,6/8,6/9,
                     7/10,7/12,8/9,8/10,9/11,10/11,10/12,11/13,12/13}
     \draw[ed] (n\a)--(n\b);
  \node[st] at ([xshift=-2.2mm,yshift=-3.2mm]n1) {s};
  \node[st] at ([xshift=2.2mm,yshift=-3.2mm]n13) {t};
\end{tikzpicture}
\caption{$n=22$}\label{fig:tn18}%
\end{subfigure}

\vspace{2ex}

\begin{subfigure}[b]{0.49\linewidth}\centering
\begin{tikzpicture}[net,netb,scale=\netsc*1.10]
  \foreach \n/\x/\y in {1/2.20/1.78, 2/2.20/1.25, 3/1.35/2.18, 4/3.09/2.18,
                        5/1.35/0.75, 6/0.57/1.78, 7/3.09/0.75, 8/3.84/1.78,
                        9/3.09/2.78, 10/1.35/2.78, 11/1.35/0.21, 12/3.09/0.21,
                        13/4.44/0.49, 14/4.44/2.19, 15/3.84/3.08, 16/2.20/3.59,
                        17/0.57/3.08, 18/0/2.19, 19/0/0.49, 20/2.20/4.14}
     \node[nd] (n\n) at (\x,\y) {\n};
  \foreach \a/\b in {1/2,1/3,1/4,2/5,2/7,3/6,3/10,4/8,4/9,5/6,5/11,6/18,
                     7/8,7/12,8/14,9/10,9/15,10/17,11/12,11/19,12/13,
                     13/14,14/15,15/16,16/17,16/20,17/18,18/19}
     \draw[ed] (n\a)--(n\b);
  \draw[ed] (n19) .. controls (-1.8,2.6) and (0.5,5.0) .. (n20);
  \draw[ed] (n13) .. controls (6.24,2.6) and (3.94,5.0) .. (n20);
  \node[st] at ([xshift=-1.0mm,yshift=3.0mm]n1) {s};
  \node[st] at ([xshift=1.4mm,yshift=3.0mm]n20) {t};
\end{tikzpicture}
\caption{$n=30$}\label{fig:tn19}%
\end{subfigure}\hfill
\begin{subfigure}[b]{0.49\linewidth}\centering
\begin{tikzpicture}[net,netb,scale=\netsc*1.65]
  \foreach \n/\x/\y in {1/0/3, 16/1/3, 8/2/3, 12/3/3,
                        2/0/2, 5/1/2, 9/2/2, 13/3/2,
                        3/0/1, 6/1/1, 10/2/1, 14/3/1,
                        4/0/0, 7/1/0, 11/2/0, 15/3/0}
     \node[nd] (n\n) at (\x,\y) {\n};
  \foreach \a/\b in {1/16,16/8,8/12, 2/5,5/9,9/13, 3/6,6/10,10/14, 4/7,7/11,11/15,
                     1/2,2/3,3/4, 16/5,5/6,6/7, 8/9,9/10,10/11, 12/13,13/14,14/15}
     \draw[ed] (n\a)--(n\b);
  \node[st] at ([xshift=-3.4mm,yshift=0mm]n1) {s};
  \node[st] at ([xshift=1.2mm,yshift=3.2mm]n16) {t};
\end{tikzpicture}
\caption{$n=24$}\label{fig:tn20}%
\end{subfigure}

\vspace{2ex}

\begin{subfigure}[b]{0.49\linewidth}\centering
\begin{tikzpicture}[net,netb,scale=\netsc*1.35]
  \foreach \n/\x/\y in {1/0/2.23, 2/1.20/4.09, 3/1.20/3.16, 4/1.20/2.23,
                        5/1.20/1.29, 6/1.20/0.36, 7/2.40/4.09, 8/3.50/4.09,
                        9/2.40/3.16, 10/2.40/2.23, 11/2.40/1.29, 12/2.40/0.36,
                        13/3.50/3.16, 14/4.48/0.36, 15/3.45/2.23, 16/4.48/3.16}
     \node[nd] (n\n) at (\x,\y) {\n};
  \foreach \a/\b in {1/2,1/3,1/4,1/5,1/6,2/3,2/7,3/4,3/7,3/8,3/9,
                     4/5,4/10,4/13,5/6,5/10,5/11,5/12,5/14,7/8,8/16,
                     9/13,10/11,10/13,12/14,13/14,13/15,13/16,14/15,14/16}
     \draw[ed] (n\a)--(n\b);
  \node[st] at ([xshift=-2.4mm,yshift=-3.2mm]n1) {s};%
  \node[st] at ([xshift=2.4mm,yshift=-3.2mm]n16) {t};%
\end{tikzpicture}
\caption{$n=30$}\label{fig:tn21}%
\end{subfigure}\hfill
\begin{subfigure}[b]{0.49\linewidth}\centering
\begin{tikzpicture}[net,netb,scale=\netsc*1.10]
  \foreach \n/\x/\y in {1/0/1.27, 2/1.09/2.16, 3/1.10/0.37, 4/2.24/1.27,
                        5/2.22/2.16, 6/2.24/0.37, 7/4.24/0.37, 8/3.24/0.37,
                        9/4.24/1.27, 10/3.26/1.27, 11/3.26/2.16, 12/4.26/2.16,
                        13/5.22/1.27}
     \node[nd] (n\n) at (\x,\y) {\n};
  \foreach \a/\b in {1/2,1/3,1/4,2/3,2/5,2/6,3/4,3/6,4/5,4/6,5/10,5/11,
                     6/8,7/8,7/9,7/13,8/9,9/10,9/12,10/11,11/12,12/13}
     \draw[ed] (n\a)--(n\b);
  \draw[ed] (n2) .. controls (-2.5,2.4) and (1.5,-2.6) .. (n7);
  \node[st] at ([xshift=-2.4mm,yshift=-3.2mm]n1) {s};%
  \node[st] at ([xshift=2.4mm,yshift=-3.2mm]n13) {t};%
\end{tikzpicture}
\caption{$n=23$}\label{fig:tn22}%
\end{subfigure}

\caption{The 22 test systems of Yeh et al.~\cite{yeh2024new}, networks 1--22 of their Figure~7. Each system is a two-terminal network whose components are the edges; it works iff there is a path of working edges from node s to node t. Below each network, $n$ is the number of components, i.e., of edges. In networks 6 and 7 every edge is an arc; in network 6 the four arcs leaving node 5 are irrelevant components. Network 21 is network 13 redrawn, and network 22 is network 18 plus one edge.}
\label{fig:yeh}
\end{figure}

\subsection{Cost structure}\label{subsec:costs}
We use the cost structure~\eqref{eq:cost setup} of Section~\ref{sec:policies}, with the cost of a system failure as the unit of cost, $\csys = 1$. We parametrize the costs by $\rho := n \ccomp / \csys$, the cost of replacing all components relative to that of a system failure, and by $\sigma := \cset / \csys$, the setup cost of a repair intervention, and we test $\rho \in \{1/10, \, 1/5, \, 1/2, \, 2/3, \, 1, \, 4/3, \, 2, \, 4, \, 10\}$ and $\sigma \in \{0, \, 0.05, \, 0.1, \, 0.2, \, 0.5\}$, i.e., $45$ cost pairs. Values $\rho < 1$ describe safety- or production-critical systems, in which a system failure is more expensive than a complete renewal, and $\rho > 1$ systems whose components are expensive compared with the consequences of a failure.

\subsection{Benchmark policies}\label{subsec:benchmarks}
We compare the best MTTF-threshold policy, with cost $v^\ast_\gamma := \min_\gamma LTMC^\gamma$, with two benchmarks. The first are the \emph{$r$-out-of-$n$ policies}, which repair as soon as $r$ components have failed or the system has failed, for $r = 1, \ldots, n$~\cite{finkelstein2015time,lagos2026simple}; we write $v^\ast_r$ for the cost of the best one. They only require counting failed components, and their cost follows from the signature of the complete system~\cite{lagos2026simple}. The one with $r = n$, i.e., corrective maintenance, coincides with $\pi^\gamma$ for $\gamma \to 0$, and, when all components are relevant, the one with $r = 1$ coincides with $\pi^1$. The second is the optimal state-based policy $\pi^\ast$ of Section~\ref{subsec:optimal state}, with cost $v^\ast_{\mathrm{state}}$. Since the other policies are state-based, $v^\ast_{\mathrm{state}} \leq \min\{v^\ast_\gamma, v^\ast_r\}$.
By Section~\ref{sec:dominance}, every signature-threshold policy is also state-based, so the gap $v^\ast_\gamma / v^\ast_{\mathrm{state}} - 1$ is an upper bound on the relative saving that any signature-threshold policy, whatever its weights, can achieve over the best MTTF-threshold policy.

\subsection{Experimental design and computational setup}\label{subsec:design}

\paragraph{Configurations}
A \emph{configuration} consists of a system (22 networks of Figure~\ref{fig:yeh}) and a failure model (31 models of Section~\ref{subsec:models exp}), which gives $682$ configurations, and $682 \times 45 = 30{,}690$ combinations of configuration and cost pair.
For each system we run Algorithm~\ref{alg} once, since it does not depend on the failure model, and store the classes of working states with the same counting vector, with their multiplicities. The same sweep also lists the minimal path sets (Section~\ref{sec:procedure}); they are not needed for the policies, and we use them only to check the implementation, against an independent count of the simple $s$--$t$ paths, and report their number in Table~\ref{tab:effort}. For each configuration we then compute, from the classes, the MTTF of all states, the whole curve $\gamma \mapsto (T^\gamma, N^\gamma, P^\gamma)$ at all its breakpoints (Proposition~\ref{prop:all thresholds}) and, for every cost pair, the best threshold and the best $r$-out-of-$n$ policy.
We compute the optimal state-based policy for the 19 systems with $n \leq 27$, i.e., for all systems except the three with $n = 30$, which gives $26{,}505$ combinations.

\paragraph{Implementation}
All computations are implemented in Julia~\cite{bezanson2017julia} and run sequentially, one system or configuration per core, on the high-performance computing cluster of Universidad Adolfo Ib\'a\~nez (nodes with two Intel Xeon E5-2650 processors at 2.0~GHz and 128~GB of RAM); all times below are measured on one core. Algorithm~\ref{alg} stores each level as an integer matrix with one column per state, ordered lexicographically so that the column of each son is computed directly, and evaluates $\Phi$ by a reachability search on the working edges. The values $\Psi(1), \ldots, \Psi(n)$ and $P_{w,w'}$ are computed in 512-bit arithmetic, since~\eqref{eq:P LFMO} is an alternating sum; the recursions given after Proposition~\ref{prop:all thresholds} involve only nonnegative terms, so double precision suffices for the rest. Dinkelbach's method is started at a value $g_0 \leq v^\ast_\gamma$ and stops when $h(\mathbf 1) \geq -10^{-12} \max\{1, c^{\pi_k}_{\mathrm{rep}}(\mathbf 1)\}$. The code, the experiment grid and the computed signatures are available at \url{https://github.com/guidolagos/Subsystems}.

\subsection{Results: the MTTF at all states}\label{subsec:results algorithm}
Table~\ref{tab:effort} reports, for each system, the computing times of Algorithm~\ref{alg} and of the enumeration of the $2^n$ states that is needed for the signature of the complete system alone.
Both evaluate the structure function $2^n$ times (Theorem~\ref{theo}), and Algorithm~\ref{alg} takes between $1.2$ and $5.6$ times as long as the enumeration for the systems with $n \geq 21$, at most about eight minutes for $n = 30$. Its peak memory, $12.4$~GB for systems 13 and 21, is dominated by the two largest consecutive levels of counting vectors, in agreement with the $\mathcal O(\sqrt n \, 2^n)$ bound of Theorem~\ref{theo}.

The MTTF can be computed state by state during the sweep, as in Algorithm~\ref{alg}: the column ``Alg.~\ref{alg} + MTTF'' of Table~\ref{tab:effort} does so for the 31 failure models at once, which makes the sweep between $1.6$ and $6.3$ times slower for the systems with $n \geq 21$, and up to $51$ minutes long for $n = 30$; moreover, the whole sweep would have to be repeated for each new failure model. This is unnecessary. By~\eqref{eq:E Samaniego sub}, the MTTF of a state depends only on its number of working components and its counting vector, so it suffices to compute it once for each \emph{class} of states with the same counting vector, and there are few such classes: for the systems with $n \geq 24$, between $0.005\%$ and $0.4\%$ of the $2^n$ states (column ``Classes''). We therefore run the sweep once per system, without any failure model, and store the classes with their multiplicities. Then, for each failure model, the MTTF of all states and the whole curve $\gamma \mapsto LTMC^\gamma$ are computed from the classes alone, in at most $2.8$ seconds, and in a median of less than $0.1$ seconds for every system with $n \leq 27$ (column ``Curve'').

\begin{table*}[!t]

\centering
\caption{
Computational effort for each system of Figure~\ref{fig:yeh} (one core): number $n$ of components; number of classes of working states with the same counting vector; number of minimal path sets (MPS); time to enumerate the $2^n$ states for the signature of the complete system; time of Algorithm~\ref{alg} (all counting vectors, classes and MPS); time of Algorithm~\ref{alg} computing also the MTTF of every state for the 31 failure models; peak memory of these runs; maximum number of breakpoints of $\gamma \mapsto LTMC^\gamma$ over the 31 models; time to compute that whole curve for one model from the classes (median over the models); and time to compute the optimal state-based policy for one cost pair (median over the models and cost pairs). Systems 6 and 7 are directed.
}
\label{tab:effort}
\small
\setlength{\tabcolsep}{4pt}
\begin{tabular}{@{}rrrrrrrrrrr@{}}
\toprule
& & & & \multicolumn{3}{c}{Time (s)} & & & \multicolumn{2}{c}{Time (s)} \\
\cmidrule(lr){5-7}\cmidrule(l){10-11}
System & $n$ & Classes & MPS & Enumeration & Alg.~\ref{alg} & Alg.~\ref{alg} + MTTF & Memory (GB) & Breakpoints & Curve & $\pi^\ast$ per cost pair \\
\midrule
1 & $5$ & $6$ & $4$ & $<0.01$ & $<0.01$ & $<0.01$ & $0.2$ & $5$ & $<0.01$ & $<0.01$ \\
2 & $8$ & $25$ & $7$ & $<0.01$ & $<0.01$ & $<0.01$ & $0.2$ & $18$ & $<0.01$ & $<0.01$ \\
3 & $8$ & $28$ & $9$ & $<0.01$ & $<0.01$ & $<0.01$ & $0.2$ & $21$ & $<0.01$ & $<0.01$ \\
4 & $9$ & $33$ & $13$ & $<0.01$ & $<0.01$ & $<0.01$ & $0.2$ & $26$ & $<0.01$ & $<0.01$ \\
5 & $12$ & $125$ & $13$ & $<0.01$ & $<0.01$ & $<0.01$ & $0.2$ & $56$ & $<0.01$ & $<0.01$ \\
6\,(D) & $14$ & $313$ & $7$ & $<0.01$ & $<0.01$ & $0.02$ & $0.2$ & $53$ & $<0.01$ & $<0.01$ \\
7\,(D) & $21$ & $12{,}940$ & $18$ & $0.13$ & $0.71$ & $3.4$ & $0.3$ & $4{,}194$ & $<0.01$ & $2.5$ \\
8 & $13$ & $294$ & $18$ & $<0.01$ & $<0.01$ & $<0.01$ & $0.2$ & $162$ & $<0.01$ & $<0.01$ \\
9 & $12$ & $207$ & $24$ & $<0.01$ & $<0.01$ & $<0.01$ & $0.2$ & $148$ & $<0.01$ & $<0.01$ \\
10 & $12$ & $223$ & $20$ & $<0.01$ & $<0.01$ & $<0.01$ & $0.2$ & $137$ & $<0.01$ & $<0.01$ \\
11 & $12$ & $326$ & $25$ & $<0.01$ & $<0.01$ & $<0.01$ & $0.2$ & $247$ & $<0.01$ & $<0.01$ \\
12 & $13$ & $244$ & $29$ & $<0.01$ & $<0.01$ & $<0.01$ & $0.2$ & $169$ & $<0.01$ & $<0.01$ \\
13 & $30$ & $4{,}549{,}544$ & $499$ & $117$ & $489$ & $3{,}051$ & $12.4$ & $1{,}309{,}230$ & $2.7$ & --- \\
14 & $26$ & $3{,}657$ & $44$ & $9.2$ & $11$ & $18$ & $1.0$ & $739$ & $<0.01$ & $36$ \\
15 & $14$ & $497$ & $44$ & $<0.01$ & $<0.01$ & $0.01$ & $0.2$ & $364$ & $<0.01$ & $<0.01$ \\
16 & $21$ & $40{,}880$ & $331$ & $0.15$ & $0.80$ & $4.4$ & $0.2$ & $33{,}549$ & $0.02$ & $2.4$ \\
17 & $27$ & $148{,}370$ & $269$ & $20$ & $37$ & $166$ & $1.8$ & $45{,}900$ & $0.06$ & $197$ \\
18 & $22$ & $65{,}639$ & $269$ & $0.47$ & $1.3$ & $5.7$ & $0.3$ & $41{,}122$ & $0.03$ & $4.4$ \\
19 & $30$ & $90{,}202$ & $780$ & $171$ & $337$ & $1{,}483$ & $9.5$ & $39{,}096$ & $0.05$ & --- \\
20 & $24$ & $9{,}015$ & $98$ & $1.7$ & $6.2$ & $37$ & $0.5$ & $4{,}273$ & $<0.01$ & $18$ \\
21 & $30$ & $4{,}549{,}544$ & $499$ & $110$ & $484$ & $3{,}032$ & $12.4$ & $1{,}309{,}230$ & $2.6$ & --- \\
22 & $23$ & $138{,}905$ & $441$ & $0.97$ & $2.5$ & $13$ & $0.4$ & $94{,}306$ & $0.08$ & $9.8$ \\
\bottomrule
\end{tabular}
\end{table*}

\subsection{Results: the MTTF-threshold policies}\label{subsec:results thresholds}

\paragraph{Number of policies}
The number of distinct policies $\pi^\gamma$, i.e., of breakpoints of $\gamma \mapsto LTMC^\gamma$, ranges from 5 for the bridge (system 1) to about $1.3$ million for systems 13 and 21 (Table~\ref{tab:effort}). It varies little across the failure models of a system.

\paragraph{Shape of the cost curve}
The curve $\gamma \mapsto LTMC^\gamma$ is neither convex nor unimodal in general, i.e., the cost can increase and then decrease again as $\gamma$ varies (Section~\ref{subsec:threshold}). Among the $20{,}925$ combinations of configuration and cost pair whose curve has at most $10{,}000$ breakpoints, $74\%$ have a breakpoint whose cost exceeds the best cost on both of its sides, by more than $0.1\%$ in $48\%$ of them and by more than $1\%$ in $5.1\%$, and by up to $5.3\%$. Figure~\ref{fig:curves} shows two examples. As a consequence, simple search strategies may miss the best threshold: decreasing $\gamma$ from $1$ until the cost first increases loses more than $1\%$ with respect to the best threshold in $23\%$ of these combinations, more than $5\%$ in $9.7\%$ of them, and up to $62\%$; a grid of 201 equally spaced thresholds loses up to $2.9\%$, and more than $1\%$ in $1.2\%$ of all $30{,}690$ combinations. Evaluating all breakpoints, which Proposition~\ref{prop:all thresholds} makes cheap, is therefore what guarantees the best threshold.

\paragraph{Best threshold}
The best threshold is interior in most cases: it is $\gamma = 1$, i.e., repairing at the first failure of a relevant component, in $2.6\%$ of the combinations, and corrective maintenance in $3.7\%$; its median is $\gamma = 0.69$. The best MTTF-threshold policy saves a median of $20\%$, and up to $97\%$, with respect to corrective maintenance.
It is never worse than the best $r$-out-of-$n$ policy, and it saves a median of $4.2\%$ with respect to it, with a 90th percentile of $13\%$ and a maximum of $54\%$. Thus, among simple policies, a threshold on the MTTF of the current state uses the information on \emph{which} components failed effectively.

\begin{figure}[!t]

\centering
\begin{tikzpicture}
\begin{groupplot}[
  group style={group size=1 by 2, vertical sep=1.5cm},
  width=\linewidth, height=0.55\linewidth,
  xmin=0, xmax=1, xlabel={$\gamma$}, ylabel={$LTMC^\gamma / v^\ast_{\mathrm{state}}$},
  grid=both, grid style={line width=0.3pt, draw=gray!25},
  tick label style={font=\footnotesize}, label style={font=\footnotesize},
  title style={font=\footnotesize},
  legend style={font=\scriptsize, draw=gray!50, at={(0.02,0.98)}, anchor=north west},
]
\nextgroupplot[title={(a) System 5 ($n=12$), [Exp], $\rho = 0.1$, $\sigma = 0$}, ymin=0.98, ymax=1.2]
\addplot[black, line width=0.8pt] coordinates {(0.0,20.39981) (0.26321,20.39981) (0.26321,20.2614) (0.30708,20.2614) (0.30708,19.33018) (0.33998,19.33018) (0.33998,19.16472) (0.36557,19.16472) (0.36557,19.11844) (0.36849,19.11844) (0.36849,17.55864) (0.38385,17.55864) (0.38385,17.49404) (0.4014,17.49404) (0.4014,17.42911) (0.41236,17.42911) (0.41236,17.06731) (0.44526,17.06731) (0.44526,16.79189) (0.46062,16.79189) (0.46062,14.38451) (0.47085,14.38451) (0.47085,14.23752) (0.47378,14.23752) (0.47378,14.08141) (0.49352,14.08141) (0.49352,13.91938) (0.50156,13.91938) (0.50156,13.86163) (0.50449,13.86163) (0.50449,13.48625) (0.51692,13.48625) (0.51692,13.41692) (0.52203,13.41692) (0.52203,12.9693) (0.53739,12.9693) (0.53739,12.88515) (0.56225,12.88515) (0.56225,12.85066) (0.5659,12.85066) (0.5659,12.57406) (0.58053,12.57406) (0.58053,12.48769) (0.58126,12.48769) (0.58126,12.49149) (0.59149,12.49149) (0.59149,12.40324) (0.59661,12.40324) (0.59661,12.32619) (0.60538,12.32619) (0.60538,12.06775) (0.61416,12.06775) (0.61416,7.96872) (0.6222,7.96872) (0.6222,7.95135) (0.63084,7.95135) (0.63084,7.59259) (0.6456,7.59259) (0.6456,7.54624) (0.65803,7.54624) (0.65803,6.70606) (0.6763,6.70606) (0.6763,6.50839) (0.69458,6.50839) (0.69458,6.44359) (0.70189,6.44359) (0.70189,6.4747) (0.70628,6.4747) (0.70628,5.0026) (0.71213,5.0026) (0.71213,4.90836) (0.71725,4.90836) (0.71725,4.29927) (0.71944,4.29927) (0.71944,3.95525) (0.73971,3.95525) (0.73971,3.68822) (0.7544,3.68822) (0.7544,3.78648) (0.756,3.78648) (0.756,3.31229) (0.75746,3.31229) (0.75746,3.00782) (0.76623,3.00782) (0.76623,2.84693) (0.78963,2.84693) (0.78963,2.44308) (0.80339,2.44308) (0.80339,2.0366) (0.81083,2.0366) (0.81083,1.00241) (0.81157,1.00241) (0.81157,1.00759) (0.82984,1.00759) (0.82984,1.01083) (0.87358,1.01083) (0.87358,1.01283) (0.87664,1.01283) (0.87664,1.02031) (0.8993,1.02031) (0.8993,1.03115) (0.93347,1.03115) (0.93347,1.03487) (0.94536,1.03487) (0.94536,1.04637) (0.95487,1.04637) (0.95487,1.06247) (1.0,1.06247)};
\addlegendentry{$LTMC^\gamma$}
\addplot[blue, dashed, line width=0.9pt, domain=0:1, samples=2] {1.0624669};
\addlegendentry{best $r$-out-of-$n$ ($r=1$)}
\addplot[red, line width=0.9pt, domain=0:1, samples=2] {1};
\addlegendentry{optimal state-based}
\addplot[only marks, mark=*, mark size=1.6pt, black, forget plot] coordinates {(0.81157,1.00241)};
\nextgroupplot[title={(b) System 14 ($n=26$), [Exp], $\rho = 0.1$, $\sigma = 0.05$}, ymin=0.9, ymax=2.6]
\addplot[black, line width=0.8pt] coordinates {(0.0,5.35975) (0.18234,5.35975) (0.18234,5.35966) (0.19306,5.35966) (0.19306,5.35898) (0.20266,5.35898) (0.20266,5.35874) (0.20513,5.35874) (0.20513,5.35804) (0.21339,5.35804) (0.21339,5.35782) (0.21473,5.35782) (0.21473,5.3576) (0.21586,5.3576) (0.21586,5.35709) (0.2188,5.35709) (0.2188,5.356) (0.22545,5.356) (0.22545,5.3556) (0.2284,5.3556) (0.2284,5.3554) (0.2285,5.3554) (0.2285,5.3546) (0.22953,5.3546) (0.22953,5.35413) (0.23443,5.35413) (0.23443,5.34982) (0.2416,5.34982) (0.2416,5.34939) (0.24465,5.34939) (0.24465,5.34866) (0.24516,5.34866) (0.24516,5.34721) (0.24725,5.34721) (0.24725,5.34648) (0.24811,5.34648) (0.24811,5.34388) (0.24924,5.34388) (0.24924,5.34224) (0.25247,5.34224) (0.25247,5.33357) (0.25267,5.33357) (0.25267,5.3335) (0.25723,5.3335) (0.25723,5.33312) (0.25979,5.33312) (0.25979,5.33247) (0.2614,5.33247) (0.2614,5.33181) (0.26243,5.33181) (0.26243,5.33034) (0.26414,5.33034) (0.26414,5.32693) (0.2681,5.32693) (0.2681,5.32474) (0.27279,5.32474) (0.27279,5.32424) (0.27351,5.32424) (0.27351,5.30087) (0.28613,5.30087) (0.28613,5.29893) (0.28913,5.29893) (0.28913,5.29699) (0.2902,5.29699) (0.2902,5.29677) (0.29154,5.29677) (0.29154,5.28962) (0.2963,5.28962) (0.2963,5.28915) (0.29837,5.28915) (0.29837,5.21481) (0.30717,5.21481) (0.30717,5.21139) (0.31032,5.21139) (0.31032,5.21119) (0.31258,5.21119) (0.31258,5.20893) (0.31348,5.20893) (0.31348,5.2029) (0.31371,5.2029) (0.31371,5.20271) (0.314,5.20271) (0.314,5.20099) (0.3164,5.20099) (0.3164,5.19468) (0.31642,5.19468) (0.31642,5.19424) (0.3166,5.19424) (0.3166,5.19419) (0.31941,5.19419) (0.31941,5.1708) (0.31957,5.1708) (0.31957,5.1707) (0.32411,5.1707) (0.32411,5.17003) (0.3256,5.17003) (0.3256,5.16987) (0.32767,5.16987) (0.32767,5.16925) (0.32821,5.16925) (0.32821,5.03182) (0.33099,5.03182) (0.33099,5.03173) (0.33383,5.03173) (0.33383,5.03139) (0.33744,5.03139) (0.33744,5.02078) (0.3378,5.02078) (0.3378,5.02067) (0.34095,5.02067) (0.34095,5.01977) (0.34152,5.01977) (0.34152,5.01961) (0.34375,5.01961) (0.34375,5.01427) (0.34423,5.01427) (0.34423,5.01389) (0.34427,5.01389) (0.34427,5.01007) (0.34443,5.01007) (0.34443,5.00999) (0.34526,5.00999) (0.34526,5.00333) (0.34571,5.00333) (0.34571,5.00293) (0.34624,5.00293) (0.34624,4.99062) (0.34853,4.99062) (0.34853,4.99039) (0.34925,4.99039) (0.34925,4.98654) (0.35047,4.98654) (0.35047,4.98629) (0.3506,4.98629) (0.3506,4.98092) (0.35307,4.98092) (0.35307,4.91814) (0.35395,4.91814) (0.35395,4.91782) (0.35548,4.91782) (0.35548,4.90534) (0.35751,4.90534) (0.35751,4.90328) (0.35777,4.90328) (0.35777,4.90221) (0.36187,4.90221) (0.36187,4.89836) (0.36236,4.89836) (0.36236,4.89819) (0.36467,4.89819) (0.36467,4.71917) (0.36574,4.71917) (0.36574,4.71873) (0.36728,4.71873) (0.36728,4.71554) (0.36774,4.71554) (0.36774,4.71522) (0.3685,4.71522) (0.3685,4.71511) (0.36884,4.71511) (0.36884,4.71497) (0.37306,4.71497) (0.37306,4.71021) (0.37411,4.71021) (0.37411,4.67731) (0.37486,4.67731) (0.37486,4.67652) (0.37518,4.67652) (0.37518,4.67638) (0.3754,4.67638) (0.3754,4.67556) (0.376,4.67556) (0.376,4.67075) (0.37654,4.67075) (0.37654,4.6705) (0.37763,4.6705) (0.37763,4.66937) (0.37794,4.66937) (0.37794,4.66828) (0.37892,4.66828) (0.37892,4.65726) (0.3803,4.65726) (0.3803,4.6568) (0.38255,4.6568) (0.38255,4.65602) (0.38281,4.65602) (0.38281,4.63328) (0.38291,4.63328) (0.38291,4.60498) (0.38395,4.60498) (0.38395,4.59365) (0.38485,4.59365) (0.38485,4.59322) (0.385,4.59322) (0.385,4.59314) (0.38533,4.59314) (0.38533,4.59118) (0.38571,4.59118) (0.38571,4.57657) (0.38723,4.57657) (0.38723,4.57642) (0.38747,4.57642) (0.38747,4.57611) (0.38804,4.57611) (0.38804,4.57434) (0.38954,4.57434) (0.38954,4.53972) (0.3903,4.53972) (0.3903,4.53796) (0.39134,4.53796) (0.39134,4.53781) (0.39161,4.53781) (0.39161,4.53628) (0.39214,4.53628) (0.39214,4.52881) (0.39451,4.52881) (0.39451,4.45344) (0.39462,4.45344) (0.39462,4.45306) (0.39483,4.45306) (0.39483,4.45125) (0.39495,4.45125) (0.39495,4.43466) (0.39515,4.43466) (0.39515,4.418) (0.39658,4.418) (0.39658,4.41697) (0.39834,4.41697) (0.39834,4.41579) (0.39914,4.41579) (0.39914,4.41577) (0.40114,4.41577) (0.40114,4.41535) (0.40171,4.41535) (0.40171,4.41496) (0.40221,4.41496) (0.40221,4.4148) (0.40321,4.4148) (0.40321,4.41429) (0.40375,4.41429) (0.40375,4.40448) (0.40438,4.40448) (0.40438,4.39093) (0.4062,4.39093) (0.4062,4.39057) (0.4071,4.39057) (0.4071,4.39045) (0.40765,4.39045) (0.40765,4.3898) (0.40831,4.3898) (0.40831,4.38922) (0.40865,4.38922) (0.40865,4.38876) (0.41006,4.38876) (0.41006,4.38809) (0.41021,4.38809) (0.41021,4.38791) (0.41026,4.38791) (0.41026,4.25005) (0.41058,4.25005) (0.41058,4.24024) (0.41074,4.24024) (0.41074,4.24011) (0.41097,4.24011) (0.41097,4.23943) (0.41167,4.23943) (0.41167,4.23751) (0.4129,4.23751) (0.4129,4.23738) (0.41374,4.23738) (0.41374,4.23632) (0.41411,4.23632) (0.41411,4.23619) (0.41426,4.23619) (0.41426,4.2347) (0.41677,4.2347) (0.41677,4.2343) (0.41702,4.2343) (0.41702,4.23396) (0.41724,4.23396) (0.41724,4.23355) (0.4181,4.23355) (0.4181,4.23327) (0.41864,4.23327) (0.41864,4.22814) (0.41938,4.22814) (0.41938,4.19738) (0.41939,4.19738) (0.41939,4.1971) (0.41946,4.1971) (0.41946,4.19681) (0.42145,4.19681) (0.42145,4.1966) (0.42198,4.1966) (0.42198,4.19106) (0.42233,4.19106) (0.42233,4.19023) (0.42296,4.19023) (0.42296,4.18945) (0.42365,4.18945) (0.42365,4.18911) (0.42408,4.18911) (0.42408,4.1884) (0.42437,4.1884) (0.42437,4.18312) (0.42449,4.18312) (0.42449,4.18161) (0.42539,4.18161) (0.42539,4.14832) (0.42661,4.14832) (0.42661,4.14761) (0.42834,4.14761) (0.42834,4.14691) (0.42861,4.14691) (0.42861,4.13321) (0.42876,4.13321) (0.42876,4.13241) (0.42921,4.13241) (0.42921,4.12194) (0.42937,4.12194) (0.42937,4.12162) (0.43098,4.12162) (0.43098,4.11118) (0.43162,4.11118) (0.43162,4.09636) (0.432,4.09636) (0.432,4.01684) (0.43217,4.01684) (0.43217,4.01525) (0.43305,4.01525) (0.43305,4.01449) (0.43351,4.01449) (0.43351,4.01438) (0.43512,4.01438) (0.43512,3.98674) (0.43612,3.98674) (0.43612,3.98642) (0.43686,3.98642) (0.43686,3.98568) (0.43719,3.98568) (0.43719,3.98534) (0.43761,3.98534) (0.43761,3.98217) (0.43794,3.98217) (0.43794,3.9765) (0.4401,3.9765) (0.4401,3.96558) (0.44041,3.96558) (0.44041,3.9603) (0.44093,3.9603) (0.44093,3.88795) (0.44109,3.88795) (0.44109,3.88743) (0.44195,3.88743) (0.44195,3.88719) (0.44195,3.88719) (0.44195,3.87592) (0.44294,3.87592) (0.44294,3.86661) (0.44306,3.86661) (0.44306,3.86583) (0.44379,3.86583) (0.44379,3.86551) (0.44409,3.86551) (0.44409,3.86495) (0.44416,3.86495) (0.44416,3.86112) (0.44424,3.86112) (0.44424,3.85959) (0.44574,3.85959) (0.44574,3.85367) (0.44604,3.85367) (0.44604,3.85356) (0.44673,3.85356) (0.44673,3.77376) (0.44738,3.77376) (0.44738,3.76982) (0.44875,3.76982) (0.44875,3.7697) (0.449,3.7697) (0.449,3.76946) (0.44943,3.76946) (0.44943,3.76921) (0.45055,3.76921) (0.45055,3.74177) (0.45069,3.74177) (0.45069,3.74096) (0.45094,3.74096) (0.45094,3.74004) (0.45125,3.74004) (0.45125,3.73805) (0.45296,3.73805) (0.45296,3.73724) (0.45312,3.73724) (0.45312,3.7369) (0.45405,3.7369) (0.45405,3.72839) (0.45513,3.72839) (0.45513,3.72758) (0.45533,3.72758) (0.45533,3.72733) (0.45583,3.72733) (0.45583,3.72681) (0.45584,3.72681) (0.45584,3.72679) (0.45616,3.72679) (0.45616,3.71596) (0.45701,3.71596) (0.45701,3.71377) (0.45792,3.71377) (0.45792,3.71317) (0.45803,3.71317) (0.45803,3.71025) (0.45845,3.71025) (0.45845,3.70829) (0.45887,3.70829) (0.45887,3.69969) (0.45944,3.69969) (0.45944,3.69876) (0.45947,3.69876) (0.45947,3.69816) (0.45992,3.69816) (0.45992,3.69738) (0.45995,3.69738) (0.45995,3.69712) (0.46001,3.69712) (0.46001,3.697) (0.46114,3.697) (0.46114,3.69614) (0.46115,3.69614) (0.46115,3.69324) (0.46335,3.69324) (0.46335,3.69263) (0.46344,3.69263) (0.46344,3.69102) (0.46381,3.69102) (0.46381,3.68971) (0.46466,3.68971) (0.46466,3.68968) (0.46496,3.68968) (0.46496,3.68268) (0.46581,3.68268) (0.46581,3.67881) (0.46637,3.67881) (0.46637,3.67751) (0.46654,3.67751) (0.46654,3.67458) (0.46764,3.67458) (0.46764,3.67326) (0.46783,3.67326) (0.46783,3.67266) (0.46887,3.67266) (0.46887,3.58116) (0.46893,3.58116) (0.46893,3.57973) (0.46952,3.57973) (0.46952,3.57946) (0.47052,3.57946) (0.47052,3.57884) (0.47131,3.57884) (0.47131,3.57855) (0.47194,3.57855) (0.47194,3.57827) (0.47264,3.57827) (0.47264,3.57658) (0.47408,3.57658) (0.47408,3.5756) (0.4746,3.5756) (0.4746,3.56582) (0.47527,3.56582) (0.47527,3.56427) (0.47641,3.56427) (0.47641,3.56362) (0.47656,3.56362) (0.47656,3.54406) (0.47688,3.54406) (0.47688,3.54125) (0.47766,3.54125) (0.47766,3.53646) (0.47862,3.53646) (0.47862,3.52654) (0.47864,3.52654) (0.47864,3.52587) (0.47949,3.52587) (0.47949,3.52193) (0.4797,3.52193) (0.4797,3.52185) (0.47982,3.52185) (0.47982,3.52013) (0.48071,3.52013) (0.48071,3.52015) (0.48101,3.52015) (0.48101,3.5134) (0.48151,3.5134) (0.48151,3.51342) (0.48189,3.51342) (0.48189,3.50614) (0.48206,3.50614) (0.48206,3.49934) (0.4821,3.49934) (0.4821,3.49921) (0.48254,3.49921) (0.48254,3.49789) (0.48421,3.49789) (0.48421,3.47456) (0.48482,3.47456) (0.48482,3.47132) (0.486,3.47132) (0.486,3.4594) (0.48684,3.4594) (0.48684,3.45707) (0.4869,3.45707) (0.4869,3.45541) (0.48775,3.45541) (0.48775,3.45501) (0.4882,3.45501) (0.4882,3.45175) (0.49036,3.45175) (0.49036,3.44997) (0.49066,3.44997) (0.49066,3.40512) (0.49072,3.40512) (0.49072,3.40456) (0.49122,3.40456) (0.49122,3.40442) (0.4915,3.40442) (0.4915,3.40203) (0.49235,3.40203) (0.49235,3.40054) (0.49314,3.40054) (0.49314,3.36908) (0.49316,3.36908) (0.49316,3.36659) (0.49381,3.36659) (0.49381,3.3659) (0.49492,3.3659) (0.49492,3.36202) (0.49535,3.36202) (0.49535,3.35779) (0.49651,3.35779) (0.49651,3.35709) (0.49664,3.35709) (0.49664,3.35639) (0.49804,3.35639) (0.49804,3.34789) (0.49807,3.34789) (0.49807,3.34755) (0.4987,3.34755) (0.4987,3.33696) (0.49982,3.33696) (0.49982,3.33698) (0.50092,3.33698) (0.50092,3.32081) (0.50133,3.32081) (0.50133,3.30452) (0.50213,3.30452) (0.50213,3.30405) (0.50216,3.30405) (0.50216,3.29663) (0.50306,3.29663) (0.50306,3.29407) (0.50333,3.29407) (0.50333,3.28971) (0.50372,3.28971) (0.50372,3.28955) (0.50403,3.28955) (0.50403,3.28527) (0.50449,3.28527) (0.50449,3.2848) (0.50529,3.2848) (0.50529,3.28441) (0.50533,3.28441) (0.50533,3.25785) (0.50662,3.25785) (0.50662,3.25695) (0.50704,3.25695) (0.50704,3.25251) (0.50711,3.25251) (0.50711,3.25229) (0.50748,3.25229) (0.50748,3.24144) (0.50774,3.24144) (0.50774,3.23984) (0.50809,3.23984) (0.50809,3.23968) (0.50969,3.23968) (0.50969,3.2342) (0.50997,3.2342) (0.50997,3.23237) (0.51008,3.23237) (0.51008,3.22687) (0.51054,3.22687) (0.51054,3.22663) (0.51069,3.22663) (0.51069,3.22405) (0.51151,3.22405) (0.51151,3.22265) (0.51171,3.22265) (0.51171,3.22135) (0.51279,3.22135) (0.51279,3.21774) (0.51332,3.21774) (0.51332,3.16457) (0.51334,3.16457) (0.51334,3.16425) (0.51367,3.16425) (0.51367,3.15494) (0.51395,3.15494) (0.51395,3.15494) (0.5145,3.15494) (0.5145,3.15476) (0.51477,3.15476) (0.51477,3.15143) (0.51522,3.15143) (0.51522,3.15102) (0.51557,3.15102) (0.51557,3.15102) (0.51671,3.15102) (0.51671,3.15062) (0.51984,3.15062) (0.51984,3.15018) (0.52014,3.15018) (0.52014,3.14947) (0.52127,3.14947) (0.52127,3.14848) (0.52207,3.14848) (0.52207,3.11608) (0.52241,3.11608) (0.52241,3.11566) (0.52247,3.11566) (0.52247,3.11338) (0.5232,3.11338) (0.5232,3.1119) (0.52357,3.1119) (0.52357,3.11137) (0.52386,3.11137) (0.52386,3.10183) (0.52422,3.10183) (0.52422,3.10182) (0.52446,3.10182) (0.52446,3.10101) (0.52564,3.10101) (0.52564,3.10089) (0.52627,3.10089) (0.52627,3.10047) (0.52646,3.10047) (0.52646,3.09966) (0.52661,3.09966) (0.52661,3.09974) (0.52749,3.09974) (0.52749,3.09973) (0.52832,3.09973) (0.52832,3.09877) (0.529,3.09877) (0.529,3.09232) (0.53137,3.09232) (0.53137,3.09231) (0.5317,3.09231) (0.5317,3.08564) (0.53173,3.08564) (0.53173,3.08563) (0.53223,3.08563) (0.53223,3.08421) (0.53258,3.08421) (0.53258,3.08317) (0.5362,3.08317) (0.5362,3.08167) (0.53629,3.08167) (0.53629,3.08178) (0.53708,3.08178) (0.53708,3.08185) (0.53724,3.08185) (0.53724,3.08162) (0.53801,3.08162) (0.53801,3.07679) (0.5405,3.07679) (0.5405,3.07573) (0.54091,3.07573) (0.54091,3.07333) (0.54119,3.07333) (0.54119,3.07234) (0.54131,3.07234) (0.54131,3.07135) (0.54454,3.07135) (0.54454,3.07107) (0.54492,3.07107) (0.54492,3.07124) (0.5453,3.07124) (0.5453,3.07025) (0.5454,3.07025) (0.5454,3.07051) (0.54576,3.07051) (0.54576,3.0705) (0.54701,3.0705) (0.54701,2.75342) (0.54843,2.75342) (0.54843,2.75216) (0.54876,2.75216) (0.54876,2.75189) (0.55092,2.75189) (0.55092,2.7272) (0.55127,2.7272) (0.55127,2.72706) (0.55145,2.72706) (0.55145,2.71932) (0.55169,2.71932) (0.55169,2.71596) (0.55254,2.71596) (0.55254,2.7156) (0.5527,2.7156) (0.5527,2.70667) (0.55274,2.70667) (0.55274,2.70277) (0.55302,2.70277) (0.55302,2.70246) (0.55494,2.70246) (0.55494,2.7021) (0.55568,2.7021) (0.55568,2.70154) (0.55575,2.70154) (0.55575,2.69762) (0.55613,2.69762) (0.55613,2.69735) (0.55642,2.69735) (0.55642,2.68946) (0.55679,2.68946) (0.55679,2.68656) (0.55778,2.68656) (0.55778,2.68533) (0.55828,2.68533) (0.55828,2.68521) (0.56004,2.68521) (0.56004,2.67777) (0.56211,2.67777) (0.56211,2.67767) (0.5626,2.67767) (0.5626,2.67761) (0.56264,2.67761) (0.56264,2.67281) (0.56316,2.67281) (0.56316,2.66711) (0.56348,2.66711) (0.56348,2.66725) (0.56393,2.66725) (0.56393,2.6671) (0.56545,2.6671) (0.56545,2.66673) (0.56682,2.66673) (0.56682,2.66686) (0.56706,2.66686) (0.56706,2.66625) (0.5675,2.66625) (0.5675,2.63427) (0.5676,2.63427) (0.5676,2.63389) (0.56802,2.63389) (0.56802,2.63373) (0.56805,2.63373) (0.56805,2.62366) (0.56817,2.62366) (0.56817,2.62349) (0.5689,2.62349) (0.5689,2.6213) (0.56911,2.6213) (0.56911,2.62123) (0.56934,2.62123) (0.56934,2.62107) (0.57012,2.62107) (0.57012,2.60276) (0.57075,2.60276) (0.57075,2.60249) (0.57152,2.60249) (0.57152,2.60231) (0.57337,2.60231) (0.57337,2.60126) (0.57346,2.60126) (0.57346,2.59537) (0.57483,2.59537) (0.57483,2.59532) (0.57486,2.59532) (0.57486,2.59447) (0.57501,2.59447) (0.57501,2.59429) (0.5754,2.59429) (0.5754,2.58594) (0.57621,2.58594) (0.57621,2.58565) (0.57691,2.58565) (0.57691,2.58479) (0.57697,2.58479) (0.57697,2.57285) (0.57719,2.57285) (0.57719,2.55405) (0.57812,2.55405) (0.57812,2.55363) (0.57858,2.55363) (0.57858,2.55318) (0.57895,2.55318) (0.57895,2.55218) (0.57953,2.55218) (0.57953,2.55155) (0.57978,2.55155) (0.57978,2.55168) (0.58035,2.55168) (0.58035,2.55141) (0.58106,2.55141) (0.58106,2.55078) (0.58187,2.55078) (0.58187,2.55097) (0.58211,2.55097) (0.58211,2.5511) (0.58219,2.5511) (0.58219,2.55124) (0.58313,2.55124) (0.58313,2.55106) (0.58348,2.55106) (0.58348,2.45694) (0.58408,2.45694) (0.58408,2.45535) (0.5843,2.45535) (0.5843,2.45525) (0.5855,2.45525) (0.5855,2.45275) (0.58597,2.45275) (0.58597,2.45297) (0.58625,2.45297) (0.58625,2.45324) (0.58786,2.45324) (0.58786,2.45274) (0.58834,2.45274) (0.58834,2.45164) (0.58889,2.45164) (0.58889,2.43625) (0.59014,2.43625) (0.59014,2.43039) (0.59049,2.43039) (0.59049,2.37021) (0.59221,2.37021) (0.59221,2.36996) (0.59292,2.36996) (0.59292,2.3666) (0.59313,2.3666) (0.59313,2.36604) (0.59321,2.36604) (0.59321,2.36546) (0.59508,2.36546) (0.59508,2.3652) (0.59508,2.3652) (0.59508,2.32848) (0.59552,2.32848) (0.59552,2.32804) (0.5966,2.32804) (0.5966,2.32787) (0.59679,2.32787) (0.59679,2.3281) (0.59714,2.3281) (0.59714,2.32841) (0.59724,2.32841) (0.59724,2.32705) (0.5975,2.32705) (0.5975,2.32682) (0.5985,2.32682) (0.5985,2.32621) (0.59911,2.32621) (0.59911,2.3258) (0.59953,2.3258) (0.59953,2.32258) (0.59964,2.32258) (0.59964,2.32234) (0.6001,2.32234) (0.6001,2.32253) (0.60024,2.32253) (0.60024,2.31998) (0.6007,2.31998) (0.6007,2.31862) (0.60185,2.31862) (0.60185,2.31855) (0.60204,2.31855) (0.60204,2.31928) (0.60234,2.31928) (0.60234,2.30574) (0.60236,2.30574) (0.60236,2.30565) (0.60291,2.30565) (0.60291,2.30553) (0.60405,2.30553) (0.60405,2.30541) (0.6042,2.30541) (0.6042,2.30466) (0.60562,2.30466) (0.60562,2.10928) (0.60569,2.10928) (0.60569,2.10919) (0.60574,2.10919) (0.60574,2.10563) (0.60608,2.10563) (0.60608,2.1043) (0.60782,2.1043) (0.60782,2.10445) (0.60847,2.10445) (0.60847,2.10461) (0.60933,2.10461) (0.60933,2.09877) (0.61037,2.09877) (0.61037,2.09908) (0.61095,2.09908) (0.61095,2.09718) (0.61215,2.09718) (0.61215,2.09526) (0.61225,2.09526) (0.61225,1.99823) (0.61432,1.99823) (0.61432,1.99165) (0.61474,1.99165) (0.61474,1.99181) (0.61585,1.99181) (0.61585,1.9913) (0.6172,1.9913) (0.6172,1.99043) (0.61805,1.99043) (0.61805,1.99018) (0.61916,1.99018) (0.61916,1.98452) (0.62219,1.98452) (0.62219,1.98465) (0.62286,1.98465) (0.62286,1.97954) (0.62387,1.97954) (0.62387,1.97966) (0.6247,1.97966) (0.6247,1.97767) (0.62552,1.97767) (0.62552,1.97567) (0.62607,1.97567) (0.62607,1.97524) (0.62635,1.97524) (0.62635,1.97553) (0.62658,1.97553) (0.62658,1.97478) (0.62672,1.97478) (0.62672,1.97543) (0.62715,1.97543) (0.62715,1.97503) (0.62723,1.97503) (0.62723,1.97498) (0.628,1.97498) (0.628,1.9751) (0.62906,1.9751) (0.62906,1.94355) (0.62966,1.94355) (0.62966,1.93901) (0.63023,1.93901) (0.63023,1.93856) (0.63152,1.93856) (0.63152,1.93885) (0.6323,1.93885) (0.6323,1.93824) (0.63275,1.93824) (0.63275,1.93847) (0.63309,1.93847) (0.63309,1.9375) (0.63331,1.9375) (0.63331,1.93658) (0.63576,1.93658) (0.63576,1.93669) (0.63581,1.93669) (0.63581,1.93672) (0.63626,1.93672) (0.63626,1.93575) (0.6365,1.93575) (0.6365,1.92228) (0.63774,1.92228) (0.63774,1.9225) (0.6394,1.9225) (0.6394,1.92039) (0.64079,1.92039) (0.64079,1.91622) (0.6413,1.91622) (0.6413,1.91029) (0.64143,1.91029) (0.64143,1.90815) (0.64183,1.90815) (0.64183,1.90844) (0.64206,1.90844) (0.64206,1.90816) (0.64276,1.90816) (0.64276,1.90787) (0.6438,1.90787) (0.6438,1.90853) (0.64385,1.90853) (0.64385,1.90863) (0.64492,1.90863) (0.64492,1.90652) (0.64508,1.90652) (0.64508,1.87979) (0.64676,1.87979) (0.64676,1.87988) (0.64706,1.87988) (0.64706,1.84602) (0.64707,1.84602) (0.64707,1.84572) (0.64716,1.84572) (0.64716,1.8451) (0.65045,1.8451) (0.65045,1.84526) (0.65194,1.84526) (0.65194,1.84541) (0.6521,1.84541) (0.6521,1.84543) (0.65227,1.84543) (0.65227,1.81096) (0.65327,1.81096) (0.65327,1.79838) (0.65434,1.79838) (0.65434,1.79892) (0.65443,1.79892) (0.65443,1.80032) (0.65535,1.80032) (0.65535,1.80045) (0.65641,1.80045) (0.65641,1.80064) (0.65649,1.80064) (0.65649,1.79994) (0.65775,1.79994) (0.65775,1.80007) (0.65838,1.80007) (0.65838,1.79992) (0.6589,1.79992) (0.6589,1.77319) (0.65909,1.77319) (0.65909,1.77345) (0.65922,1.77345) (0.65922,1.77228) (0.65954,1.77228) (0.65954,1.77111) (0.66039,1.77111) (0.66039,1.77078) (0.66067,1.77078) (0.66067,1.77099) (0.66145,1.77099) (0.66145,1.77102) (0.6628,1.77102) (0.6628,1.75803) (0.66313,1.75803) (0.66313,1.75584) (0.66319,1.75584) (0.66319,1.75569) (0.66319,1.75569) (0.66319,1.75578) (0.66389,1.75578) (0.66389,1.7558) (0.66416,1.7558) (0.66416,1.75535) (0.66447,1.75535) (0.66447,1.66929) (0.66532,1.66929) (0.66532,1.6693) (0.66556,1.6693) (0.66556,1.66769) (0.66618,1.66769) (0.66618,1.66639) (0.6666,1.66639) (0.6666,1.66641) (0.66825,1.66641) (0.66825,1.66556) (0.6691,1.66556) (0.6691,1.66618) (0.6698,1.66618) (0.6698,1.66753) (0.67098,1.66753) (0.67098,1.66592) (0.67105,1.66592) (0.67105,1.66457) (0.67308,1.66457) (0.67308,1.6646) (0.6734,1.6646) (0.6734,1.63396) (0.67343,1.63396) (0.67343,1.63358) (0.6755,1.63358) (0.6755,1.62972) (0.6762,1.62972) (0.6762,1.46509) (0.67692,1.46509) (0.67692,1.46354) (0.67695,1.46354) (0.67695,1.46352) (0.67725,1.46352) (0.67725,1.46342) (0.67779,1.46342) (0.67779,1.46017) (0.67874,1.46017) (0.67874,1.46073) (0.67921,1.46073) (0.67921,1.46068) (0.67994,1.46068) (0.67994,1.46307) (0.68002,1.46307) (0.68002,1.46292) (0.68018,1.46292) (0.68018,1.45912) (0.68338,1.45912) (0.68338,1.45511) (0.68366,1.45511) (0.68366,1.45566) (0.68437,1.45566) (0.68437,1.44973) (0.68456,1.44973) (0.68456,1.44945) (0.68537,1.44945) (0.68537,1.44929) (0.68586,1.44929) (0.68586,1.44864) (0.68594,1.44864) (0.68594,1.34681) (0.68736,1.34681) (0.68736,1.34674) (0.68777,1.34674) (0.68777,1.35063) (0.68911,1.35063) (0.68911,1.34701) (0.69004,1.34701) (0.69004,1.34819) (0.69199,1.34819) (0.69199,1.34801) (0.69273,1.34801) (0.69273,1.34774) (0.69288,1.34774) (0.69288,1.34842) (0.69416,1.34842) (0.69416,1.34817) (0.6942,1.34817) (0.6942,1.34787) (0.69539,1.34787) (0.69539,1.33338) (0.69843,1.33338) (0.69843,1.33307) (0.69857,1.33307) (0.69857,1.33131) (0.69976,1.33131) (0.69976,1.33071) (0.70098,1.33071) (0.70098,1.33005) (0.70336,1.33005) (0.70336,1.32633) (0.70368,1.32633) (0.70368,1.32687) (0.7046,1.32687) (0.7046,1.3155) (0.70521,1.3155) (0.70521,1.3194) (0.7081,1.3194) (0.7081,1.31907) (0.7084,1.31907) (0.7084,1.31872) (0.70862,1.31872) (0.70862,1.31691) (0.70951,1.31691) (0.70951,1.3166) (0.71016,1.3166) (0.71016,1.31639) (0.71058,1.31639) (0.71058,1.31521) (0.7127,1.31521) (0.7127,1.31075) (0.71308,1.31075) (0.71308,1.3109) (0.71405,1.3109) (0.71405,1.30707) (0.71487,1.30707) (0.71487,1.3024) (0.71534,1.3024) (0.71534,1.30205) (0.7184,1.30205) (0.7184,1.30195) (0.71849,1.30195) (0.71849,1.30118) (0.71948,1.30118) (0.71948,1.29417) (0.7198,1.29417) (0.7198,1.29395) (0.72158,1.29395) (0.72158,1.29157) (0.72512,1.29157) (0.72512,1.29135) (0.72544,1.29135) (0.72544,1.29099) (0.72573,1.29099) (0.72573,1.29028) (0.72692,1.29028) (0.72692,1.2929) (0.72782,1.2929) (0.72782,1.29396) (0.72793,1.29396) (0.72793,1.29794) (0.72837,1.29794) (0.72837,1.29758) (0.72962,1.29758) (0.72962,1.29721) (0.72969,1.29721) (0.72969,1.29323) (0.73001,1.29323) (0.73001,1.29443) (0.73104,1.29443) (0.73104,1.29564) (0.73234,1.29564) (0.73234,1.29967) (0.73263,1.29967) (0.73263,1.28703) (0.73294,1.28703) (0.73294,1.2667) (0.7334,1.2667) (0.7334,1.26531) (0.73444,1.26531) (0.73444,1.26441) (0.73698,1.26441) (0.73698,1.26416) (0.73717,1.26416) (0.73717,1.26314) (0.73912,1.26314) (0.73912,1.26368) (0.73934,1.26368) (0.73934,1.26357) (0.74065,1.26357) (0.74065,1.25321) (0.74144,1.25321) (0.74144,1.23615) (0.74199,1.23615) (0.74199,1.2357) (0.74244,1.2357) (0.74244,1.23687) (0.74289,1.23687) (0.74289,1.23591) (0.74335,1.23591) (0.74335,1.23495) (0.74405,1.23495) (0.74405,1.23529) (0.74411,1.23529) (0.74411,1.23501) (0.74624,1.23501) (0.74624,1.23608) (0.7466,1.23608) (0.7466,1.23564) (0.7479,1.23564) (0.7479,1.23376) (0.752,1.23376) (0.752,1.22938) (0.75268,1.22938) (0.75268,1.2291) (0.75344,1.2291) (0.75344,1.22881) (0.75398,1.22881) (0.75398,1.23126) (0.75537,1.23126) (0.75537,1.23113) (0.75607,1.23113) (0.75607,1.23017) (0.75682,1.23017) (0.75682,1.23887) (0.75764,1.23887) (0.75764,1.23818) (0.75767,1.23818) (0.75767,1.23874) (0.7581,1.23874) (0.7581,1.23664) (0.7584,1.23664) (0.7584,1.23619) (0.76032,1.23619) (0.76032,1.23744) (0.76065,1.23744) (0.76065,1.24159) (0.76088,1.24159) (0.76088,1.24001) (0.76146,1.24001) (0.76146,1.24901) (0.76199,1.24901) (0.76199,1.24819) (0.7629,1.24819) (0.7629,1.24806) (0.76325,1.24806) (0.76325,1.23375) (0.76527,1.23375) (0.76527,1.23274) (0.76564,1.23274) (0.76564,1.23171) (0.76686,1.23171) (0.76686,1.23299) (0.76746,1.23299) (0.76746,1.23333) (0.76935,1.23333) (0.76935,1.23447) (0.77579,1.23447) (0.77579,1.23713) (0.77581,1.23713) (0.77581,1.23728) (0.77825,1.23728) (0.77825,1.2368) (0.77843,1.2368) (0.77843,1.23581) (0.77917,1.23581) (0.77917,1.23832) (0.7804,1.23832) (0.7804,1.24389) (0.78095,1.24389) (0.78095,1.24679) (0.78175,1.24679) (0.78175,1.24577) (0.78314,1.24577) (0.78314,1.29812) (0.78501,1.29812) (0.78501,1.2965) (0.78524,1.2965) (0.78524,1.2962) (0.78619,1.2962) (0.78619,1.29525) (0.78647,1.29525) (0.78647,1.29323) (0.78741,1.29323) (0.78741,1.2947) (0.78929,1.2947) (0.78929,1.29559) (0.79002,1.29559) (0.79002,1.29547) (0.7903,1.29547) (0.7903,1.29451) (0.79173,1.29451) (0.79173,1.29284) (0.79409,1.29284) (0.79409,1.29191) (0.79486,1.29191) (0.79486,1.3025) (0.79649,1.3025) (0.79649,1.30097) (0.79761,1.30097) (0.79761,1.30004) (0.79835,1.30004) (0.79835,1.29935) (0.79856,1.29935) (0.79856,1.30088) (0.80021,1.30088) (0.80021,1.30236) (0.80324,1.30236) (0.80324,1.30532) (0.80591,1.30532) (0.80591,1.30326) (0.80671,1.30326) (0.80671,1.29677) (0.80733,1.29677) (0.80733,1.29287) (0.80815,1.29287) (0.80815,1.29184) (0.81098,1.29184) (0.81098,1.29859) (0.81126,1.29859) (0.81126,1.29811) (0.81133,1.29811) (0.81133,1.30163) (0.81285,1.30163) (0.81285,1.30324) (0.81452,1.30324) (0.81452,1.30107) (0.81618,1.30107) (0.81618,1.30008) (0.82101,1.30008) (0.82101,1.30547) (0.82241,1.30547) (0.82241,1.2932) (0.82369,1.2932) (0.82369,1.29367) (0.82661,1.29367) (0.82661,1.29257) (0.82675,1.29257) (0.82675,1.28917) (0.82858,1.28917) (0.82858,1.2902) (0.82959,1.2902) (0.82959,1.29122) (0.82985,1.29122) (0.82985,1.28473) (0.83042,1.28473) (0.83042,1.28702) (0.83051,1.28702) (0.83051,1.2875) (0.83165,1.2875) (0.83165,1.28464) (0.83201,1.28464) (0.83201,1.29705) (0.83249,1.29705) (0.83249,1.32141) (0.83266,1.32141) (0.83266,1.31411) (0.83461,1.31411) (0.83461,1.31297) (0.83615,1.31297) (0.83615,1.33515) (0.84024,1.33515) (0.84024,1.33634) (0.84437,1.33634) (0.84437,1.34039) (0.84451,1.34039) (0.84451,1.34695) (0.84651,1.34695) (0.84651,1.34816) (0.84775,1.34816) (0.84775,1.34489) (0.85395,1.34489) (0.85395,1.3593) (0.85396,1.3593) (0.85396,1.36357) (0.85924,1.36357) (0.85924,1.35664) (0.86146,1.35664) (0.86146,1.35793) (0.86155,1.35793) (0.86155,1.36231) (0.86193,1.36231) (0.86193,1.35608) (0.86251,1.35608) (0.86251,1.35668) (0.86716,1.35668) (0.86716,1.3663) (0.87087,1.3663) (0.87087,1.37086) (0.87462,1.37086) (0.87462,1.46071) (0.87686,1.46071) (0.87686,1.46319) (0.87864,1.46319) (0.87864,1.47458) (0.87942,1.47458) (0.87942,1.44521) (0.88052,1.44521) (0.88052,1.45132) (0.88572,1.45132) (0.88572,1.46458) (0.8921,1.46458) (0.8921,1.47088) (0.89532,1.47088) (0.89532,1.54921) (0.9003,1.54921) (0.9003,1.55255) (0.91111,1.55255) (0.91111,1.55972) (0.91188,1.55972) (0.91188,1.56696) (0.91331,1.56696) (0.91331,1.65745) (0.92376,1.65745) (0.92376,1.66573) (0.92491,1.66573) (0.92491,1.6837) (0.93117,1.6837) (0.93117,1.73304) (0.93521,1.73304) (0.93521,1.78565) (0.95191,1.78565) (0.95191,1.90731) (0.96293,1.90731) (0.96293,1.97235) (0.97359,1.97235) (0.97359,2.12457) (1.0,2.12457)};
\addplot[blue, dashed, line width=0.9pt, domain=0:1, samples=2] {1.4729174};
\addplot[red, line width=0.9pt, domain=0:1, samples=2] {1};
\addplot[only marks, mark=*, mark size=1.6pt, black, forget plot] coordinates {(0.75398,1.22881)};
\end{groupplot}
\end{tikzpicture}
\caption{
Long-term mean cost $LTMC^\gamma$ of the MTTF-threshold policies, relative to the cost $v^\ast_{\mathrm{state}}$ of the optimal state-based policy (solid red line), as a function of the threshold $\gamma$, for two configurations and cost pairs; the dot marks the best threshold, and the dashed blue line the cost of the best $r$-out-of-$n$ policy. The curves are piecewise constant, with 54 and 731 breakpoints, and they are not unimodal, i.e., they increase and then decrease again as $\gamma$ varies. In (a), the best MTTF-threshold policy is within $0.24\%$ of the optimal state-based policy, while the best $r$-out-of-$n$ policy is $6.2\%$ above it; (b) is one of the cases with the largest gaps, where the best MTTF-threshold policy is $23\%$ and the best $r$-out-of-$n$ policy $47\%$ above the optimum. In (a), corrective maintenance ($\gamma \to 0$) costs $20$ times the optimum, and in (b) $5.4$ times.
}
\label{fig:curves}
\end{figure}

\subsection{Results: comparison with the optimal policies}\label{subsec:results optimal}
Table~\ref{tab:gaps} and Figure~\ref{fig:gaps} compare the best MTTF-threshold policy with the optimal state-based policy, through the relative excess cost $v^\ast_\gamma / v^\ast_{\mathrm{state}} - 1$, over the $26{,}505$ combinations with $n \leq 27$.

\paragraph{The best MTTF-threshold policy is near-optimal}
It is optimal, up to a relative error of $10^{-9}$, in $17\%$ of the combinations, and its excess cost has a median of $0.36\%$, a 90th percentile of $3.4\%$ and a 99th percentile of $13\%$.
By Section~\ref{sec:dominance}, this also bounds what the signature-threshold policies can achieve: in half of the combinations no weight vector can improve on the best MTTF threshold by more than $0.36\%$.

The optimal state-based policy genuinely uses the identity of the failed components: in $95\%$ of the combinations there is at least one number of failed components at which it repairs at some, but not all, of the working states, and the median number of such levels is 6.

\paragraph{When does the MTTF threshold fall short?}
The gap of the best MTTF-threshold policy decreases with the pairwise beta factor, from a median of $1.4\%$ for [Exp] to $0.04\%$ in the stressed scenario: when failures come in large batches, the next failure epoch is likely to bring the system down whatever its state, so the information on the state, and its summary by the MTTF, matter less. Its median stays below $0.5\%$ for every cost ratio, while its tail shrinks as $\rho$ grows, with a 90th percentile from $5.9\%$ at $\rho = 0.1$ to $1.1\%$ at $\rho = 10$ (Figure~\ref{fig:gaps}(a)); it is also largest for small positive setup costs, with a 90th percentile of $5.5\%$ at $\sigma = 0.05$ and $1.4\%$ at $\sigma = 0.5$. It tends to grow with the size of the system, from zero for $n = 5$ to a median of $1.7\%$ and a 90th percentile of $8.9\%$ for $n = 27$ (Figure~\ref{fig:gaps}(b)), with the notable exception of the grid network 20 ($n = 24$), where the median gap is $0.03\%$.
The largest gaps, of about $30\%$ to $35\%$, occur for $\rho = 0.1$, $\sigma = 0.05$, nearly independent failures ([Exp], or $\beta = 1\%$) and the largest systems; for instance, for system 17 under [Exp] the best MTTF-threshold policy, with $\gamma = 0.78$, costs $35\%$ more than the optimal state-based policy, while the best $r$-out-of-$n$ policy, with $r = 4$, costs at least $80\%$ more. With cheap components and an expensive system failure, the optimal policy can afford to repair at states that are close to failure in a way that a single number does not capture, such as states where one further failure brings the system down; by Section~\ref{sec:dominance}, these are the cases where weighting the per-state signature differently could pay off.

\begin{table}[!t]

\centering
\caption{
Relative excess cost (\%) of the best MTTF-threshold policy over the optimal state-based policy, $v^\ast_\gamma / v^\ast_{\mathrm{state}} - 1$: fraction of combinations where the excess is at most $10^{-9}$ (``opt.''), median, 90th percentile and maximum, over the 19 systems with $n \leq 27$, the models with the given pairwise beta factor $\beta$ (only [Exp] for $\beta = 0$) and the 45 cost pairs.
}
\label{tab:gaps}
\small
\setlength{\tabcolsep}{2.6pt}
\begin{tabular}{@{}lrrrr@{}}
\toprule
$\beta$ & opt. & med. & p90 & max \\
\midrule
$0$ ([Exp]) & $8.7$ & $1.37$ & $7.0$ & $34.8$ \\
$1\%$ & $12.8$ & $1.07$ & $6.4$ & $33.5$ \\
$5\%$ & $13.7$ & $0.67$ & $4.3$ & $29.0$ \\
$10\%$ & $15.9$ & $0.43$ & $2.7$ & $24.4$ \\
$20\%$ & $18.1$ & $0.19$ & $1.5$ & $17.5$ \\
$50\%$ & $24.6$ & $0.04$ & $0.6$ & $6.8$ \\
\midrule
All & $16.7$ & $0.36$ & $3.4$ & $34.8$ \\
\bottomrule
\end{tabular}
\end{table}

\begin{figure}[!t]
\centering
\begin{tikzpicture}
\begin{groupplot}[
  group style={group size=1 by 2, vertical sep=1.5cm},
  width=\linewidth, height=0.5\linewidth,
  ymin=0, ylabel={Excess cost (\%)},
  grid=both, grid style={line width=0.3pt, draw=gray!25},
  tick label style={font=\footnotesize}, label style={font=\footnotesize},
  title style={font=\footnotesize},
  legend style={font=\scriptsize, draw=gray!50, at={(0.98,0.98)}, anchor=north east, legend columns=1},
]
\nextgroupplot[title={(a) By cost ratio $\rho$}, xmode=log, xlabel={$\rho = n\,\ccomp/\csys$},
  xtick={0.1,0.2,0.5,1,2,4,10}, xticklabels={$0.1$,$0.2$,$0.5$,$1$,$2$,$4$,$10$}, ymax=10]
\addplot[black, line width=1pt, mark=*, mark size=1.2pt] coordinates
  {(0.1,0.183) (0.2,0.231) (0.5,0.286) (0.6667,0.359) (1,0.392) (1.3333,0.457) (2,0.500) (4,0.501) (10,0.275)};
\addlegendentry{MTTF, median}
\addplot[black, line width=1pt, dashed, mark=o, mark size=1.4pt] coordinates
  {(0.1,5.940) (0.2,5.374) (0.5,4.446) (0.6667,4.091) (1,3.517) (1.3333,3.125) (2,2.734) (4,1.851) (10,1.048)};
\addlegendentry{MTTF, p90}
\nextgroupplot[title={(b) By number of components $n$}, xlabel={$n$},
  xtick={5,10,15,20,25}, xmin=4, xmax=28, ymax=10]
\addplot[black, line width=1pt, mark=*, mark size=1.2pt] coordinates
  {(5,0.000) (8,0.000) (9,0.059) (12,0.469) (13,0.395) (14,0.474) (21,1.139) (22,1.486) (23,1.575) (24,0.030) (26,0.563) (27,1.661)};
\addplot[black, line width=1pt, dashed, mark=o, mark size=1.4pt] coordinates
  {(5,0.000) (8,0.530) (9,0.518) (12,2.146) (13,2.426) (14,1.568) (21,6.857) (22,7.995) (23,8.158) (24,0.946) (26,2.783) (27,8.906)};
\end{groupplot}
\end{tikzpicture}
\caption{
Relative excess cost of the best MTTF-threshold policy over the optimal state-based policy: median (solid) and 90th percentile (dashed) over the 19 systems with $n \leq 27$, all models and the cost pairs, (a) for each cost ratio $\rho$, and (b) for each system size $n$ (several systems share the sizes $8$, $12$, $13$, $14$ and $21$).
}
\label{fig:gaps}
\end{figure}

\subsection{Computational effort of the policies}\label{subsec:runtimes}
Once the classes are stored, the whole curve $\gamma \mapsto (T^\gamma, N^\gamma, P^\gamma)$ of a failure model takes at most $2.8$ seconds (Table~\ref{tab:effort}), and the best threshold for a new cost pair is then found by evaluating~\eqref{eq:LTMC gamma} at every breakpoint, without recomputing anything else. The best $r$-out-of-$n$ policy takes a fraction of a second.
The optimal state-based policy, instead, needs a pass over the $2^n$ states for each iteration of Dinkelbach's method and for each cost pair. Dinkelbach's method converged in all $26{,}505$ runs, with $2.4$ passes on average and never more than five. One pass takes $0.8$ seconds for $n = 21$, $3.0$ seconds for $n = 23$ and $52$ seconds for $n = 27$, so the optimal state-based policy for one cost pair takes a median of $197$ seconds for system 17, and about $2.5$ hours for the 45 cost pairs of one configuration, with a peak memory of $9.3$~GB; extrapolating the $n^2 2^n$ growth, it would take about a day per configuration and about 75~GB of memory for $n = 30$, which is why we did not compute it for the three largest systems.
This is the practical case for the MTTF-threshold policies: they come close to the optimal state-based policy at a small fraction of its cost, their cost curve serves every cost pair at once, and they are specified by a single number, the threshold $\gamma$, applied to an index, the MTTF, that is meaningful to practitioners.

\section{Conclusions and discussion}\label{sec:conclusions}
We presented an algorithm that computes the structural signature and the MTTF of a monotone binary system at each of its $2^n$ states, with the same $2^n$ evaluations of the structure function needed for the signature of the complete system alone and a polynomial overhead in operations and memory, for exchangeable failure times whose survivor count is a continuous-time Markov chain, a setting that admits simultaneous failures. Under the L\'evy-frailty Marshall-Olkin distribution, the monotonicity of the MTTF yields the long-run cost of every MTTF-threshold policy in a single pass over classes of states, and the per-state signature is a strictly finer decision criterion than the MTTF.

On 22 benchmark networks with up to 30 components, the per-state computation took, for $n = 30$, between $2.0$ and $4.4$ times as long as the enumeration for the signature of the complete system, and the best MTTF-threshold policy, tuned over all breakpoints of its cost curve, which is not unimodal, i.e., can increase and then decrease again as the threshold varies, was within a median of $0.36\%$ of the optimal state-based policy. This supports a simple rule for practitioners: monitor which components have failed, and repair when the MTTF of the current state falls below a fraction $\gamma$ of that of a new system, with $\gamma$ tuned exactly by our method. The rule is transparent and computed in seconds, whereas the optimal state-based policy costs $\mathcal O(n^2 2^n)$ operations per iteration and per cost setting.

Our study has limitations that suggest further work.
The algorithm enumerates the $2^n$ states, which limits it to a few dozen components; combining Lemma~\ref{lemma:Wk} with binary decision diagrams~\cite{kuo2007efficient} may extend it to larger systems.
The single-pass evaluation of the threshold policies relies on the monotonicity of the MTTF, which can fail for other models satisfying our assumptions, such as sequential order statistics in which the failure rate of the working components decreases as other components fail.
Non-identical components, repair delays as in~\cite{lagos2026delayed}, imperfect repairs and other families of systems are natural extensions.
Finally, the gap between the best MTTF threshold and the optimal state-based policy, largest for nearly independent failures, cheap components and large systems, bounds what weighting the per-state signature differently can achieve; finding weights that close it, and comparing states with different numbers of failed components, remain open.

\appendices
\section{Proofs}\label{app:proofs}

\begin{proof}[Proof of Lemma~\ref{lemma:Wk}]
Consider a state $\mathbf x$ in $\{0,1\}^n$ and fix $k=1,\ldots, |\mathbf x|$. Consider now a sub-state $\mathbf y$ of $\mathbf x$ with $k$ less working components, i.e., $\mathbf y$ satisfies $\mathbf y \leq \mathbf x$ and $|\mathbf y|=|\mathbf x|-k$.
Since $\mathbf y$ was obtained shutting down $k$ working components of $\mathbf x$, by switching back on $k-1$ of them, we obtain a state in $\mathrm{Sons}(\mathbf x)$.
There are exactly $k$ different ways of  doing the latter, one per component that we leave shut-off, so there are exactly $k$ different states in $\mathrm{Sons}(\mathbf x)$, say $\mathbf x^1$, \ldots, $\mathbf x^k$, for which after shutting down $k-1$ of their working components we obtain $\mathbf y$.
It follows that if $\Phi(\mathbf y)=1$ then it will be counted once in the counting of $W^{\mathbf x}_k$, and also once in the counting of all $W^{\mathbf x^1}_{k-1}$, \ldots, $W^{\mathbf x^k}_{k-1}$.
This proves the result.
\end{proof}

\begin{proof}[Proof of Theorem~\ref{theo}]
1) In Lemma~\ref{lemma:Wk}, $\Phi$ enters only through $W^{\mathbf x}_0 = \Phi(\mathbf x)$, once per state.

2) By~\eqref{def:bar S} and~\eqref{eq:E Samaniego}, it suffices to compute $W_k = \sum_{\mathbf x : |\mathbf x| = n-k} \Phi(\mathbf x)$ for $k = 0, \ldots, n$, which takes $\sum_{k=0}^n \binom n k = 2^n$ additions of bits, and then~\eqref{eq:E Samaniego} takes $\mathcal O(n)$ operations. Processing one level at a time requires at most $\binom n {\lfloor n/2 \rfloor}$ bits, and $\binom n {\lfloor n/2 \rfloor} \sim 2^n / \sqrt{\pi n/2}$ as $n \to \infty$.

3) A state with $k$ failures has $n-k$ sons, each with a counting vector of at most $n-k+1$ entries, so~\eqref{eq:Wk} takes $\mathcal O((n-k)^2)$ integer additions and memory accesses and $n-k$ divisions, and~\eqref{eq:E Samaniego sub} takes $\mathcal O(n-k)$ operations. Since $\sum_{k=0}^n \binom n k (n-k) = n 2^{n-1}$ and $\sum_{k=0}^n \binom n k (n-k)^2 = n(n+1) 2^{n-2}$, summing over the $\binom n k$ states of each level gives the operation counts. Algorithm~\ref{alg} stores one MTTF per state and the counting vectors of two consecutive levels, i.e., at most $2(n+1)\binom n {\lfloor n/2 \rfloor} = \mathcal O(\sqrt n \, 2^n)$ integers.
\end{proof}

\begin{proof}[Proof of Proposition~\ref{prop:all thresholds}]
If $\mathbf x \in \mathcal C_\gamma$ and $\mathbf y \geq \mathbf x$, then $\Phi(\mathbf y) = 1$ and $m(\mathbf y) \geq m(\mathbf x) \geq \gamma \, m(\mathbf 1)$, so $\mathbf y \in \mathcal C_\gamma$. Since components are not repaired within a cycle, every state visited before a state $\mathbf x \in \mathcal C_\gamma$ is larger than $\mathbf x$, and hence also in $\mathcal C_\gamma$. Therefore, up to the time it reaches $\mathbf x$, the system under $\pi^\gamma$ evolves as the system without repairs, and the expected time it spends in $\mathbf x$ during a cycle is $\tau_{|\mathbf x|}$, since the arrival at $\mathbf x$ is the same event in both systems and the holding time in $\mathbf x$ has mean $1/\nu_{|\mathbf x|}$. The cycle ends at the first exit from $\mathcal C_\gamma$, so $T^\gamma$ is the total expected time spent in $\mathcal C_\gamma$. The number of failed components at the repair equals the number of component failures during the cycle, all of which occur at jumps out of states of $\mathcal C_\gamma$; the expected number of failures from $\mathbf x$ is the expected time spent in $\mathbf x$ multiplied by the rate $\varphi_{|\mathbf x|}$, which gives $N^\gamma$. Similarly, the repair is corrective iff the cycle ends with a jump from a state of $\mathcal C_\gamma$ to a failed state; from $\mathbf x$, a jump to one of the $\binom{|\mathbf x|}{|\mathbf x| - w'}$ sub-states with $w'$ working components occurs at rate $\nu_{|\mathbf x|} P_{|\mathbf x|, w'}$, and the target is chosen uniformly among them, of which a fraction $1 - W^{\mathbf x}_{|\mathbf x| - w'} / \binom{|\mathbf x|}{|\mathbf x| - w'}$ are failed, which gives $P^\gamma$. Finally, every cycle ends with exactly one repair, at the state $\mathbf y$ where the cycle ends, and its cost~\eqref{eq:cost setup} is $\repcost(\mathbf y) = \cset + (n - |\mathbf y|) \ccomp + \mathbf 1_{\{\Phi(\mathbf y) = 0\}} \csys$, where $n - |\mathbf y|$ is the number of component failures during the cycle and $\mathbf 1_{\{\Phi(\mathbf y) = 0\}}$ indicates that the repair is corrective; taking expectations gives $c^{\pi^\gamma}_{\mathrm{rep}}(\mathbf 1)$.
\end{proof}

\begin{proof}[Proof of Lemma~\ref{lemma:monotone}]
Write $A$ and $A' \subseteq A$ for the sets of components working in $\mathbf x$ and $\mathbf x'$, respectively. By the memoryless property of the L\'evy-frailty Marshall-Olkin distribution~\cite{mai2009levy}, the residual lifetimes $(R_i)_{i \in A}$ of the components working at the moment the system arrives at $\mathbf x$ follow a L\'evy-frailty Marshall-Olkin distribution of dimension $|A|$ with the same Laplace exponent $\Psi$, and the sub-vector $(R_i)_{i \in A'}$ follows the one of dimension $|A'|$, which is the law of the residual lifetimes when the system arrives at $\mathbf x'$. Using the same vector $(R_i)_{i \in A}$ for both, the state at time $t$ started from $\mathbf x'$ is $\mathbf y'_t := (\mathbf 1_{\{i \in A', \, R_i > t\}})_i$, which is componentwise smaller than the state $\mathbf y_t := (\mathbf 1_{\{i \in A, \, R_i > t\}})_i$ started from $\mathbf x$. By monotonicity of $\Phi$, the system fails no later from $\mathbf x'$ than from $\mathbf x$ on every sample path, and taking expectations gives $m(\mathbf x') \leq m(\mathbf x)$.
\end{proof}

\begin{proof}[Proof of Theorem~\ref{theo:dominance}]
By~\eqref{eq:m as weighted sum}, $m(\mathbf x) - m(\mathbf y) = \sum_{k=0}^{w-1} ( \bar S^{\mathbf x}_k - \bar S^{\mathbf y}_k ) \, \delta_k^{w}$, a sum of nonnegative terms, which is positive if some term has both factors positive.
\end{proof}

\section*{Acknowledgment}

The authors acknowledge the computational resources provided by the High-Performance Computing infrastructure of the Faculty of Engineering and Sciences at Universidad Adolfo Ib\'a\~nez, which were instrumental in generating the core results presented in this work.
The authors used the AI system Claude (Anthropic) in the preparation of this article, as follows: to draft parts of the text, mainly in Sections~\ref{sec:related}, \ref{sec:policies} and~\ref{sec:experiments}, which the authors then revised; to produce the TikZ code of Figures~\ref{fig:simultaneity one}--\ref{fig:gaps}; and to write the code of the computational experiments of Section~\ref{sec:experiments}, which the authors checked against independent implementations. The authors reviewed all AI-generated content and take full responsibility for the article.

\ifCLASSOPTIONcaptionsoff
  \newpage
\fi

\bibliographystyle{IEEEtran}
\bibliography{IEEEabrv,bibliography}

\end{document}